\def\PrintMode{0}
\newtheorem{theorem}{Theorem}[section]
\newtheorem{proposition}[theorem]{Proposition}
\newtheorem{corollary}[theorem]{Corollary}
\newtheorem{open}[theorem]{Open Problem}
\newtheorem{conjecture}[theorem]{Conjecture}
\theoremstyle{definition}
\newtheorem{definition}[theorem]{Definition}
\newtheorem{remark}[theorem]{Remark}
\newtheorem*{remark*}{Remark}
\renewcommand*\backref[1]{\ifx#1\relax \else (cit.~on p.~#1) \fi} 
\def\moverlay{\mathpalette\mov@rlay}
\def\mov@rlay#1#2{\leavevmode\vtop{%
		\baselineskip\z@skip \lineskiplimit-\maxdimen
		\ialign{\hfil$\m@th#1##$\hfil\cr#2\crcr}}}
\newcommand{\charfusion}[3][\mathord]{
	#1{\ifx#1\mathop\vphantom{#2}\fi
		\mathpalette\mov@rlay{#2\cr#3}
	}
	\ifx#1\mathop\expandafter\displaylimits\fi}
\renewcommand{\poly}{\mathrm{poly}}
\newcommand{\one}{\mathbf 1}
\newlang{\MCSP}{MCSP}
\newlang{\MFSP}{MFSP}
\newlang{\MKtP}{MKtP}
\newlang{\MKTP}{MKTP}
\newlang{\itrMCSP}{itrMCSP}
\newlang{\itrMKTP}{itrMKTP}
\newlang{\itrMINKT}{itrMINKT}
\newlang{\MINKT}{MINKT}
\newlang{\MINK}{MINK}
\newlang{\MINcKT}{MINcKT}
\newlang{\CMD}{CMD}
\newlang{\DCMD}{DCMD}
\newlang{\CGL}{CGL}
\newlang{\PARITY}{PARITY}
\newlang{\Empty}{\textsc{Empty}}
\newlang{\Avoid}{\textsc{Avoid}}
\newlang{\Sparsification}{\textsc{Sparsification}}
\newlang{\HamEst}{\mathsf{HammingEst}}
\newlang{\HamHit}{\mathsf{HammingHit}}
\newlang{\CktEval}{\textsc{Circuit-Eval}}
\newlang{\Hard}{\textsc{Hard}}
\newlang{\cHard}{\textsc{cHard}}
\newlang{\CAPP}{CAPP}
\newlang{\GapUNSAT}{GapUNSAT}
\newlang{\OV}{OV}
\renewlang{\PCP}{PCP}
\newlang{\PCPP}{PCPP}
\newclass{\Avg}{Avg}
\newclass{\ZPEXP}{ZPEXP}
\newclass{\DLOGTIME}{DLOGTIME}
\newclass{\ALOGTIME}{ALOGTIME}
\newclass{\ATIME}{ATIME}
\newclass{\SZKA}{SZKA}
\newclass{\Laconic}{Laconic\text{-}}
\newclass{\APEPP}{APEPP}
\newclass{\SAPEPP}{SAPEPP}
\newclass{\TFSigma}{TF\Sigma}
\newclass{\NTIMEGUESS}{NTIMEGUESS}
\newlang{\Formula}{Formula}
\newlang{\THR}{THR}
\newlang{\EMAJ}{EMAJ}
\newlang{\MAJ}{MAJ}
\newlang{\SYM}{SYM}
\newlang{\DOR}{DOR}
\newlang{\ETHR}{ETHR}
\newlang{\Midbit}{Midbit}
\newlang{\LCS}{LCS}
\newlang{\TAUT}{TAUT}
\newlang{\Poly}{\text{-}Poly}
\newcommand{\supp}{\operatorname{supp}}
\newcommand{\F}{\mathbb{F}}
\renewcommand{\R}{\mathbb{R}}
\DeclareMathOperator*{\Span}{\mathrm{Span}}
\renewcommand{\epsilon}{\varepsilon}
\newcommand{\eps}{\epsilon}
\definecolor{color1}{RGB}{46,134,193}
\definecolor{color7}{RGB}{128,0,128}
\definecolor{color3}{RGB}{255,128,0}
\definecolor{color5}{RGB}{128,128,128}
\newif\ifmynotes
\title{Combinatorial Bounds for List Recovery via Discrete Brascamp--Lieb Inequalities}
\date{}
\author{Joshua Brakensiek\thanks{University of California, Berkeley. \href{mailto:josh.brakensiek@berkeley.edu}{\texttt{josh.brakensiek@berkeley.edu}}} 
\and
Yeyuan Chen\thanks{Department of EECS, University of Michigan, Ann Arbor. \href{mailto:yeyuanch@umich.edu}{\texttt{yeyuanch@umich.edu}}}  
\and
Manik Dhar\thanks{Department of Mathematics, Massachusetts Institute of Technology. \href{mailto:dmanik@mit.edu}{\texttt{dmanik@mit.edu}}} 
\and
Zihan Zhang\thanks{Department of Computer Science and Engineering, The Ohio State University. \href{mailto:zhang.13691@buckeyemail.osu.edu}{\texttt{zhang.13691@osu.edu}}}  
}
\begin{document}
\maketitle

\begin{abstract}
In coding theory, the problem of list recovery asks one to find all codewords $c$ of a given code $C$ which such that at least $1-\rho$ fraction of the symbols of $c$ lie in some predetermined set of $\ell$ symbols for each coordinate of the code. A key question is bounding the maximum possible list size $L$ of such codewords for the given code $C$.

In this paper, we give novel combinatorial bounds on the list recoverability of various families of linear and folded linear codes, including random linear codes, random Reed--Solomon codes, explicit folded Reed--Solomon codes, and explicit univariate multiplicity codes. Our main result is that in all of these settings, we show that for code of rate $R$, when $\rho = 1 - R - \epsilon$ approaches capacity, the list size $L$ is at most $(\ell/(R+\epsilon))^{O(R/\epsilon)}$. These results also apply in the average-radius regime. Our result resolves a long-standing open question on whether $L$ can be bounded by a polynomial in $\ell$. In the zero-error regime, our bound on $L$ perfectly matches known lower bounds.

The primary technique is a novel application of a discrete entropic Brascamp--Lieb inequality to the problem of list recovery, allowing us to relate the local structure of each coordinate with the global structure of the recovered list. As a result of independent interest, we show that a recent result by Chen and Zhang (STOC 2025) on the list decodability of folded Reed--Solomon codes can be generalized into a novel Brascamp--Lieb type inequality.
\end{abstract}

\section{Introduction}

Error-correcting codes ensure reliable communication over noisy channels by embedding messages $m \in \Sigma^k$ into redundant codewords $c \in \Sigma^n$.
At a high level, a fundamental goal in coding theory is to understand the optimal trade-off between the rate of a code and its resilience to errors.
As a basic example, a code $\mathcal{C} \subseteq \Sigma^n$ with rate $R = k/n$ and minimum relative Hamming distance $\delta$ can be uniquely decoded from up to a $\delta/2$ fraction of adversarial symbol errors. The breakthrough idea of \emph{list decoding}, introduced by Elias~\cite{elias} in the 1950s, further extended this capability, allowing decoding beyond the unique decoding radius by outputting a small list of candidate messages. Over the past several decades, list decoding has become a core topic in modern coding theory, with far-reaching influence across theoretical computer science --- including, but not limited to, the theory of  pseudorandomness \cite{sudan1999pseudorandom}, hard-core predicates \cite{GL89}, complexity theory \cite{CPS99}, and probabilistically checkable proofs \cite{arora1997improved}.

In this work, we focus on the notion of \emph{list recovery}, a powerful generalization of list decoding, in which the decoder must output a short list of candidate codewords consistent with multiple possible (potentially corrupted) symbols at each coordinate. Formally, for a code $\mathcal{C}\subseteq \Sigma^n$, we say $\mathcal{C}$ is $(\rho, \ell, L)$ list recoverable if, for any product set $S = S_1 \times \cdots \times S_n \in \binom{\Sigma}{\leq \ell}^n$ with $|S_1|,\dots,|S_n|\leq \ell$, there are at most $L$ codewords $c \in \mathcal{C}$ such that
\[\mathrm{dist}(c,S_1\times\cdots\times S_n)\leq\rho n,\]
where $\mathrm{dist}(c,S)$ denotes the number of indices $i\in[n]$ such that $c_i\notin S_i$. In particular, list decoding is the special case of list recovery when $\ell=1.$

During the past several decades, determining the optimal list recoverability of fundamental code families, such as linear codes, Reed–Solomon (RS) codes, and their variants, has remained a central challenge in coding theory. This challenge is not only of intrinsic interest, but also of great significance due to its profound impact on broader areas such as pseudorandomness~\cite{trevisan1999construction,ta2007lossless,guruswami2009Unbalanced,ta2012better,kalev2022unbalanced} and algorithm design~\cite{indyk2010efficiently,ngo2011efficiently,larsen2019heavy,dw22}.

Similarly to the list decoding case, the Johnson-type bound~\cite{gs01} guarantees that any maximum distance separable codes are $\left(1 - \sqrt{\ell R}, \ell, \mathrm{poly}(n)\right)$ list recoverable. Moreover, the notion of list recovery capacity has been formally established in~\cite{gi01, guruthesis, resthesis}. Specifically, in the large alphabet regime, when $q \ge 2^{\Omega(\log \ell / \epsilon)}$, plain random codes of rate $R$ and block length $n$ are $\left(\rho, \ell, O(\ell / \epsilon)\right)$ list recoverable with high probability, satisfying $\rho \ge 1 - R - \epsilon$ (see~\cite[Theorem 2.4.12]{resthesis} for more details). However, inspired by the work of Chen and Zhang~\cite{cz24}, it was recently shown that the output list size of random linear codes~\cite{lms25} and arbitrary linear codes~\cite{ls25} is lower bounded by $\ell^{\Omega(R / \epsilon)}$, thereby establishing a separation between plain random codes and linear codes in the context of list recovery.

Over the past decades, a long line of influential work \cite{ZP81,guruthesis,Guru06,guruswami2013linear,rw18,kopparty2018improved,kopparty2023improved,lp20,guo2022improved,goldberg2022list,tamo24,cz24,lms25,ls25} has advanced our understanding of the list recoverability of random linear codes, randomly punctured and explicit Reed--Solomon (RS) codes, as well as their variants, achieving parameters beyond the Johnson radius. More fine-grained introductions of these developments are listed below. For a more comprehensive history of list recovery, we refer the reader to the recent survey by Resch and Venkitesh~\cite{resch2025List}.
\begin{center}
\begin{table}
\footnotesize
\begin{tabular}{|m{7cm}|m{3cm}|m{3cm}|m{2cm}|m{2cm}|}
\hline 
Codes & Alphabet size $|\Sigma|$ & List size $L$ & Explicit?    \\ \hline

Random Linear Codes \cite{ZP81,guruthesis}&  $\ell^{O(1/\eps)}$& $|\Sigma|^{O(\ell/\eps)}$ & No\\ \hline
Random Linear Codes \cite{rw18} & $\ell^{O(1/\eps)}$ & $|\Sigma|^{O(\log^2(\ell)/\eps)}$ & No   \\\hline
Random Linear Codes \cite{ls25} & $\ell^{O(1/\eps)}$ & $(\ell/\eps)^{O(\ell/\eps)}$ & No\\
\hline
Random RS Codes \cite{ls25} & $O_{\eps,\ell,L,R}(n)$ & $(\ell/\eps)^{O(\ell/\eps)}$ & No\\
\hline
Random Linear Codes (This Work) &$\ell^{(\ell/(R+\eps))^{O(R/\eps)}/\eps}$ & $\left(\frac{\ell}{R+\epsilon}\right)^{O(R/\eps)}$&No \\ 
 \hline
Random RS Codes (This Work) & $O_{\eps,\ell,L,R}(n)$ & $\left(\frac{\ell}{R+\epsilon}\right)^{O(R/\eps)}$&No \\ 
 \hline
 \hline 

Folded RS Codes \cite{Guru06} & $\left( \frac{n}{\eps^2} \right)^{O(\log(\ell)/\eps^2)}$& $\left( \frac{n}{\eps^2} \right)^{O(\log(\ell)/\eps^2)}$ & Yes    \\ \hline 
Univariate Multiplicity Codes \cite{kopparty2015list} & $\left( \frac{n}{\eps^2} \right)^{O(\log(\ell)/\eps^2)}$& $\left( \frac{n}{\eps^2} \right)^{O(\log(\ell)/\eps^2)}$ & Yes   \\ \hline 
Folded RS/Univariate Multiplicity Codes \cite{guruswami2013linear} & $\left( \frac{n \ell}{\eps^2} \right)^{O(\ell/\eps^2)}$ & $\left( \frac{n \ell}{\eps} \right)^{O(\ell/\eps)}$ & Yes   \\ \hline
Folded RS/Univariate Multiplicity Codes \cite{kopparty2018improved,kopparty2023improved, tamo24} & $\left( \frac{ n \ell}{\eps^2} \right)^{O(\ell/\eps^2)}$& $(\ell/\eps)^{O((1+\log{\ell})/\eps)}$ & Yes     \\ \hline 
Folded AG Codes \cite{guruswami2013list, guruswami2016explicit} & $\exp\left( \frac{ \ell \log(\ell/\eps) }{\eps^2} \right)$ & $2^{2^{2^{O_{\eps,\ell}(\log^*(n))}}}$ & Yes  \\ \hline
Codes Based on the AEL Framework \cite{st25}& $\exp(\exp(\ell/\eps\log{(\ell/\eps)}))$ & $ \exp(\exp(\ell/\eps\log{(\ell/\eps)}))$ &Yes \\\hline

Codes Based on the AEL Framework \cite{JS25}& $\exp((\ell/\eps)^{(\ell/\eps)^{(\ell/\eps)}})$ & $(\ell/\eps)^{O(\ell/\eps)}$ &Yes \\\hline

Folded RS/Univariate Multiplicity Codes (This Work)& $\left( \frac{ n \ell}{\eps^2} \right)^{O(\ell/\eps^2)}$ &$\left(\frac{\ell}{R+\epsilon}\right)^{O(R/\eps)}$&Yes\\
 \hline
 \hline
 Any Linear Code \cite{guru21,cz24,lms25,ls25} & Any & $\ell^{\Omega(R/\eps)}$ & Lower Bound\\
 \hline
\end{tabular}
\caption{Adapted from \cite{kopparty2023improved}. Constructions of $(\alpha, \ell, L)$ list recoverable codes of rate $R=R^* - \eps$, where
$R^* = 1 - \alpha$ is list-recovering capacity (when $|\Sigma| \geq (1+\ell)^{\Omega(1/\epsilon)}$). We assume that $R^* \in (0,1)$ is constant (independent of $n, \eps, \ell$).} \label{tab:listrecovery}
\end{table}
\end{center}
\paragraph{List Recovery for Random Structured Codes.} It has long been of significant interest to study the list recoverability of random linear codes. Initially, the argument of Zyablov and Pinsker \cite{ZP81} showed that random
linear codes of rate $R$ over alphabet $\Sigma$ of size $\ell^{O(1/\epsilon)}$ are $(1-R-\epsilon,\ell,|\Sigma|^{O(\ell/\epsilon)})$ list recoverable (see \cite{guruthesis} for more details). Subsequent advances reduced the output list size to $|\Sigma|^{O(\log^2 \ell / \epsilon)}$ in the work of Rudra and Wootters~\cite{rw18}, and later to $(\ell / \epsilon)^{O(\ell / \epsilon)}$ in the recent work of Li and Shagrithaya~\cite{ls25}, which represents the current state of the art. 

Owing to a long line of influential works~\cite{rudra2014every,ST23,brakensiek2023generic,guo2023randomly,alrabiah2024randomly,Alrabiah2025Random}, the optimal list decodability of random RS codes is now fully understood. However, comparatively little is known about the list recoverability of random RS codes. Lund and Potukuchi~\cite{lp20} were the first to show that random RS codes are list recoverable beyond the Johnson bound. In particular, they proved that that random RS codes over the alphabet $\F_q$ of rate $O\left(\frac{1}{\sqrt{\ell}\log q}\right)$ are $\left(1-\frac{1}{\sqrt{2}},\ell,O(\ell)\right)$ list recoverable. This was later improved by Guo et al. \cite{guo2022improved}, who showed that the codes are $\left(1-\epsilon,\ell,O(\ell/\epsilon)\right)$ list recoverable with rate $O\left(\frac{\eps}{\sqrt{\ell}\log (1/\eps)}\right)$. Based on the reduction developed in~\cite{lms25}, the recent work of Li and Shagrithaya~\cite{ls25} also applies to random RS codes. Specifically, random RS codes of rate $R$ over alphabet of size $n\cdot(\ell/\eps)^{(\ell/\eps)^{O(\ell/\eps)}}$ are $\left(1-R-\epsilon,\ell,(\ell/\eps)^{O(\ell/\epsilon)}\right)$ list recoverable with high probability.

\paragraph{List Recovery for Explicit Structured Codes.} Explicit list recoverable codes such as folded RS \cite{krachkovsky2003reed,Guru06} and univariate multiplicity codes \cite{kopparty2015list} have been extensively studied over the past two decades. Guruswami and Rudra \cite{Guru06} first showed that folded RS codes of rate $R$ over alphabet of size $\left( \frac{n}{\eps^2} \right)^{O(\log(\ell)/\eps^2)}$ are $\left(1-R-\eps,\ell,\left( \frac{n}{\eps^2} \right)^{O(\log(\ell)/\eps^2)}\right)$ list recoverable. A similar result was extend to univariate multiplicity codes by the work of Kopartty \cite{kopparty2015list}. During the same period, a linear-algebraic decoding framework was promoted by Guruswami and Wang \cite{guruswami2013linear}, which implies that both folded RS and univariate multiplicity codes of rate $R$ achieve $\left(1-R-\eps,\ell,\left( \frac{n\ell}{\eps} \right)^{O(\ell/\eps)}\right)$ list recoverablility over alphabet of size $\left( \frac{ n \ell}{\eps^2} \right)^{O(\ell/\eps^2)}$. A subsequent line of work~\cite{kopparty2018improved,kopparty2023improved,tamo24} further improved the output list size from $\left(\tfrac{n\ell}{\epsilon}\right)^{O(\ell / \epsilon)}$ to $\left(\tfrac{\ell}{\epsilon}\right)^{O((1 + \log \ell) / \epsilon)}$, which remains the state of the art. 

For other structured codes, the work of \cite{guruswami2013list, guruswami2016explicit} shows that folded algebraic geometry (AG) codes of rate $R$ attain $\left(1-R-\eps,\ell,2^{2^{2^{O_{\eps,\ell}(\log^*(n))}}}\right)$ list recoverablility over alphabet of size $\exp\left( \frac{ \ell \log(\ell/\eps) }{\eps^2} \right)$. Most recently, Srivastava and Tulsiani~\cite{srivastava2025improved} showed that explicit codes based on the AEL framework~\cite{alon1995linear} with rate~$R$ can achieve $\left(1-R-\eps,\ell,\exp\left(\exp\left(\frac{\ell}{\eps}\log{(\frac{\ell}{\eps})}\right)\right)\right)$ over alphabet of size $\exp\left(\exp(\frac{\ell}{\eps}\log{(\frac{\ell}{\eps})})\right)$.

\paragraph{A Crucial Open Problem.} For codes of rate~$R$ that are $(1 - R - \epsilon, \ell, L)$ list recoverable, the best known bound~\cite{kopparty2018improved,kopparty2023improved,tamo24} on the output list size~$L$ for all the aforementioned structured codes over the past few decades is $\left(\tfrac{\ell}{\epsilon}\right)^{O((1 + \log \ell) / \epsilon)}$.
Meanwhile, by the most recent work~\cite{cz24,lms25,ls25}, the output list size of all these structured linear codes is lower bounded by $\ell^{\Omega(R / \epsilon)}$. Therefore, a substantial gap remains between the best known upper and lower bounds on the list recoverability of these codes. A crucial open problem, stated below and proposed multiple times in~\cite{rw18,kopparty2018improved,kopparty2023improved,ls25,resch2025List}, will be resolved in this work.
\begin{open}[See \cite{rw18,kopparty2018improved,kopparty2023improved,ls25,resch2025List}]\label{openp}

Can random linear codes, random RS codes, or explicit folded RS and univariate multiplicity codes of rate~$R$ achieve $(1-R-\epsilon, \ell, \poly_{\epsilon,R}(\ell))$ list recoverability? Moreover, can the output list size be further reduced from $\poly_{\epsilon,R}(\ell)$ to approach the lower bound $\ell^{\Omega(R / \epsilon)}$ as closely as possible?

\end{open}
\paragraph{A Conjecture of Chen and Zhang.} One potential approach to Open Problem~\ref{openp} is to determine the optimal relationship among the radius~$\rho$, rate~$R$, input list size~$\ell$, and output list size~$L$ in the context of list recovery for these structured codes. In particular, Chen and Zhang~\cite{cz24} proposed a conjecture concerning the optimal list recoverability of folded RS and univariate multiplicity codes. To formally state this conjecture for folded RS codes, we first recall their definition for convenience. Formally, given any parameters $s,n,k>0$, a finite field $\mathbb{F}_q$, where $|\mathbb{F}_q|>sn>k$, a generator $\gamma$ of $\mathbb{F}^\times_q$, and any sequence of $n$ elements $\alpha_1,\alpha_2,\dots,\alpha_n\in\F_q$, the corresponding $(s,\gamma)$-folded RS code over the alphabet $\mathbb{F}_q^s$ with block length $n$ and rate $R=\frac{k}{sn}$ is defined as 
\[\mathsf{FRS}^{s,\gamma}_{n,k}(\alpha_1,\alpha_2,\dots,\alpha_n):=\bigg\{\mathcal{C}(f):f\in\mathbb{F}_q[x],\text{ }\deg f<k\bigg\}\subseteq\left(\mathbb{F}_q^s\right)^n,
\]
where $\mathcal{C}(f):=(F_1,F_2,\dots,F_n)\in\left(\mathbb{F}_q^s\right)^n$, with $F_i:=\Bigl(f(\alpha_i),f(\gamma\alpha_i),\dots,f(\gamma^{s-1}\alpha_i)\Bigl)$, denotes the encoder of this code. The formal statement of the Chen--Zhang conjecture is stated below.
\begin{conjecture}[\text{See \cite[Conjecture 3.6]{cz24}}]\label{conj:main}
    For any constants $\epsilon>0, \ell\ge 2, L+1=\ell^a, a\in\mathbb{N}^{\ge 2}, R\leq \frac{a-1}{a}$ and generator $\gamma$ of $\mathbb{F}^{\times}_q$ there exists a constant $C$ such that: if $s\ge C$, $\frac{k-1}{s}\ge a$, and $n$ is suffciently large, then any rate 
$R=\frac{k}{sn}$ folded RS code $\mathsf{FRS}^{s,\gamma}_{n,k}(\alpha_1,\alpha_2,\dots,\alpha_n)$ with appropriate evaluation points in $\mathbb{F}_q$ is $(\rho^*-\epsilon,\ell, L)$ list-recoverable, where
\[\rho^*=\frac{L+1-\ell}{L+1}\left(1-\frac{aR}{a-1}\right).\]
\end{conjecture}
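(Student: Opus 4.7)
The plan is to generalize the Chen--Zhang list decoding argument for folded RS codes to the list recovery setting by replacing its pointwise rank bound with an entropic discrete Brascamp--Lieb inequality. Assume for contradiction that there exist $L+1 = \ell^a$ polynomials $f_1, \dots, f_{L+1} \in \mathbb{F}_q[x]_{< k}$ whose folded encodings agree with the input product list at a fraction exceeding $1 - \rho^* + \epsilon$ of the coordinates. A double-counting step guarantees that the bipartite agreement indicator matrix $A \in \{0,1\}^{(L+1) \times n}$, with $A_{j,i} = 1$ iff $\mathcal{C}(f_j)_i \in S_i$, has total weight at least $(L+1)(1 - \rho^* + \epsilon)n$.

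At each coordinate $i$, the condition $|S_i| \leq \ell$ means that the set $\{\mathcal{C}(f_j)_i : A_{j,i} = 1\}$ has cardinality at most $\ell$, giving a sharp local constraint. To exploit this algebraically, I would follow the Chen--Zhang framework and study the $a$-wise correlations among the $f_j$: whenever $a$ codewords $f_{j_1}, \dots, f_{j_a}$ land in the same element of $S_i$, the folded structure and the cyclic action of $\gamma$ force them to satisfy a low-degree algebraic relation at $x = \alpha_i$. For ``appropriate'' (e.g., generic or subspace-design) evaluation points $\alpha_1, \dots, \alpha_n$, simultaneous satisfaction of many such relations confines the $a$-subset to a rank-deficient portion of the span of $\{f_1, \dots, f_{L+1}\}$.

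To upgrade these $a$-wise algebraic restrictions into a global cardinality bound on $L+1$, I would invoke the discrete entropic Brascamp--Lieb inequality developed in the paper. Let $X$ be uniform on $[L+1]$ and for each $i$ let $\pi_i(X)$ be a random variable encoding the local agreement pattern (which $S_i$-bucket $f_X$ falls into, or a distinguished ``not in $S_i$'' symbol). The BL inequality, with projection data dictated by the $a$-wise rank constraints, should yield
\[ \log(L+1) \;=\; H(X) \;\leq\; \sum_{i=1}^n c_i\, H(\pi_i(X)) \;\leq\; a \log \ell, \]
for weights $c_i$ satisfying $\sum_i c_i = a$ and $H(\pi_i(X)) \leq \log \ell$; combined with the double-counting bound and the definition of $\rho^*$, this gives $L + 1 \leq \ell^a$, contradicting the assumption.

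The principal obstacle lies in constructing and verifying the Brascamp--Lieb datum. The list decoding case $\ell = 1$ degenerates to a linear-algebraic rank argument; for $\ell \geq 2$ the combinatorial slack from the input lists must be balanced against the polynomial-interpolation constraints in exactly the way captured by the factor $(L+1-\ell)/(L+1)$ in $\rho^*$. Proving that a suitable BL datum exists, that its coefficient sum matches $a$ tightly, and that ``appropriate evaluation points'' can be realized (by a probabilistic argument or via an explicit subspace-evasive family of $\alpha_i$) is the heart of the proposal, and is where the novel Brascamp--Lieb-type inequality advertised in the abstract will do the bulk of the work.
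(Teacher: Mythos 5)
You are attempting to prove \Cref{conj:main} in full generality, but the paper itself does \emph{not} prove this conjecture — it is stated as an open conjecture of Chen and Zhang, and this paper only resolves the zero-error case $\rho^* = 0$ (\Cref{thm:zero-lr-2} / \Cref{cor:clean-zero-bound}). The paper explicitly lists the general conjecture as Open Question 1, and the remark after \Cref{thm:zero-lr-2} even states that the paper's bounds are \emph{not} optimal outside the zero-error regime.

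The concrete gap in your argument is the bound $H(\pi_i(X)) \le \log \ell$. This holds only when \emph{every} codeword in the list agrees with $S_i$ at coordinate $i$, which is exactly the zero-error situation. When $\rho^* > 0$, a $\rho_i$-fraction of the $L+1$ codewords may fall \emph{outside} $S_i$, and those codewords can all be distinct at coordinate $i$. The paper handles this in \eqref{eq:H-pi-bound}, where the correct bound is
\[
H(\pi_i(X)) \le (1-\rho_i)\ln \ell + \rho_i \ln(L+1) - (1-\rho_i)\ln(1-\rho_i),
\]
which is strictly larger than $\ln \ell$ once $\rho_i>0$ (since $L+1 \gg \ell$). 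Optimizing this weaker per-coordinate bound across $i$ produces a Lambert-$W$ constraint (see \Cref{thm:list-recovery}), not the conjectural radius $\rho^* = \frac{L+1-\ell}{L+1}\bigl(1-\frac{aR}{a-1}\bigr)$. This is precisely the barrier that prevents the BL approach from closing the conjecture, and the paper acknowledges it. Additionally, your asserted normalization $\sum_i c_i = a$ is unmotivated: in the actual BL application the weights are $s_i = \frac{1}{(1-R-\mu)n}$ determined by the subspace-design property, summing to $\approx \frac{1}{1-R}$, which only approaches $a$ in the limit $R \to \frac{a-1}{a}$. Finally, the claimed conclusion ``this gives $L+1 \le \ell^a$, contradicting the assumption'' is not a contradiction, since $L+1 = \ell^a$ was the assumption; you would need strict inequality, and your chain gives only equality even in the best case. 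In short, your outline sketches the right machinery but re-derives only what the paper already proves in the zero-error case, while the non-zero-error case requires new ideas not supplied here (nor in the paper).
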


In a recent work \cite{BCDZ25subspace}, it was shown that list-recoverability of random linear codes over large alphabets is essentially the same as the list-recoverability of nearly optimal subspace designable codes, so a proof or refutation of \cref{conj:main} that only relies on the ``subspace design'' nature of folded RS codes\footnote{We name a few (non)-examples here: The proof of \cite{cz24} only relies on the subspace design property, while \cite{guruswami2013linear,kopparty2023improved,tamo24,srivastava2025improved} use additional interpolation techniques that are tailored for folded RS codes.} is equivalent to resolving the same conjecture for random linear codes.

Beyond the trivial instances, our understanding of this conjecture in the genuine list-recovery regime (when $\ell \ge 2$) remains extremely limited. Prior to our work, the only two nontrivial pieces of evidence supporting~\cref{conj:main} came from~\cite[Theorem 1.8]{cz24} and~\cite{CCSZ25}, which established the conjecture for the specific parameters $\ell = 2, L = 3$ and $\ell = 2,\rho^*=0$ with arbitrary~$L$, respectively.

\subsection{Our Results}

Our main contribution in this paper is a positive resolution of \Cref{openp}. A more precise statement of our results is as follows.
\begin{theorem}[Main Results]\label{thm:main-lr}
For constants $\ell\ge 2,R,\eps\in(0,1),L=(\frac{\ell}{R+\eps})^{O(R/\eps)}$. The following codes are $(1-R-\eps,\ell,L)$ list-recoverable.
\begin{compactitem}
\item Explicit $s$-folded Reed--Solomon codes when $s\ge 16\ell/\eps^2$ \emph{(\cref{cor:frs-lr})}.
\item Explicit $s$-order univariate multiplicity codes when $s\ge 32\ell/\eps^2$ \emph{(\cref{cor:mult-lr})}.
\item Random $\F_q$-linear codes with high probability when $q\ge \exp(\log{(\ell)}L/\eps)$ \emph{(\cref{cor:rlc-lr})}.
\item Random Reed--Solomon codes over $\F_q$ with high probability when $q\ge cn$ for some constant $c=O_{\eps,\ell,L,R}(1)$ \emph{(\cref{cor:rrs-lr})}.
\end{compactitem}
\end{theorem}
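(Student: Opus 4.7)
My plan is to reduce all four parts to a single structural statement about folded Reed--Solomon (FRS) codes, which I would then establish via a novel entropic Brascamp--Lieb (BL) inequality that generalises the Chen--Zhang rank bound \cite{cz24}. For FRS itself, I would fix a hypothetical list $f_1,\dots,f_L\in\F_q[x]_{<k}$ and input lists $S_1,\dots,S_n\subseteq\F_q^s$ with $|S_i|\le\ell$, with every $f_j$ agreeing with $S$ on at least $R+\eps$ fraction of coordinates. Univariate multiplicity codes are handled by a parallel argument with Hasse derivatives in place of folded evaluations, so the extra factor of $2$ in $s\ge 32\ell/\eps^2$ is absorbed. For random $\F_q$-linear codes I would invoke the subspace-design to random linear code equivalence of \cite{BCDZ25subspace} to transfer the FRS bound directly; the alphabet requirement $q\ge\exp(\log(\ell)L/\eps)$ arises from the error probability in that reduction. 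For random Reed--Solomon codes I would compose this with the random-linear-to-random-RS reduction of \cite{lms25, ls25}, which accounts for the alphabet size $q\ge c_{\eps,\ell,L,R}\cdot n$.

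The core new ingredient is to lift the Chen--Zhang lemma, which bounds the $\F_q$-dimension of the span of the folded evaluations of the list in terms of the local list sizes, into an entropic inequality. Let $X$ be uniform on $\{f_1,\dots,f_L\}$ and $\pi_i(X):=(X(\alpha_i),\dots,X(\gamma^{s-1}\alpha_i))$ its folded evaluation at $\alpha_i$. I would prove a discrete BL inequality of the form
\[
H(X)\;\le\;\sum_{i\in[n]}c_i\,H(\pi_i(X)),
\]
for nonnegative weights $(c_i)$ with $\sum_i c_i=O(R/\eps)$ and satisfying a ``BL datum'' rank condition over every $\F_q$-subspace of the ambient coefficient space $\F_q^{k}$, where the rank condition is inherited from the subspace-design property of FRS. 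Combining this with the list-recovery hypothesis, which implies $H(\pi_i(X))\le\log\ell$ on the $\Omega(R+\eps)$-fraction of ``informative'' coordinates, and with $H(X)=\log L$, I would deduce $\log L\le O(R/\eps)\log(\ell/(R+\eps))$, yielding the claimed $L\le(\ell/(R+\eps))^{O(R/\eps)}$.

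The main obstacle is the BL step itself: while Chen--Zhang's bound is a clean linear-algebra computation over a fixed list, upgrading it to an inequality valid for any joint distribution on the list requires (i) constructing the correct BL datum whose exponents reflect the folded-RS interpolation structure, and (ii) verifying the abstract discrete BL condition, which amounts to a uniform rank inequality over every $\F_q$-subspace of $\F_q^{k}$ rather than just over spans of the $f_j$'s. A secondary technical issue is handling the average-radius regime that the paper also claims; I would address this by running the averaging on the real-valued agreement random variable $\tfrac{1}{n}\sum_i\mathbbm{1}[\pi_i(X)\in S_i]$ before invoking the inequality, rather than on a union-bound over individual codewords. Once the FRS inequality is established, the remaining three corollaries then follow mechanically via the cited black-box reductions.
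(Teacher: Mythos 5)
Your high-level framework is correct and matches the paper's approach: use a discrete entropic Brascamp--Lieb inequality driven by the subspace-design property, take $X$ uniform on a putative bad list, compare $H(X)=\log(L+1)$ against the local entropies $H(\pi_i(X))$, and then transfer to random linear codes and random RS codes via \cite{BCDZ25subspace} and \cite{lms25}. However, the quantitative core of your plan contains two concrete errors that the paper avoids with a more careful argument.

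First, the claim that you can take BL weights with $\sum_i c_i = O(R/\eps)$ is not achievable. For the BL datum rank condition $\dim(U)\le\sum_i c_i\dim(\pi_i(U))$ to hold for all subspaces $U$ (as required by the discrete BL inequality), the subspace-design bound forces the weights to be roughly uniform, $c_i\approx\frac{1}{(1-R)n}$, so that $\sum_i c_i\approx\frac{1}{1-R}=O(1)$. The $O(R/\eps)$ factor in the final exponent comes out of the downstream arithmetic, not from making the total BL weight small. Second, the bound $H(\pi_i(X))\le\log\ell$ on ``informative'' coordinates does not hold: even when most codewords agree with $S_i$, the few that do not enlarge the support of $\pi_i(X)$ beyond $\ell$ points. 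The paper handles this by introducing an indicator $E_i=[\,\pi_i(X)\in S_i\,]$ and applying the chain rule, which yields $H(\pi_i(X))\le(1-\rho_i)\log\ell+\rho_i\log(L+1)-(1-\rho_i)\log(1-\rho_i)$; summing this over $i$, using concavity in $\rho_i$, and combining with $\sum_i c_i\approx\frac{1}{1-R}$ is exactly what produces the inequality $R'\log(L+1)\ge(R+\eps)\log\bigl((R+\eps)(L+1)/\ell\bigr)$ and hence the claimed $L\le(\ell/(R+\eps))^{O(R/\eps)}$. Your accounting as stated would at best produce $L\le\ell^{O(R/\eps)}$ after correcting the weight sum, and even that requires a per-coordinate argument you have not given.

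Finally, you do not address why the folding parameter can be as small as $s=O(\ell/\eps^2)$. The span of the recovered list could a priori have dimension up to $L$, which would force $s\ge L$ in the subspace-design condition. The paper uses the Guruswami--Wang and Kopparty--Ron-Zewi--Saraf interpolation results (Theorems 2.5 and 2.6) to show the recovered list lies in an affine subspace of dimension $O(\ell/\eps)$, which is what makes the small folding suffice. Without this dimension-reduction step your proof would only establish the theorem for $s\ge\Omega(L\cdot R/\eps)$, a substantially weaker result. Your remark that the multiplicity-code case ``is absorbed'' by a parallel argument with Hasse derivatives is true but for the same reason: the extra factor of $2$ in $s$ comes precisely from the corresponding affine-subspace bound for multiplicity codes.
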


See \Cref{tab:listrecovery} for a comparison of our results with prior work. The lower bound for output list size is $L\ge \ell^{\Omega(R/\eps)}$. Therefore, our bound $(\frac{\ell}{R+\eps})^{O(R/\eps)}$ is nearly optimal up to a constant multiplicative factor on the exponent when we consider the rate $R$ is a constant, which basically resolved Open Problem \ref{openp}. We also comment that the denominator $R+\eps$ is crucial when the rate $R$ approaches to $0$. It guarantees the denominator does not vanish as long as we have some  slack $\eps>0$ away from the threshold.

The above bound is nearly optimal asymptotically for radius $1-R-\eps$. However, if we fix $\ell,L\ge 2,R\in(0,1)$, it remains elusive to decide the exact radius $\rho$ such that linear codes could be $(\rho,\ell,L)$ list-recoverable. A conjecture on the exact value of $\rho$ is formulated as in \cref{conj:main}. 
In \cite{cz24}, it was proved that the radius $\rho$ must be upper bounded by the conjectured value $\rho^*$, so an ideal result is to prove $\rho\ge\rho^*-\eps$ for arbitrarily small slack $\eps>0$ and close the gap. In this work, we prove this for \textbf{zero-error} case.

\begin{theorem}[\cref{cor:clean-zero-bound}]\label{thm:zero-lr-2}
\emph{\cref{conj:main}} is true when $\rho^*=0$.
\end{theorem}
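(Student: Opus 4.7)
The plan is to derive \Cref{thm:zero-lr-2} as a tight specialization of the Brascamp--Lieb-type inequality developed earlier in the paper (the generalization of the Chen--Zhang STOC 2025 result advertised in the abstract). Observe first that $\rho^* = 0$ combined with $L+1 = \ell^a$ and $a \geq 2$ forces the critical rate $R = (a-1)/a$, since the alternative factor $L+1 = \ell$ would require $a = 1$. So the target is to establish that folded RS codes (at this rate, with appropriate evaluation points, sufficiently large $s$, and sufficiently large $n$) are $(0, \ell, \ell^a - 1)$ list-recoverable, matching the known lower bound from \cite{cz24}.

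I would argue by contradiction. Suppose some product set $S = S_1 \times \cdots \times S_n$ with $|S_i| \leq \ell$ contains at least $\ell^a$ distinct codewords $\mathcal{C}(f_0), \ldots, \mathcal{C}(f_{\ell^a - 1})$ of $\mathsf{FRS}^{s,\gamma}_{n,k}$. Letting $X$ be uniform on this list and $B_i X = \mathcal{C}(X)_i \in S_i$ the $i$-th block projection, we have $H(X) \geq a \log \ell$ and $H(B_i X) \leq \log \ell$ for each $i$. Invoking the paper's sharpened Brascamp--Lieb inequality -- which is tailored to the folded RS evaluation structure, rather than the generic polynomial BL (which would give only the far weaker $\frac{a-1}{a} n \log \ell$ bound) -- yields an upper bound of the form $H(X) \leq a \log \ell$ at the critical parameters, where the coefficient $a$ emerges from the BL feasibility constants tied to the relation $k/s \approx (a-1)n/a$. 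This already produces the non-strict bound $L + 1 \leq \ell^a$, i.e., at most $\ell^a$ codewords in the product set.

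The decisive step is to rule out equality, moving from $L+1 \leq \ell^a$ to $L + 1 \leq \ell^a - 1$. I would do this by analyzing the equality case of the BL inequality: saturation would force a rigid product-like factorization of $X$ across the blocks, which translates into algebraic constraints on the $\ell^a$ polynomials at every evaluation point $\alpha_i$. Invoking the ``appropriate evaluation points'' hypothesis of \cref{conj:main}, one can choose the $\alpha_i$ generically so that no collection of $\ell^a$ distinct degree-$<k$ polynomials can satisfy all these constraints simultaneously. The main obstacle I anticipate is precisely this strictness argument, since the discrete Brascamp--Lieb framework yields only non-strict bounds in general; extracting strict inequality requires a detailed equality-case analysis of the specific BL inequality developed in the paper. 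The analogous statement for univariate multiplicity codes should follow with minimal modification, since the BL framework is formulated abstractly and both code families instantiate the same underlying subspace-design-like structure.
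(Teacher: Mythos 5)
Your framework — take $X$ uniform on a putative bad list, apply a discrete Brascamp--Lieb inequality, use $H(X) = \log(L+1)$ and $H(\pi_i(X)) \le \log\ell$ — is exactly the engine of the paper's proof (Theorem~\ref{thm:lr-zero-error} and Corollary~\ref{cor:clean-zero-bound}). However, you misidentify the crux and propose solving a problem that does not actually arise.

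You work at the exactly critical rate $R = (a-1)/a$, observe that the BL bound then yields the non-strict $H(X) \le a\log\ell$ and so no contradiction against a list of exactly $\ell^a$ messages, and conclude that the decisive step is an equality-case analysis of the BL inequality (a ``rigid product-like factorization'' argument). The paper does none of this. The conjecture has a built-in slack $\eps > 0$, and the paper runs the whole argument at sub-critical rates $R < (a-1)/a - \eps$, with $s$ large enough relative to $L$. Concretely, Proposition~\ref{prop:BL-subspace} gives the BL coefficient $1/(n - \tfrac{k-1}{s-L+1})$, and Theorem~\ref{thm:lr-zero-error}'s hypothesis \eqref{eq:zero-error} is the \emph{strict} inequality $\tfrac{k-1}{n(s-L+1)} < 1 - \tfrac{\log\ell}{\log(L+1)}$, which for $L+1 = \ell^a$ reads $\tfrac{k-1}{n(s-L+1)} < \tfrac{a-1}{a}$. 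Corollary~\ref{cor:clean-zero-bound} arranges this by taking $s \ge (\tfrac{R}{\eps}+1)(L-1)$, so that $\tfrac{Rs}{s-L+1} \le R + \eps$; then a list of $L+1 = \ell^a$ elements already yields $\log(L+1) \le \tfrac{n}{n - (k-1)/(s-L+1)}\log\ell$, which strictly contradicts \eqref{eq:zero-error}. The slack in the subspace-design bound (the denominator $s-L+1$ versus $s$) and the $k-1$ versus $k$ are exactly what break the tie. In fact, at $R = (a-1)/a$ exactly and $n$ large, \eqref{eq:zero-error} \emph{fails}, so the route you propose (working at the critical rate and extracting strictness from BL equality cases) is the harder version of a problem the paper never needs to face; the conjecture's $\eps$ is there precisely to avoid it.

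A secondary confusion: you describe the BL inequality as ``tailored to the folded RS evaluation structure'' and contrast it with a ``generic polynomial BL'' giving a weaker $\tfrac{a-1}{a}n\log\ell$ bound. The paper in fact uses the entirely generic discrete entropic BL (Theorem~\ref{thm:entropy-bl}); all code-specific structure enters through the subspace design property, translated into the BL dimension condition by Proposition~\ref{prop:BL-subspace}. There is no separate ``tailored'' inequality, and the generic one already gives the $a\log\ell$-scale bound you want, not an $n$-dependent one.
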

This  parameterized exact bound can be seen as the counterpart of the famous ``generalized singleton bound \cite{ST23}'' in zero-error list-recovery regime. We comment that there is a reduction \cref{thm:subspace-random} by \cite{BCDZ25subspace} from list-recoverability of optimal subspace designable codes to that of random linear codes, so a direct corollary from \cref{thm:zero-lr-2} and their reduction \cref{thm:subspace-random} is that \cref{conj:main} is also true for random linear codes when $\rho^*=0$.
\begin{remark}
There is still a gap between our upper bound and the lower bound conjectured in \cref{conj:main}. We comment that it is impossible that our upper bound in this paper is optimal in general. To see why, \cite[Theorem 1.8]{cz24} has shown that these codes can be $(\frac{R}{2}-\eps,2,3)$ list-recoverable for arbitrarily small $\eps>0$, which proves \cref{conj:main} in a very special case. However, directly using proof techniques in this paper and plug in this set of parameter would only give worse bound, which is not optimal. We note that  \cite[Theorem 1.8]{cz24} is the only known special case of \cref{conj:main} that was proved but not follows from \cref{thm:zero-lr-2}. Although \cref{thm:main-lr} already gives nearly optimal bound for the radius $1-R-\eps$, we still think the fine-grained parameterized bound conjectured in \cref{conj:main} is a very interesting open problem.
\end{remark}

As we will discuss later, our main techniques is a discrete entropic Brascamp-Lieb inequality stated in \cref{thm:entropy-bl}. This inequality quantifies Shannon entropy.

Inspired by a recent result of Chen and Zhang~\cite{cz24} for list decoding, we prove in \cref{sec:rem-bl} a novel ``Brascamp--Lieb-like'' inequality, see \Cref{thm:rem-bl-intro}. However, instead of bounding the entropy of a probability distribution, we instead look at the ``remainder'' of a probability distribution $X\leftarrow \Omega$, which we define to be $r(X) :=1-\max_{u\in\Omega}(X(u))$. In \cref{sec:rem-inequality}, we show how this inequality can be used to recover Chen and Zhang's result that subspace designable codes asymptotically achieve the generalized Singleton bound~\cite{ST23}.

Finally, in \cref{app:BL} we give a self-contained proof of our main technique, discrete entropic Brascamp-Lieb inequality \cref{thm:entropy-bl}. Its original proof \cite{christ2013communication,carlen2009subadditivity} requires heavy hammers stemmed from pure mathematics community. We encourage readers to check our self-contained proof to get crucial ideas necessary in this discrete entropic form. We hope that could reveal some main intuitions behind this proof and make the result more accessible.

\subsection{Technical Overview}
The primary technical contribution of this paper is novel connection between discrete Brascamp--Lieb inequalities and list-recovery. The original continuous Brascamp--Lieb inequality~\cite{brascamp1976best} is used to prove various results in functional analysis (such as hypercontractivity) by relating some ``global'' continuous structure with many ``local'' projections of the same data. The functional analysis community has taken much interest in proving many generalizations of the original Brascamp--Lieb inequality~\cite{bennett2005finite,bennett2008brascamp,carlen2009subadditivity,bennett2024adjoint}, including in discrete settings~\cite{christ2013communication,christ2013optimal,bennett2024adjoint,christ2024multilinear}. Such inequalities have many applications in TCS including the study of matrix multiplication~\cite{christ2013communication}, non-commutative linear algebra \cite{GGOW20,franks2023shrunk}, quantum information theory~\cite{berta2023quantum} among other topics \cite{demmel2016parallelepipeds,bennett2022fourier}.

We crucially make use of a discrete Brascamp--Lieb inequality due to Christ, Demmel, Knight, Scanlon, and Yelick~\cite{christ2013communication,christ2024multilinear}. However, For convenience, we use a ``discrete entropic'' variant of their inequality based on a duality theorem of Carlen et al.~\cite{carlen2009subadditivity} (see also Remark 1.13 of \cite{bennett2024adjoint}). 

\begin{theorem}[Discrete Entropic Brascamp--Lieb~\cite{christ2013communication,christ2024multilinear,carlen2009subadditivity}]\label{thm:entropy-bl}
Let $V$ be a $d$-dimensional vector space over $\F_q$. For each $i \in [m]$, consider a scalar $s_i \ge 0$ and a linear map $\pi_i : V \to V_i$, where $V_i$ is a $d_i$-dimensional vector space over $\F_q$. Assume further that for all subspaces $U \subseteq V$ we have that
\begin{align}
\dim(U) \le \sum_{i=1}^m s_i \dim(\pi_i(U))\label{eq:dim-cond}
\end{align}
Then, for every probability distribution $X$ over $V$, we have that
\begin{align}
H(X) \le \sum_{i=1}^m s_i H(\pi_i(X)).\label{eq:entropy}
\end{align}
\end{theorem}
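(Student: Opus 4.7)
The plan is to prove the discrete entropic Brascamp--Lieb inequality by first establishing its combinatorial (cardinality) counterpart and then lifting to entropy via a tensor-power and typical-set argument.

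\textbf{Step 1 (Combinatorial Brascamp--Lieb).} I would first prove the Christ--Demmel--Knight--Scanlon--Yelick cardinality inequality: under hypothesis~(\ref{eq:dim-cond}), every finite $S \subseteq V$ satisfies $|S| \le \prod_i |\pi_i(S)|^{s_i}$. The natural approach is induction on $\dim V$. Pick a hyperplane $H \subseteq V$, slice $S$ into cosets of $H$ as $S = \bigsqcup_\lambda S_\lambda$, apply the inductive bound in $H$ to each $S_\lambda$, and then recombine over $\lambda$ using H\"older's inequality with exponents dictated by the $s_i$. The nontrivial step is verifying that the dimensional condition on $V$ descends to compatible conditions on $H$ and on the quotient $V/H$ with an appropriate redistribution of the weights $s_i$, so that the H\"older exponents aggregate correctly.

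\textbf{Step 2 (Tensorization of the dimensional condition).} Next I would show that the coordinatewise extensions $\pi_i^{\otimes n} : V^n \to V_i^n$ continue to satisfy~(\ref{eq:dim-cond}) with the same weights $s_i$. For $U \subseteq V^n$, set $U^{(k)} := U \cap (V^k \times \{0\}^{n-k})$ and $W_k := U^{(k)}/U^{(k-1)}$; the $k$-th coordinate map identifies $W_k$ with a subspace of $V$. Then $\dim U = \sum_k \dim W_k$. Applying~(\ref{eq:dim-cond}) to each $W_k \subseteq V$ and then examining the induced filtration of $\pi_i^{\otimes n}(U)$ (choosing a complement $C_k$ of $U^{(k-1)}$ in $U^{(k)}$ and projecting onto the $k$-th coordinate) gives $\sum_k \dim \pi_i(W_k) \le \dim \pi_i^{\otimes n}(U)$, whence
$$\dim U \;\le\; \sum_k \sum_i s_i \dim \pi_i(W_k) \;\le\; \sum_i s_i \dim \pi_i^{\otimes n}(U).$$

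\textbf{Step 3 (Typical-set lift).} Given a distribution $X$ on $V$, let $T_n \subseteq V^n$ denote the strongly $\epsilon$-typical set of $X^{\otimes n}$. Standard asymptotic equipartition yields $|T_n| \ge q^{n(H(X) - \epsilon)}$ (entropy in base $q$) while $|\pi_i^{\otimes n}(T_n)| \le q^{n(H(\pi_i(X)) + \epsilon)}$, since $\pi_i^{\otimes n}(T_n)$ is essentially contained in the typical set of $(\pi_i(X))^{\otimes n}$. Applying Step~1 to $T_n$ with the tensorized maps from Step~2:
$$q^{n(H(X)-\epsilon)} \;\le\; |T_n| \;\le\; \prod_i |\pi_i^{\otimes n}(T_n)|^{s_i} \;\le\; q^{n \sum_i s_i (H(\pi_i(X))+\epsilon)}.$$
Taking $\log_q$, dividing by $n$, and sending $n \to \infty$ and $\epsilon \to 0$ recovers~(\ref{eq:entropy}).

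\textbf{Main obstacle.} The real crux is Step~1, the combinatorial inequality itself. Making the slicing induction go through for arbitrary real weights $s_i$ (rather than integer or rational ones) requires a careful pairing of H\"older with a convexity/extreme-point analysis of the polytope of admissible weights; the original proofs route this through heavier algebraic or ergodic machinery, and delivering a streamlined self-contained account is precisely the technical content of the appendix. Steps~2 and~3, by contrast, are standard linear-algebraic and information-theoretic bookkeeping.
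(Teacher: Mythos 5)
Your proposal takes a genuinely different route from the paper's proof. You plan to first prove the \emph{cardinality} (combinatorial) version of the Brascamp--Lieb inequality of Christ, Demmel, Knight, Scanlon and Yelick ($|S| \le \prod_i |\pi_i(S)|^{s_i}$), then tensorize the dimensional hypothesis to $V^n$, and finally lift to the entropic form via strong typicality and the asymptotic equipartition property. Your Steps~2 and~3 are correct: the filtration argument shows that~(\ref{eq:dim-cond}) is stable under taking $n$-fold coordinatewise extensions (with the same weights $s_i$), and the typical-set lift is routine once one notes that projecting a strongly typical sequence for $X$ yields a strongly typical sequence for $\pi_i(X)$ with a harmlessly enlarged slack. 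The paper instead works \emph{directly} at the entropy level: a double induction on $\dim V$ and on $m$, locating a nonzero critical subspace $W$ (where~(\ref{eq:dim-cond}) is tight), splitting the distribution $X$ via the quotient $V/W$ and the conditional distributions on cosets, and recombining via the entropy chain rule; when only $V$ itself is critical, the parameters $s_i$ are perturbed along a line segment until a proper critical subspace appears, and convexity finishes. The chain rule plays exactly the role that H\"older's inequality plays in the cardinality proof, but the entropy route avoids tensorization and AEP entirely, which is why the paper calls it streamlined.

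The weak point of your proposal is precisely the point you flag: Step~1 is left as a black box. Note also that slicing by an arbitrary hyperplane $H$ does \emph{not} in general give a descending dimensional condition on both $H$ and $V/H$ with the same (or any nonnegative) weights; the quotient condition can fail unless $H$ is a \emph{critical} subspace, and reducing to the existence of a proper nonzero critical subspace requires exactly the parameter-perturbation (extreme-point) argument. In other words, the technical content you would need to fill in for Step~1 is isomorphic to the technical content of the paper's direct entropic proof --- the same critical-subspace dichotomy, the same perturbation of $(s_1,\dots,s_m)$ --- only phrased multiplicatively with H\"older rather than additively with the chain rule, and then followed by the extra tensorization and typical-set machinery. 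So your route is sound in outline and does recover the theorem, but it is strictly longer and does not yield a self-contained proof until the combinatorial core is actually carried out.
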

As the fundamental ideas behind the proof of Theorem~\ref{thm:entropy-bl} are rather elementary, we give a self-contained proof of this inequality in \Cref{app:BL}.

The novel step we take in this paper is to relate the linear maps $\pi_i : V \to V_i$ as encoding maps of a subspace design code. More precisely, we pick $V$ to be a suitable subcode of our subspace design code, and let each $\pi_i$ denote projection onto the $i$th coordinate. It turns out with an appropriate choice of the parameters $s_i \ge 0$, the equation (\ref{eq:dim-cond}) is \emph{equivalent} to a strong subspace design condition! As such, if the dimension of the subcode $V$ is bounded by the subspace design parameters, we get that (\ref{eq:entropy}) holds for \emph{any} probability distribution $X$ over $V$ for \emph{any} bounded-dimension subcode $V$ of \emph{any} subspace design code.

In our application to list recovery, we pick $X$ to (essentially) be the uniform distribution over a maximal recovered list (with $V$ being the span of the support of $X$). Then, each $H(\pi_i(X))$ can be bounded in terms of the local list size $\ell$ and the local error $\rho_i$--see Equation (\ref{eq:H-pi-bound}). Furthermore, since $X$ is a uniform distribution, $H(X)$ can be expressed in terms of the maximum list size $L$. As such, we near-immediately get an equation relating the global list size $L$ with local list size $\ell$ and the recovery error $\rho$. A careful asymptotic computation, yields our bound that $L$ is bounded by a polynomial function of $\ell$.

Since explicit folded Reed--Solomon codes and univariate multiplicity codes yield nearly optimal subspace designs \cite{guruswami2016explicit}, we immediately get sharp combinatorial list recovery bounds for these codes. In a recent work \cite{BCDZ25subspace}, the authors showed an equivalence between list-recoverability of subspace designs and random linear codes. Therefore, the list-recoverability of random linear codes directly follows from their reduction. Finally, the list-recoverability of random Reed--Solomon codes follows from a ``local equivalence'' between random RS and random linear codes established by \cite{lms25}.

\paragraph{Remainder BL Inequality.} To strengthen the connections between coding theory and Brascamp-Lieb inequalities, we aim to prove a novel Brascamp-Lieb-style inequality inspired by a recent result of Chen and Zhang~\cite{cz24} for list decoding. However, instead of bounding the entropy of a probability distribution, we look at a different quantity which we call \emph{remainder}. For any distribution $X\leftarrow \Omega$, let the remainder $r(X) :=1-\max_{u\in\Omega}(X(u))$. 

\begin{theorem}[Brascamp-Lieb inequality for remainder]\label{thm:rem-bl-intro}
Let $V$ be any linear space over $\F_q$. Let $\pi_i\colon V\to V/W_i,i\in[n]$ denote $n$ linear maps such that $\ker(\pi_i)=W_i\subseteq V$ satisfying the following
\begin{equation}\label{eq:subspace-design}
\forall W\subseteq V, \dim(W)\leq \sum^n_{i=1}s_i\dim(\pi_i(W)).
\end{equation}
Then, for any distribution $X\leftarrow V$, we know that
\[
r(X)\leq \sum^n_{i=1}s_ir(\pi_i(X)).
\]
\end{theorem}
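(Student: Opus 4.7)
The plan is to first prove a pointwise inequality from the $1$-dimensional case of \eqref{eq:subspace-design}, and then upgrade it to the theorem.

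For $u \in V$, define $r(X, u) := 1 - X(u)$ and $r(\pi_i(X), v) := 1 - \pi_i(X)(v)$, so that $r(X) = \min_u r(X, u)$ and $r(\pi_i(X)) = \min_v r(\pi_i(X), v)$. First I would establish, for every $u \in V$, the pointwise bound
\[
r(X, u) \leq \sum_{i=1}^n s_i\, r(\pi_i(X), \pi_i(u)).
\]
Indeed, expanding
\[
\sum_i s_i\, r(\pi_i(X), \pi_i(u)) \;=\; \sum_{Y \in V} X(Y) \sum_i s_i \mathbf{1}[Y - u \notin W_i],
\]
the inner sum vanishes when $Y = u$ and is at least $1$ when $Y \neq u$, by applying \eqref{eq:subspace-design} to the $1$-dim subspace spanned by the nonzero vector $Y - u$; summing over $Y \neq u$ recovers $r(X, u)$. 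Also, taking $W = \bigcap_i W_i$ in \eqref{eq:subspace-design} forces $\bigcap_i W_i = \{0\}$, so the joint map $V \hookrightarrow \prod_i V/W_i$ is injective.

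Letting $u^* \in V$ be a mode of $X$, the pointwise bound at $u^*$ gives $r(X) \leq \sum_i s_i r(\pi_i(X), \pi_i(u^*))$. In the favorable ``consistent modes'' case, where $\pi_i(u^*)$ is also a mode of $\pi_i(X)$ for every $i$ (equivalently, the tuple $(v_i^*)_i$ of projection modes lies in the image of the joint injection), this equals $\sum_i s_i r(\pi_i(X))$ and the theorem follows immediately.

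The main obstacle is the ``inconsistent modes'' case, where the pointwise bound is strictly weaker than the theorem's claim since $r(\pi_i(X), \pi_i(u^*)) > r(\pi_i(X))$ for some $i$. Closing this gap requires invoking the higher-dimensional cases of \eqref{eq:subspace-design}, not just the $1$-dim case used above. The plan is to mirror the strategy behind the entropic Brascamp--Lieb inequality (\Cref{thm:entropy-bl}) and the list-decoding argument of Chen and Zhang~\cite{cz24} that inspired this result: either argue by induction on $\dim V$, decomposing $X$ along a subspace chosen so that the remaining quotient has a favorable mode structure, or replace the point mass at $u^*$ by a (possibly signed) measure $\mu$ on $V$ whose pushforwards $\pi_i(\mu)$ concentrate on the $v_i^*$. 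Integrating the pointwise bound against $\mu$ and using $\max_u X(u) \geq \mathbb{E}_{u \sim \mu}[X(u)]$ would then yield $r(X) \leq \sum_i s_i r(\pi_i(X))$. I expect this inductive or measure-theoretic step, which must consume the full subspace-design structure rather than only the $1$-dim condition, to be the core technical difficulty.
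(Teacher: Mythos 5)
Your pointwise bound is correct and is a nice observation: for any fixed $u$, expanding $\sum_i s_i r(\pi_i(X),\pi_i(u)) = \sum_Y X(Y)\sum_i s_i\mathbf{1}[Y-u\notin W_i]$ and applying \eqref{eq:subspace-design} to $\Span(Y-u)$ for each $Y\neq u$ indeed yields $r(X,u)\le\sum_i s_i\,r(\pi_i(X),\pi_i(u))$, and the ``consistent modes'' case is thus disposed of. But the inconsistent-modes case is not a loose end: it is essentially the entire theorem, and you have left it as a plan rather than a proof. Both routes you gesture at have problems. The signed-measure route cannot work as described: a \emph{probability} measure $\mu$ with $\pi_i(\mu)=\delta_{v_i^*}$ for all $i$ must be supported on $\bigcap_i\pi_i^{-1}(v_i^*)$, which (since $\bigcap_i W_i=\{0\}$) is empty or a singleton --- empty is exactly the inconsistent case you are trying to resolve. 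Replacing $\mu$ by a signed measure destroys the inequality $\mathbb{E}_{u\sim\mu}[X(u)]\le\max_u X(u)$ that you need. The inductive route is the right instinct, but ``decompose along a subspace with favorable mode structure'' is not an argument, and the pointwise bound alone (which only uses the $1$-dimensional instances of \eqref{eq:subspace-design}) is provably too weak, since the theorem genuinely requires the higher-dimensional instances.

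For comparison, the paper's proof is structurally quite different from what you sketch. It recasts the statement for an arbitrary nonnegative $F\colon V\to\R_{\ge 0}$ as $\|F\|_1-\|F\|_\infty\le\sum_i s_i[\|F\|_1-\|F^{/\pi_i}\|_\infty]$, inducts on $|\supp F|$, and picks a greedy basis $v_1,\dots,v_d$ maximizing $F(v_i)$ at each step. It then decomposes $F$ into ``shells'' $F_i$ supported on $U_i\setminus U_{i-1}$ where $U_i=\Span(v_0,\dots,v_i)$, applies the induction hypothesis to each $F_i$, and handles the cross terms via an Abel-summation/telescoping use of \eqref{eq:subspace-design} on the chain $U_0\subsetneq\cdots\subsetneq U_d$. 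The final reduction bounds $\|F^{/\pi_j}\|_\infty$ by $\sum_{i\in I}\|F_i^{/\pi_j}\|_\infty$ over an index set $I$ tied to an echelon basis of a coset of $\ker\pi_j$, and closes with a threshold/level-set argument and an affine-dimension comparison. None of this appears in your proposal. To make your approach work, you would need to actually exhibit such an inductive decomposition and show it consumes the higher-dimensional subspace-design constraints; as written, the hard step is named but not carried out.
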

We will see in \cref{sec:rem-inequality} how the above result gives a proof of \cite{cz24}. Note that although the entropic form \cref{thm:entropy-bl} proves \cref{thm:zero-lr-2}, it does not produce generalized singleton bound for list-decoding. Readers that are familiar with \cite{cz24} should think the original proof of \cite{cz24} as a special case of \cref{thm:rem-bl-intro} on uniform distributions only.

\paragraph{Streamlined Proof of \Cref{thm:entropy-bl}.} We conclude the technical overview by briefly describing an elementary proof we give of \Cref{thm:entropy-bl} in \Cref{app:BL}. It largely follows the proof strategy of \cite{christ2013communication,christ2024multilinear}, but replaces some (relatively) complex $L^p$ inequalities with some standard facts about entropy as well as  essentially eliminates the need to use facts about polyhedral geometry.

The proof proceeds by a double induction on $\dim(V)$ and $n$. The base cases are straightforward to handle. Observe that decreasing the parameters $s_i \ge 0$ only makes (\ref{eq:entropy}) stronger. Thus, without loss of generality we may always assume that (\ref{eq:dim-cond}) is an equality for some nonzero $U \subseteq V$. Such a space is called a \emph{critical} in the BL inequality literature (e.g., \cite{bennett2008brascamp,bennett2005finite}). The proof now splits into two cases: (A) $U \subsetneq V$ and (B) $U = V$.

Case (A) is handled using a divide-and-conquer, where we assume \Cref{thm:entropy-bl} holds for the spaces $U$ and $V/U$. Combining these smaller cases to yield (\ref{eq:entropy}) for every distribution over $V$ makes crucial use of the entropy chain rule. We remark this recursion on the critical subspace also appears in proofs of the GM-MDS theorem that the generator matrices of random Reed-Solomon codes attain all possible zero patterns~\cite{yildiz2019gmmds,lovett2018gmmds}. Case (B) is handled by perturbing the parameters $s_i \ge 0$ until some $U \subsetneq V$ is also critical, letting us reduce to case (A).

\subsection{Open Questions}

We conclude the introduction with a few open questions.

\begin{enumerate}
\item Our Theorem \ref{thm:zero-lr-2} establishes Conjecture \ref{conj:main} in the zero-error regime. However, beyond this regime, it remains open whether Conjecture \ref{conj:main} holds. In particular, we do not yet know in general the optimal trade-off among the decoding radius~$\rho$, rate~$R$, input list size~$\ell$, and output list size~$L$ for list recovery of such codes.
\item Recall that in \cite{st25}, their list-recovery result was not only combinatorial but also algorithmic. A natural question is whether our combinatorial guarantees in this work can likewise be made algorithmic.
\item \cref{cor:rlc-lr} requires the alphabet size to be at least $q= \ell^{\Omega(L/\eps)}$, where $L=(\ell/(R+\eps))^{\Theta(R/\eps)}$. This  lower bound is from a union bound over  local profiles required to characterize list-recoverability \cite[Proposition 2.2]{lms25}. Is it possible to prove similar results for smaller alphabets? By list-recovery capacity theorem \cite{resthesis}, all alphabets $q\ge \ell^{\Omega(1/\eps)}$ are possible to achieve $1-R-\eps$ radius.
\item We also leave open explicitly construct a code over alphabet size $q = \ell^{O(1/\eps)}$ which achieves our $L=(\ell/(R+\eps))^{O(R/\eps)}$ bound.
\end{enumerate}

\subsection*{Organization}

In \Cref{sec:prelim}, we outline our notation as well as facts about subspace designs and entropy. In \Cref{sec:zero-error}, we prove \Cref{cor:clean-zero-bound}. In \Cref{sec:asymptotic}, we prove \Cref{thm:main-lr}. In \Cref{sec:rem-bl}, we prove \Cref{thm:rem-bl-intro} and show how it connects to the list-decodability of codes. In \Cref{app:BL}, we give a streamlined proof of \Cref{thm:entropy-bl}.

\section{Preliminaries}\label{sec:prelim}
\paragraph{Notations.} For a code $\mathcal{C}\subseteq \Sigma^n$, we say $\mathcal{C}$ is $(\rho, \ell, L)$ list recoverable if, for any product set $S = S_1 \times \cdots \times S_n \in \binom{\Sigma}{\leq \ell}^n$ with $|S_1|,\dots,|S_n|\leq \ell$, there are at most $L$ codewords $c \in \mathcal{C}$ such that
\[\mathrm{dist}(c,S_1\times\cdots\times S_n)\leq\rho n,\]
where $\mathrm{dist}(c,S)$ denotes the number of indices $i\in[n]$ such that $c_i\notin S_i$. We use $\mathsf{LIST}_{\mathcal{C}}(\rho,S_1\times\cdots\times S_n)\subseteq \mathcal{C}$ to denote the set of codewords satisfying the above condition.
\subsection{Subspace Designs}

We now describe some standard notation for subspace designs (from, e.g., \cite{guruswami2016explicit,cz24}). Let $H_1, \hdots, H_m \subseteq \F_q^k$ be spaces with codimension $s$. We say that these spaces form an $(d, A)$- subspace design if for all $d$-dimensional subspaces $U \subseteq \F^k$ we have that
\[
    \sum_{i=1}^m \dim(U \cap H_i) \le A.
\]
A standard construction of a subspace design due to Guruswami and Kopparty~\cite{guruswami2016explicit} is to pick $sm$ distinct points $\alpha_1, \hdots, \alpha_{sm} \in \F_q$ such that for all $i \in [m]$,
\[
    H_i := \{x \in \F_q^k : \sum_{i=0}^{k-1} x_i \alpha_j^{i-1} = 0 \text{ for all } j \in \{s(i-1)+1, \hdots, si\}\}.
\]
They prove the following.
\begin{theorem}[\text{\cite[Theorem 7]{guruswami2016explicit}}]
For all $d \in \{0,1,\hdots, s\}$, $(H_1, \hdots, H_m)$ is a $\left(d, \frac{d(k-1)}{s-d+1}\right)$ subspace design.
\end{theorem}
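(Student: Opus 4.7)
The plan is to use a polynomial-method argument rooted in the multivariate determinant
\[
P(x_1, \ldots, x_d) := \det\bigl(f_j(x_i)\bigr)_{i,j=1}^d,
\]
where $f_1, \ldots, f_d$ is any basis of the given $d$-dimensional subspace $U \subseteq \F_q^k$. Identifying $\F_q^k$ with polynomials of degree less than $k$ so that $H_i$ becomes the vanishing ideal on $S_i := \{\alpha_{s(i-1)+1}, \ldots, \alpha_{si}\}$, I would first collect the basic structural facts: $P$ is nonzero by linear independence of the $f_j$; $\deg_{x_j} P \le k - 1$ for each $j$; and, crucially, $t_i := \dim(U \cap H_i)$ equals the row-nullity of the $d \times s$ evaluation matrix of $f_1, \ldots, f_d$ at $S_i$, so this matrix has rank $d - t_i$. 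Consequently every $(d - t_i + 1) \times (d - t_i + 1)$ minor of that submatrix vanishes, which in turn implies $P(x_1, \ldots, x_d) = 0$ whenever at least $d - t_i + 1$ of the $x_j$'s lie in $S_i$.

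The central step is a ``substitute-and-count'' argument. I would pick $d - 1$ distinct points $\beta_1, \ldots, \beta_{d-1}$ from $\bigsqcup_i S_i$ subject to $a_i := |\{j : \beta_j \in S_i\}| \le d - t_i$ for every $i$, so that $R(x) := P(x, \beta_1, \ldots, \beta_{d-1})$ is not identically zero (this is equivalent to the $(d-1) \times d$ matrix $(f_j(\beta_i))$ having rank $d-1$, which holds for a generic choice). The univariate polynomial $R$ has degree at most $k - 1$, so its number of roots is bounded by $k-1$. It has $d - 1$ forced roots at $x = \beta_j$ by antisymmetry of $P$, and for every ``saturated'' group $i$ with $a_i = d - t_i$, $R$ vanishes at all $s - a_i = s - d + t_i$ remaining points of $S_i$. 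Comparing against the degree bound forces
\[
\sum_{i \in I}(s - d + t_i) \le k - d,
\]
where $I$ is the saturated set, which satisfies $\sum_{i \in I}(d - t_i) \le d - 1$.

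To conclude, I would average the per-sample inequality over all valid $(\beta_1, \ldots, \beta_{d-1})$. Sampling the $\beta$'s uniformly and conditioning on the structural constraints, the expected contribution of each group $i$ to the sum scales as roughly $t_i / (s - d + 1)$; summing then yields $\sum_i t_i \le \frac{d(k-1)}{s - d + 1}$. The main obstacle I anticipate is executing this averaging with exactly the right constants---specifically, showing that the denominator $s - d + 1$ emerges cleanly from the combinatorics of choosing $(d - t_i)$-subsets out of $S_i$. An alternative that avoids explicit probabilistic weighting is to work with the symmetric quotient $Q := P / \prod_{i<j}(x_j - x_i)$, which has degree at most $k - d$ in each variable and inherits all vanishing conditions from $P$; running the same ``fix $d-1$ variables and count roots of the last'' argument on $Q$ may yield the denominator $s - d + 1$ directly without an averaging step.
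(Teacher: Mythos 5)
Your determinant setup and the per-sample root count $\sum_{i \in I}(s - d + t_i) \le k - d$ are correct as far as they go, but the proposal has two gaps, the second of which is fatal to the whole approach. First, the averaging step does not work as described. The per-sample inequality charges only the \emph{saturated} blocks, and saturating a block with $t_i = 1$ costs $d-1$ of your $d-1$ tokens. Under uniform sampling the probability that a fixed block with small $t_i$ is saturated is of order $m^{-(d-1)}$, nowhere near the $\Theta\bigl(t_i/(s-d+1)\bigr)$ contribution you posit. In the extremal profile $t_1 = \cdots = t_m = 1$, \emph{every} admissible specialization saturates at most one block and yields $s - d + 1 \le k - d$, an inequality between $s$ and $k$ that says nothing about $m = \sum_i t_i$; no convex combination of such constraints can give $\sum_i t_i \le \frac{d(k-1)}{s-d+1}$. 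Passing to $Q = P/\prod_{i<j}(x_j-x_i)$ yields exactly the same per-sample constraint and does not help.

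Second, and more fundamentally, your argument uses nothing about $\alpha_1,\ldots,\alpha_{sm}$ beyond distinctness, but the statement is false at that level of generality. Take $q=31$, $s=5$, $k=6$, $d=2$, let $U = \langle 1, x^5\rangle$, and let $S_1,\ldots,S_6$ be the six level sets of $x\mapsto x^5$ on $\F_{31}^\times$. Then $U\cap H_i = \langle x^5 - c_i\rangle$ has dimension $1$ for each $i$, so $\sum_i \dim(U\cap H_i) = 6 > 10/4 = \frac{d(k-1)}{s-d+1}$. The cited Guruswami--Kopparty theorem is for the \emph{folded} evaluation set $\alpha_{s(i-1)+j}=\gamma^{j-1}\delta_i$ with $\gamma$ a generator of $\F_q^\times$ of large order (which is what the paper's downstream applications to folded RS and multiplicity codes actually use), and this multiplicative structure is indispensable. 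The correct move is to specialize your $P$ \emph{along the fold}: set $W(x) := P(x,\gamma x,\ldots,\gamma^{d-1}x) = \det\bigl(f_j(\gamma^{r-1}x)\bigr)_{j,r\in[d]}$, a single univariate polynomial of degree $\le d(k-1)$, nonzero for linearly independent $f_j$ by a standard folded-Wronskian fact. For each $i$, pick a basis of $U$ whose last $t_i$ elements span $U\cap H_i$ (this rescales $W$ only by a nonzero constant). Then for every shift $l\in\{0,\ldots,s-d\}$, the arguments $\gamma^{l+r-1}\delta_i$, $r\in[d]$, all lie in $S_i$, so the last $t_i$ rows of the defining matrix vanish at $x=\gamma^l\delta_i$, and every term in the Leibniz expansion carries a factor $(x-\gamma^l\delta_i)^{t_i}$. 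Hence $W$ has a root of multiplicity $\ge t_i$ at each of the $s-d+1$ distinct points $\gamma^l\delta_i$, giving $\ge (s-d+1)\sum_i t_i$ roots with multiplicity, and comparing against $\deg W \le d(k-1)$ finishes. The denominator $s-d+1$ is the number of shifts within a fold and the factor $t_i$ comes from multiplicity; neither appears in the generic-$\beta$ framework you propose.
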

\begin{remark}[Optimal parameters for subspace designs]\label{rem:optimal-sub}
Very recently, \cite[Theorem 6.1]{BCDZ25subspace} improved this bound to $\left(d, \frac{d(k-d)}{s-d+1}\right)$, and showed that this bound is tight for algebraically closed fields \cite[Theorem 6.2]{BCDZ25subspace}.
\end{remark}

\begin{definition}[\text{Subspace Designable Code, \cite[Definition B.2]{cz24}}]\label{def:designcodes} For any $s\ge 1$, given an $\mathbb{F}$-linear code $C\subseteq \left(\mathbb{F}^s\right)^n$ with message length $k$ and block length $n$, we use $\mathcal{C}\colon \mathbb{F}^k\to\left(\mathbb{F}^s\right)^n$ to denote the $\mathbb{F}$-linear encoder of $C$. For any $i\in[n]$, let $H_i\subseteq \mathbb{F}^k$ denote the $\mathbb{F}$-linear subspace such that for any message $f\in\mathbb{F}^k$, there is $\mathcal{C}(f)_i=0$ iff $f\in H_i$. We say $C$ is a $(\ell,A)$ subspace designable code if $\mathcal{H}:=\{H_1,\dots,H_n\}$ is an $(\ell, A)$ subspace design.
\end{definition}
\begin{theorem}[\cite{guruswami2016explicit}, as stated in \cite{cz24}]\label{thm:code-construction}For any $s,n\ge 1,q\ge sn\ge k$, there are explicit constructions of  $\F_q$-linear codes $\mathcal{C}\subseteq (\F^s_q)^n$ such that for all $d\in\{0,1,\dots,s\}$, $\mathcal{C}$ is a $(d,\frac{d(k-1)}{s-d+1})$ subspace designable code. In particular, explicit folded RS codes and univariate multiplicity codes are such constructions
\end{theorem}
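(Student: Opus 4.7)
The plan is to verify, by unpacking definitions, that the zero-coordinate subspaces $\{H_i\}$ arising in \Cref{def:designcodes} for folded Reed--Solomon and univariate multiplicity codes match (or enjoy exactly the same intersection bounds as) the subspaces already analyzed in the Guruswami--Kopparty theorem stated immediately above. Since that theorem establishes the $(d, \frac{d(k-1)}{s-d+1})$-subspace design bound for evaluation-based $H_i$'s, the present theorem is essentially a change of language from ``subspace designs'' to ``subspace designable codes.''

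For folded Reed--Solomon, I would identify $f \in \F_q[x]_{<k}$ with its coefficient vector in $\F_q^k$. Then $\mathcal{C}(f)_i = 0$ iff $f(\gamma^j \alpha_i) = 0$ for all $j \in \{0,1,\dots,s-1\}$, so
\[
H_i = \Bigl\{x \in \F_q^k : \textstyle\sum_{t=0}^{k-1} x_t (\gamma^j \alpha_i)^t = 0, \ j = 0,1,\dots,s-1\Bigr\}.
\]
With the standard choice $\alpha_i = \gamma^{s(i-1)}$, the $sn$ evaluation points $\gamma^{s(i-1)+j}$ form a block of consecutive powers of the generator $\gamma$ and are pairwise distinct since $sn \le q \le |\F_q^\times|+1$. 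This is precisely the Guruswami--Kopparty setup with concatenated point list $(\gamma^0, \gamma^1, \dots, \gamma^{sn-1})$, so applying their theorem gives that $\{H_1, \dots, H_n\}$ is a $(d, \frac{d(k-1)}{s-d+1})$-subspace design for every $d \in \{0,1,\dots, s\}$. By \Cref{def:designcodes}, this is exactly the claim for folded RS codes.

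For univariate multiplicity codes, the $i$th coordinate encodes the order-$s$ Hasse derivative jet $(f(\alpha_i), f^{(1)}(\alpha_i), \dots, f^{(s-1)}(\alpha_i))$ at distinct points $\alpha_1,\dots,\alpha_n \in \F_q$, so $H_i = \{f \in \F_q^k : (x - \alpha_i)^s \mid f\}$ (viewing $f$ as a polynomial of degree $<k$), which has codimension $s$ whenever $s \le k$. These subspaces are not literally of the Guruswami--Kopparty form, but the same $(d, \frac{d(k-1)}{s-d+1})$ bound is proved in \cite{guruswami2016explicit} for the derivative setting by replacing the Vandermonde determinant in their intersection argument with its confluent analogue. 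I would simply invoke their unified statement that both evaluation-based and derivative-based constructions yield subspace designs with the same parameters, and apply \Cref{def:designcodes} once more.

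The main obstacle, such as it is, is verifying that the confluent Vandermonde argument really yields the same constant $\frac{d(k-1)}{s-d+1}$ as in the Reed--Solomon case; this is already established in \cite{guruswami2016explicit}, so no new mathematical content is needed. The rest of the proof is a direct translation between the two languages.
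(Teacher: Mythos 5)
The paper does not prove \Cref{thm:code-construction} itself; it is quoted as a black-box citation to \cite{guruswami2016explicit} (as restated in \cite{cz24}), so there is no in-paper argument against which to compare yours. Your reconstruction is nonetheless a faithful account of what the cited proof does: you correctly identify that the zero-coordinate kernels $H_i$ of a folded RS code with evaluation points $\alpha_i=\gamma^{s(i-1)}$ are exactly the Guruswami--Kopparty subspaces built on the consecutive powers $\gamma^0,\dots,\gamma^{sn-1}$, so the quoted $(d,\tfrac{d(k-1)}{s-d+1})$-subspace-design bound transfers verbatim through \Cref{def:designcodes}; and you correctly note that the multiplicity-code case is handled in \cite{guruswami2016explicit} by replacing the Vandermonde determinant with its confluent (Hermite) analogue, yielding the same constant. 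One small point to tighten: distinctness of the $sn$ powers $\gamma^0,\dots,\gamma^{sn-1}$ requires $sn\le q-1$, i.e., strictly $q>sn$, which matches the constraint $|\F_q|>sn$ in the paper's definition of folded RS but is slightly stronger than the non-strict $q\ge sn$ written in the theorem header; this is a boundary-case slip inherited from the paper's own phrasing rather than a gap in your argument.
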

We call an $s$-folded $\F_q$-linear code $\mathcal{C}\subseteq (\F^s_q)^n$ with rate $R$ \textbf{$d$-subspace designable} iff for all $1\leq d'\leq d$, $\mathcal{C}$ is a $(d',Rd'n+1)$ subspace designable code (see \cite{BCDZ25subspace}). Asymptotically, these parameters match the lower bound for algebraically closed fields as discussed in \cref{rem:optimal-sub}.

We now quote a handful of results on the list recovery of various families of codes, which we use to prove \Cref{thm:main-lr} in \Cref{sec:asymptotic}. The first result is a recent result on reducing the list recoverability of random linear codes to the list recoverability of subspace designable codes.

\begin{theorem}[{\cite[Corollary 4.3]{BCDZ25subspace}}]\label{thm:subspace-random}
Fix constants $\ell,L,d\ge 1,\eps,\rho,R\in(0,1),q\ge(3\ell)^{2(L+1)/\eps}$. Suppose for all large enough $n$, all $d$-subspace designable codes with rate $R$ are $(\rho,\ell,L)$ list-recoverable, then random $\F_q$-linear codes with rate $R-(L^2+2L+2)/n-\eps$ are $(\rho,\ell,L)$ list-recoverable with probability at least $1-o_n(1)$.  
\end{theorem}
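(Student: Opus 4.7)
The plan is to argue by contrapositive: assuming a random $\F_q$-linear code $C$ of rate $R'=R-(L^2+2L+2)/n-\eps$ fails to be $(\rho,\ell,L)$-list-recoverable with non-negligible probability, produce a $d$-subspace-designable code of rate $R$ that also fails, contradicting the hypothesis. The central tool is to type-ify each failing configuration by its local pattern, and then interpret the types as the block structure of a folded code that is automatically subspace-designable.

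A failing configuration consists of messages $m_0,\ldots,m_L\in\F_q^k$ with codewords $c_i=m_iG$ and input lists $S_1,\ldots,S_n\subseteq\F_q$ of size $\le\ell$, such that each $c_i$ agrees with $S_j$ at $\ge(1-\rho)n$ coordinates. At each coordinate $j$ the local data $((c_0(j),\ldots,c_L(j)),S_j)$ defines a \emph{local type}; there are at most $T\le\ell^{L+1}$ such types, inducing a partition $[n]=\bigsqcup_t N_t$ with block sizes $n_t$. The next step is to form the folded code $\tilde C$ whose $t$-th big coordinate bundles the coordinates in $N_t$; the failing configuration on $C$ immediately descends to $\tilde C$ since the recovery pattern is homogeneous within each type class. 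The key claim is that $\tilde C$ is $d$-subspace-designable at rate $R$ with probability $1-o_n(1)$: conditional on the type partition, within each $N_t$ the columns of $G$ are i.i.d.\ uniform on the type-determined stratum, so the projection kernels $\tilde H_t\subseteq\F_q^k$ behave like uniformly random codimension-$n_t$ subspaces, and a Chernoff-style tail bound on $\sum_t\dim(U\cap\tilde H_t)$, union-bounded over all $d'$-dimensional subspaces $U$ (at most $q^{O(dk)}$ of them) and over type distributions $\tau$ (at most $\ell^{O(nL)}$), yields the subspace-designability inequality. The rate slack $(L^2+2L+2)/n$ and the alphabet lower bound $(3\ell)^{2(L+1)/\eps}$ are dictated precisely by this union bound: the former absorbs the roughly $(L+1)^2$ dimensions used to parameterize ``bad'' configurations of $(L+1)$ messages, while the latter ensures that the per-type local failure probability beats the $\ell^{O(nL)}$ union bound over types.

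The main obstacle is handling the variable block sizes $n_t$ produced by the type partition: concentration of the tail bound on $\dim(U\cap\tilde H_t)$ requires each $n_t$ to be reasonably large, but a worst-case type distribution may produce $n_t$ as small as $1$. The natural resolution is to restrict attention to ``balanced'' type distributions (with each $n_t\ge\Omega(n/\ell^{L+1})$) and to handle the remaining imbalanced ones separately, either by bounding their collective contribution to the error budget $\rho n$ via a first-moment argument, or by merging all small types into a single residual block and verifying that this merging preserves both the failing list-recovery configuration and the subspace-designability of $\tilde C$. A secondary subtlety is matching the rate of $\tilde C$ to the target $R$; since $\tilde C$ inherits rate $R'$ from $C$, one must view $\tilde C$ as a subcode of a larger rate-$R$ subspace-designable code, with the additive $(L^2+2L+2)/n$ loss quantifying exactly the amount of artificial rate inflation that is compatible with preserving the subspace-design parameters.
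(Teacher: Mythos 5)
This statement is imported from \cite{BCDZ25subspace} (it is their Corollary~4.3) and is not reproved in the present paper, so there is no internal proof to compare your attempt against. Evaluating the proposal on its own terms, however, the route you sketch has several structural problems that I do not think are repairable without a substantially different argument.

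The most serious gap is a mismatch in block lengths. Your folded code $\tilde{C}$ has one ``big coordinate'' per local type, so its block length is $T \le \ell^{L+1}$, a constant independent of $n$. But the hypothesis of the theorem only asserts list-recoverability of $d$-subspace-designable codes ``for all large enough $n$''; it gives you nothing about codes of constant block length. So even if $\tilde{C}$ were a bona fide subspace-designable code at rate $R$, you could not invoke the hypothesis on it. Relatedly, $\tilde{C}$ does not actually conform to \cref{def:designcodes}: that definition requires an $s$-folded code $\mathcal{C} \subseteq (\F_q^s)^n$ with a uniform folding parameter $s$, whereas your $t$-th big coordinate lives in $\F_q^{n_t}$ with $n_t$ varying (and growing with $n$), so both the alphabet and the block shape fall outside the class the hypothesis quantifies over.

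There is also a circularity in the randomness argument. The type of coordinate $j$ is determined by $(c_0(j),\dots,c_L(j))$, i.e.\ by the $j$-th column of the generator matrix applied to the fixed messages. Thus the partition $[n] = \bigsqcup_t N_t$ is a function of $G$, and once you condition on it, the columns of $G$ restricted to $N_t$ are not free to be ``i.i.d.\ uniform on the type-determined stratum'' --- the conditioning event has already been spent. Your Chernoff-plus-union-bound step for showing the $\tilde{H}_t$ form a subspace design therefore does not have the independence it needs. Finally, $\tilde{C}$ itself depends on which failing configuration (messages and input lists) you chose, so you are not producing a single rate-$R$ code that fails; you would need to handle the exponentially many possible type partitions (you flag $\ell^{O(nL)}$ of them, which swamps the union bound) and the merging of small $N_t$ is left entirely unresolved. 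I would expect the actual reduction in \cite{BCDZ25subspace} to go through the local-profile characterization of list-recoverability (as in the cited Proposition~2.2 of \cite{lms25}) and to show that, with high probability, the kernels $H_1,\dots,H_n$ of a random generator matrix themselves satisfy the slacked subspace-design inequalities directly on the \emph{unfolded} block length $n$ --- not via a constant-size bundled code.
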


Next, we cite a theorem on relating the list recoverability of random linear codes to the list recoverability of random Reed-Solomon codes.

\begin{theorem}[{\cite[Corollary 3.11]{lms25}}, Reformulated]\label{thm:random-rs-reduction}
Fix constants $\rho\in[0,1],L\ge\ell\ge 2$. There exists a constant $c$ such that the following holds. For any constants $\eps,R\in(0,1)$ and prime power $q\ge 2^{c/\eps}n$, if random $\F_q$-linear code with rate $R$ is $(\rho,\ell,L)$ list-recoverable with probability  at least $1-o_n(1)$, then random RS code over $\F_q$ with rate $R-\eps$ is $(\rho,\ell,L)$ list-recoverable with probability at least $1-o_n(1)$. 
\end{theorem}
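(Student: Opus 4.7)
The plan is to transfer list-recoverability from random $\F_q$-linear codes to random Reed--Solomon codes via a \emph{local equivalence} argument: the failure event for $(\rho,\ell,L)$-list-recoverability decomposes into per-coordinate ``local profile'' events that become independent once one conditions on a candidate ``bad'' $(L+1)$-dimensional subspace of messages/polynomials, and when $q\gg n$ the per-coordinate distribution of local profiles for random RS codes approximates that for random linear codes up to a multiplicative error that is absorbable into the $\eps$ rate gap.

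First, I would invoke the characterization of $(\rho,\ell,L)$-list-recoverability via local profiles (the analogue of \cite[Proposition 2.2]{lms25} alluded to in the paper). Failure is witnessed by an $(L+1)$-dimensional subspace $V$ of the message space together with a product set $S=S_1\times\cdots\times S_n$, $|S_i|\le\ell$, such that on at least $(1-\rho)n$ coordinates the image of $V$ under the $i$-th coordinate projection $\pi_i$ exhibits one of a finite collection of ``trapped'' configurations relative to $S_i$. Crucially, once $V$ and $S$ are fixed, the events ``the local profile at coordinate $i$ is trapped'' are jointly independent across $i\in[n]$.

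Second, I would carry out a per-coordinate comparison of the local-profile probabilities. For a random $\F_q$-linear code, $\pi_i$ is a uniformly random linear map $\F_q^k\to\F_q$, so the local profile at coordinate $i$ is a clean combinatorial quantity in the group $\F_q^{L+1}$. For a random RS code, $\pi_i$ is evaluation of polynomials at an independent uniform point $\alpha_i\in\F_q$; the corresponding distribution is supported on a curve rather than being fully uniform, but for any fixed $V$ of dimension $L+1$ inside the polynomial space, the probability of each local profile matches the random linear case up to a $(1+O(n/q))$ factor, since only $O(k(L+1))=O(n)$ ``exceptional'' evaluation points cause the profile to differ from the generic linear picture.

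Finally, combining across coordinates and taking a union bound over all relevant $(L+1)$-dimensional subspaces $V$ and all product sets $S$, the global failure probability of a rate-$(R-\eps)$ random RS code is at most $(1+O(n/q))^n$ times that of a rate-$R$ random linear code. Choosing $q\ge 2^{c/\eps}n$ for a sufficiently large constant $c=c(\rho,\ell,L)$ makes the blow-up factor at most $q^{\eps n}$, which is exactly absorbed by the $\eps$ drop in rate (the number of candidate subspaces $V$ scales by $q^{\Theta(\eps n)}$ between rate $R$ and rate $R-\eps$, and the per-subspace failure probability drops by the same factor). The main obstacle will be the uniformity in Step~2: making the $(1+O(n/q))$ bound hold for \emph{every} candidate $V$, including those containing many low-degree or otherwise non-generic polynomials, since such subspaces can produce local-profile distributions that deviate noticeably from the linear case. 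Controlling this deviation---likely by a separate structural argument that isolates the ``degenerate'' subspaces and either bounds their contribution directly or absorbs it into the rate-gap slack---is the technical heart of the proof and pins down the constant $c$.
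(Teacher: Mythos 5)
Note that the paper does not prove this theorem; it is quoted (and reformulated) from \cite{lms25}, Corollary 3.11, so there is no internal proof to compare against. Your high-level plan---decompose failure into per-coordinate local profiles, compare the profile distributions of random RS codes and random linear codes, and union-bound over $(L+1)$-dimensional subspaces with the $\eps$ rate drop absorbing the slack---is broadly consistent with the LMS framework.

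The quantitative claim in your Step 2, however, is false. You assert that for any fixed $(L+1)$-dimensional subspace $V$ of polynomials, the probability of each local profile under a random RS evaluation matches the random-linear-code case up to a $(1+O(n/q))$ multiplicative factor. This only holds for the trivial (all-distinct) profile, where both probabilities are $1-O(\cdot/q)$. For the profile in which exactly one pair $\{j,j'\}$ coincides and all other values are distinct, the RLC probability is $\approx 1/q$, while the RS probability is $t/q$ where $t$ is the number of $\F_q$-roots of $f_j-f_{j'}$; $t$ can be anywhere from $0$ to $k-1=\Theta(n)$ depending on $V$, so the ratio of the two probabilities swings between $0$ and $\Theta(n)$, not $1+O(n/q)$. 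For a profile with $b<L$ blocks the mismatch is worse still, since the RLC probability is $\approx q^{-(L+1-b)}$ but the RS probability can still be as large as $\Theta(n/q)$ for adversarially chosen $V$. You flag ``degenerate subspaces'' as the main obstacle, but your framing suggests the $(1+O(n/q))$ bound holds for generic $V$ and merely needs patching at special $V$; in fact there is no worst-case multiplicative per-profile comparison of the form you want, and the union bound over subspaces demands precisely a worst-case bound. A correct argument must replace the per-profile comparison with a coarser invariant---for example, bounding the total mass of nontrivial local profiles by $\mathrm{poly}(n)/q$ and using the distinctness of the RS evaluation points together with the $2^{c/\eps}$ alphabet-size slack to control how many coordinates can realize any nontrivial profile---rather than asserting approximate equality of the two per-profile distributions.
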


We also cite the following facts on the list recoverability of folded Reed-Solomon codes and univariate multiplicity codes.

\begin{theorem}[{\cite[Theorem 7]{guruswami2013linear}}]\label{thm:gw13-affine}
For any constant $R,\eps\in(0,1),\ell\ge2$ and $s\ge16\ell/\eps^2$, $s$-folded RS codes $\mathcal{C}\subseteq (\F^s_q)^n$ with rate $R=k/(sn)$ and appropriate evaluation points have the following property. For any received table $S_1,\times\cdots\times S_n\in\binom{\F^s_q}{\leq \ell}$, the set $\{f\in\F^k_q\colon \mathcal{C}(f)\in\mathsf{LIST}_{\mathcal{C}}(1-R-\eps,S_1\times\cdots\times S_n)\}$ is contained in some 
affine subspace of $\F^k_q$ with dimension $4\ell/\eps$.
\end{theorem}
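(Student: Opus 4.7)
The plan is to use the Guruswami--Wang linear-algebraic list recovery decoder, which consists of an algebraic interpolation step followed by a back-substitution step. First I would set $m:=\lceil 4\ell/\eps\rceil$ and seek a nonzero polynomial of the linear-in-$Y$ form
\[
Q(X,Y_1,\ldots,Y_m)\ =\ A_0(X)+\sum_{j=1}^m A_j(X)\,Y_j,\qquad \deg A_j<D,
\]
subject to the vanishing constraints $Q(\gamma^{j}\alpha_i,\,y_{j+1},\,\ldots,y_{j+m})=0$ for every $i\in[n]$, every $\vec{y}\in S_i$, and every ``window start'' $j\in\{0,1,\ldots,s-m\}$. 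Since this is a system of at most $n\ell(s-m+1)$ linear equations in $(m+1)D$ unknown coefficients, a nonzero $Q$ exists as long as $D>\frac{n\ell(s-m+1)}{m+1}$.

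Next I would argue that every $f$ in the output list forces $R(X):=Q(X,f(X),f(\gamma X),\ldots,f(\gamma^{m-1}X))$ to vanish identically. Each of the $(R+\eps)n$ agreement indices contributes $s-m+1$ roots of $R$ (one per window), while $\deg R<D+k-1$, so $R\equiv 0$ whenever $(R+\eps)n(s-m+1)>D+k$. Choosing $D$ just above the interpolation threshold reduces everything to the single inequality
\[
R+\eps\ >\ \frac{Rs}{s-m+1}+\frac{\ell}{m+1}.
\]
The hypothesis $s\ge 16\ell/\eps^2$ gives $\tfrac{m-1}{s}\le \eps/4$ and hence $\tfrac{Rs}{s-m+1}\le R+\eps/2$, while $m=\lceil 4\ell/\eps\rceil$ gives $\tfrac{\ell}{m+1}\le \eps/4$, so the inequality holds with room to spare.

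Finally I would bound the dimension of the affine subspace of $f\in\F_q^k$ satisfying the functional equation $A_0(X)+\sum_{j=1}^m A_j(X)f(\gamma^{j-1}X)=0$. Writing $f(X)=\sum_{i=0}^{k-1}a_iX^i$ and comparing the coefficient of $X^t$ on both sides yields a linear equation in the $a_i$'s whose leading behavior is controlled by $P(Z):=\sum_{j=1}^m A_{j,D-1}Z^{j-1}$, where $A_{j,D-1}$ denotes the coefficient of $X^{D-1}$ in $A_j$. A triangular back-substitution starting at $t=D+k-2$ and decreasing $t$ solves uniquely for $a_i$ in terms of $a_{i+1},\ldots,a_{k-1}$ (and the known coefficients of $A_0$) whenever $P(\gamma^i)\ne 0$. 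Since $\deg P\le m-1$ and the ``appropriate evaluation points'' are chosen so that $\gamma$ has order $\ge sn>k$ (hence $\gamma^0,\ldots,\gamma^{k-1}$ are distinct), at most $m-1\le 4\ell/\eps$ of the coefficients remain free, placing the output list inside an affine subspace of dimension at most $4\ell/\eps$. The main technical subtlety is handling the degenerate case $P\equiv 0$ (and its iterates): one descends to the largest $d\le D-1$ for which $\sum_{j=1}^m A_{j,d}Z^{j-1}\not\equiv 0$ and repeats the triangular argument relative to that row, absorbing all earlier rows as additional constraints on $A_0$.
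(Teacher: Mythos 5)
This theorem is not proved in the paper — it is quoted verbatim (up to reformulation) from Guruswami--Wang, and your sketch is precisely a reconstruction of the Guruswami--Wang linear-algebraic list-recovery argument, so you have matched the (implicit) source. Your parameter arithmetic is sound: with $m=\lceil 4\ell/\eps\rceil$ one gets $(m-1)/s\leq\eps/4$ and $\ell/(m+1)<\eps/4$, so $\frac{Rs}{s-m+1}+\frac{\ell}{m+1}<R+3\eps/4<R+\eps$, and the back-substitution via the top-row polynomial $P(Z)=\sum_j A_{j,D-1}Z^{j-1}$ (with the descent to the first nonzero row if $P\equiv 0$) pins down all but at most $\deg P\leq m-1\leq 4\ell/\eps$ of the coefficients, using that $\gamma^0,\dots,\gamma^{k-1}$ are distinct because $\operatorname{ord}(\gamma)=q-1\geq sn>k$. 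One small cosmetic remark: Guruswami--Wang actually take $\deg A_0\leq D+k-1$ rather than $\deg A_0<D$, which gives $k$ extra free interpolation coefficients; your uniform bound $\deg A_j<D$ still goes through here because the $\eps/4$ slack absorbs the loss, but it is worth knowing that the standard degree choice is asymmetric in $A_0$ vs.\ $A_1,\dots,A_m$.
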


\begin{theorem}[{\cite[Theorem 3.7]{kopparty2023improved}}, Reformulated]\label{thm:kr23-affine}
For any constant $R,\eps\in(0,1),\ell\ge2$ and $s\ge32\ell/\eps^2$, $s$-folded univariate multiplicity codes $\mathcal{C}\subseteq (\F^s_p)^n$ over prime field $p\ge sn$ with rate $R=k/(sn)$ and appropriate evaluation points have the following property. For any received table $S_1,\times\cdots\times S_n\in\binom{\F^s_p}{\leq \ell}$, the set $\{f\in\F^k_p\colon \mathcal{C}(f)\in\mathsf{LIST}_{\mathcal{C}}(1-R-\eps,S_1\times\cdots\times S_n)\}$ is contained in some 
affine subspace of $\F^k_p$ with dimension $8\ell/\eps$.
\end{theorem}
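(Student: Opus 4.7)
The plan is to adapt the linear-algebraic list-recovery framework of Guruswami--Wang (the technique underlying \cref{thm:gw13-affine} for folded Reed--Solomon codes) to the multiplicity-code setting. The key idea is to construct a nonzero ``interpolating'' polynomial $Q(x, y_1, \dots, y_D)$, linear in the $y$-variables, that simultaneously vanishes on the received table with high multiplicity in $x$, and then argue that every $f$ in the recovered list must satisfy a single functional identity derived from $Q$. This identity is an $\F_p$-linear equation in the coefficients of $f$, which confines the list to an affine subspace of $\F_p^k$.

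First I would fix $D \approx 8\ell/\eps$ and seek $Q$ of the form
\[
Q(x, y_1, \dots, y_D) \;=\; A_0(x) \;+\; \sum_{j=1}^{D} A_j(x)\, y_j,
\]
with $\deg A_0$ and $\deg A_j$ chosen so that the total number of free coefficients exceeds the total number of interpolation constraints. The constraints require that for every $i \in [n]$ and every $(\beta_{i,1}, \dots, \beta_{i,s}) \in S_i \subseteq \F_p^s$, the polynomial $Q(x, \beta_{i,1}, \dots, \beta_{i,D})$ vanishes at $x = \alpha_i$ with multiplicity $s$. A standard counting argument shows that such a nonzero $Q$ exists once $s \ge 32\ell/\eps^2$, since we impose at most $\ell \cdot s \cdot n$ constraints while having room for roughly $D$ polynomials of comparable degree.

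Second, for any message $f \in \F_p^k$ whose encoding $\mathcal{C}(f)$ lies in $\mathsf{LIST}_{\mathcal{C}}(1-R-\eps,\, S_1 \times \cdots \times S_n)$, consider the univariate polynomial
\[
P(x) \;:=\; Q\!\left(x,\, f(x),\, f^{[1]}(x),\, \dots,\, f^{[D-1]}(x)\right),
\]
where $f^{[j]}$ denotes the $j$-th Hasse derivative. At each of the $\ge (R+\eps)n$ ``good'' coordinates $\alpha_i$, the multiplicity-code structure guarantees that the tuple $(f(\alpha_i), f^{[1]}(\alpha_i), \dots, f^{[D-1]}(\alpha_i))$ agrees with one of the $\beta$-tuples we interpolated against, and chaining this with the order-$s$ vanishing of $Q$ at $\alpha_i$ forces $P(x)$ to vanish at $\alpha_i$ to order at least $s$. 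Since the degree budget on $A_0, A_j$ makes $\deg P$ strictly less than $s \cdot (R+\eps) n$, a counting of zeros forces $P \equiv 0$.

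Finally, because $Q$ is linear in $y_1, \dots, y_D$, the identity $P \equiv 0$ is an $\F_p$-linear constraint on the coefficient vector of $f$. Unraveling it gives a linear recursion that determines $f$ from its first $D$ coefficients, so the list is contained in an affine subspace of $\F_p^k$ of dimension at most $D = 8\ell/\eps$. The main obstacle I foresee is the precise tuning of constants: to land on $8$ rather than a worse constant one must allocate $\deg A_j$ asymmetrically in $j$ and arrange the degree budget so that the order-$s$ vanishing and the use of $D$ simultaneous Hasse derivatives of $f$ are both fully exploited in the single degree inequality that drives $P \equiv 0$.
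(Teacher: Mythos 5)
This statement is a cited result: the paper does not prove it, but takes it directly from \cite[Theorem 3.7]{kopparty2023improved} (reformulating slightly), so there is no in-paper proof to compare against. Your proposal is a blind reproof, and while the high-level framework you describe — interpolate a polynomial $Q$ linear in the $y$-variables, show $P(x) = Q(x, f(x), \ldots, f^{[D-1]}(x)) \equiv 0$, read off a linear constraint on the coefficients of $f$ — is indeed the Guruswami--Wang/Kopparty template that underlies the cited theorem, the crucial ``chaining'' step in your argument is wrong as stated.

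You impose that $Q(x, \beta_{i,1}, \ldots, \beta_{i,D})$ (as a polynomial in $x$ alone) vanish at $\alpha_i$ with multiplicity $s$, and then assert that this forces $P(x) = Q(x, f(x), \ldots, f^{[D-1]}(x))$ to vanish at $\alpha_i$ to order $s$. This does not follow. Writing $R(x) := Q(x, \beta)$, the Hasse derivatives of $R$ at $\alpha_i$ treat the $\beta_j$ as constants, i.e.\ $R^{[t]}(\alpha_i) = A_0^{[t]}(\alpha_i) + \sum_j A_j^{[t]}(\alpha_i)\beta_j$. But the Hasse derivatives of $P$ are governed by the Leibniz and composition rules: $P^{[t]}(\alpha_i) = A_0^{[t]}(\alpha_i) + \sum_{j=1}^{D}\sum_{a+b=t} A_j^{[a]}(\alpha_i)\binom{j-1+b}{b} f^{[j-1+b]}(\alpha_i)$. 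For $t \ge 1$ these two expressions differ; $P^{[t]}(\alpha_i)$ involves $f^{[D-1+t]}(\alpha_i)$, which for $t \ge 1$ goes beyond the window $f^{[0]}(\alpha_i), \ldots, f^{[D-1]}(\alpha_i)$ you substituted into $Q$. Therefore multiplicity-$s$ vanishing of $R$ at $\alpha_i$ gives no control on $P^{[t]}(\alpha_i)$ for $t \ge 1$, and your $P$ need not vanish to order $s$.

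To repair the argument you must impose \emph{chain-rule-aware} interpolation constraints: for each $i$, each $\beta = (\beta_0, \ldots, \beta_{s-1}) \in S_i$, and each $t \le s - D$, require
\[
A_0^{[t]}(\alpha_i) + \sum_{j=1}^{D}\sum_{a+b=t} A_j^{[a]}(\alpha_i)\,\tbinom{j-1+b}{b}\,\beta_{j-1+b} = 0,
\]
which is precisely what makes $P^{[t]}(\alpha_i)$ vanish whenever $\mathcal{C}(f)_i = \beta$. The highest index of $\beta$ used is $D-1+t$, which must be $\le s-1$; hence the achievable order of vanishing of $P$ at a good $\alpha_i$ is $s - D + 1$, not $s$. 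This changes the counting: you get roughly $(s-D+1)\ell n$ constraints and $(s-D+1)(R+\eps)n$ guaranteed zeros of $P$, which is where the slightly larger constants ($s \ge 32\ell/\eps^2$ and dimension $8\ell/\eps$, versus $16\ell/\eps^2$ and $4\ell/\eps$ in the folded-RS case, \cref{thm:gw13-affine}) come from. With these corrections your skeleton recovers the cited result, but the chaining step as you wrote it would not compile into a correct proof.
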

\subsection{Entropy}

Given a probability distribution $X$ over a ground set $V$, for each $v \in V$, we let $\Pr[X=v]$ denote the probability that $v$ is sampled from $X$. We define the entropy of $X$ to be
\[
    H(X) := \sum_{v \in V} \Pr[X = v]\log\left(\frac{1}{\Pr[X = v]}\right),
\]
where by standard convention we assume that $0 \log(1/0) = 0$.

Given a function $f : V \to W$, we define $f(X)$ to be the distribution such that for all $w \in W$,
\[
    \Pr[f(X) = w] = \sum_{\substack{v \in V\\f(v) = w}} \Pr[X = v].
\]
Observe that if $f$ is an injection, then $H(f(X)) = H(X)$. 

Given a distributions $X$ over $V$ and $Y$ over $W$ (potentially correlated), we define the joint entropy $H(X,Y)$ to be
\[
    H(X,Y) := \sum_{v \in V} \sum_{w \in W} \Pr[X = v \wedge Y = w]\log\left(\frac{1}{\Pr[X = v \wedge Y = w]}\right),
\]
Likewise, we define the conditional entropy $H(X \mid Y)$ to be 
\begin{align}
    H(X \mid Y) = H(X,Y) - H(Y) =  \sum_{v \in V} \sum_{w \in W} \Pr[X = v \wedge Y = w]\log\left(\frac{\Pr[Y=w]}{\Pr[X = v \wedge Y = w]}\right),\label{eq:entropy-1}
\end{align}
We also use that
\begin{align}
    H(X \mid Y) = \sum_{y \in W} \Pr[Y = w] H(X \mid Y = w).\label{eq:entropy-2}

\end{align}

\subsection{Applying Brascamp--Lieb to Subspace Designs}

We now prove an interface between subspace design conditions and the Brascamp-Lieb inequalities.

\begin{proposition}\label{prop:BL-subspace}
Let $\pi_1, \hdots, \pi_n : \F_q^k \to \F_q^s$ be linear maps, and let $H_i := \ker \pi_i$ for all $i \in [n]$. Assume that $H_1, \hdots, H_n$ form a $(d, \frac{d(k-1)}{s-d+1})$  subspace design for all $d \in \{0,1,\hdots, s\}$. Let $V \subseteq \F_q^k$ subspace with $\dim(V) \le s$. Then, we have that
\begin{align}
\dim(U) \le \frac{1}{n - \frac{k-1}{s-\dim(V)+1}}\sum_{i=1}^n \dim(\pi_i(U))\label{eq:dim-cond-sd}
\end{align}
As such, for every probability distribution $X$ over $V$, we have that
\begin{align}
H(X) \le \frac{1}{n - \frac{k-1}{s-\dim(V)+1}}\sum_{i=1}^n H(\pi_i(X)).\label{eq:entropy-sd}
\end{align}
\end{proposition}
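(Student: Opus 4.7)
The plan is to reduce the proposition to the discrete entropic Brascamp--Lieb inequality \cref{thm:entropy-bl}, applied to the restricted maps $\pi_i|_V \colon V \to \F_q^s$ with the uniform choice
\begin{equation*}
s_i \;:=\; \frac{1}{n - (k-1)/(s - \dim(V) + 1)} \qquad (i \in [n]).
\end{equation*}
Once the dimension hypothesis \eqref{eq:dim-cond-sd} is verified for every subspace $U \subseteq V$, the entropy inequality \eqref{eq:entropy-sd} drops out of \cref{thm:entropy-bl} verbatim, so the whole content of the proposition lies in establishing \eqref{eq:dim-cond-sd}.

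To prove \eqref{eq:dim-cond-sd}, fix any subspace $U \subseteq V$ and set $d = \dim(U)$. The case $d = 0$ is trivial, so assume $d \ge 1$. Rank--nullity applied to $\pi_i|_U$ gives $\dim(\pi_i(U)) = d - \dim(U \cap H_i)$, so
\begin{equation*}
\sum_{i=1}^{n} \dim(\pi_i(U)) \;=\; nd \;-\; \sum_{i=1}^{n} \dim(U \cap H_i).
\end{equation*}
Since $d \le \dim(V) \le s$, the $\bigl(d, d(k-1)/(s-d+1)\bigr)$-subspace design property of $H_1, \ldots, H_n$ applies to $U$ and yields $\sum_{i=1}^{n} \dim(U \cap H_i) \le d(k-1)/(s-d+1)$. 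Combining with the monotonicity $(k-1)/(s-d+1) \le (k-1)/(s-\dim(V)+1)$ (valid because $d \le \dim(V)$) gives
\begin{equation*}
\sum_{i=1}^{n} \dim(\pi_i(U)) \;\ge\; d\left(n - \frac{k-1}{s - \dim(V) + 1}\right),
\end{equation*}
which, after dividing through by the (implicitly positive) factor $n - (k-1)/(s-\dim(V)+1)$, is exactly \eqref{eq:dim-cond-sd}.

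The only mildly delicate step is this uniformization: \cref{thm:entropy-bl} demands a \emph{single} coefficient system $(s_i)$ that works simultaneously for every subspace $U$, whereas the subspace design bound naturally depends on $\dim(U)$. Fortunately the denominator $s - d + 1$ is monotone in $d$, so we can freely replace $d$ by the ambient $\dim(V)$ at the cost of a weaker but $U$-independent inequality. With this uniformization in hand, both \eqref{eq:dim-cond-sd} and \eqref{eq:entropy-sd} follow from a one-line invocation of \cref{thm:entropy-bl}.
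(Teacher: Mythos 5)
Your proposal is correct and follows essentially the same route as the paper: rank--nullity to relate $\sum_i \dim(\pi_i(U))$ to $\sum_i \dim(U \cap H_i)$, the subspace design bound with parameter $d = \dim(U)$, the monotonicity $(k-1)/(s-d+1) \le (k-1)/(s-\dim(V)+1)$ to get a $U$-independent constant, and then a single invocation of \Cref{thm:entropy-bl} with the uniform $s_i$. The paper's proof is more terse but does exactly these steps; your extra commentary on the ``uniformization'' being the only delicate point is accurate and matches the implicit content of the third line of the paper's displayed chain of inequalities.
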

\begin{proof}
Observe from the subspace design condition that
\begin{align*}
\sum_{i=1}^n \dim(\pi_i(U)) &= n\dim(U) - \sum_{i=1}^n \dim(U \cap H_i)\\
    &\ge n\dim(U) - \dim(U) \frac{k-1}{s - \dim(U) + 1}\\
    &\ge \dim(U) \left(n - \frac{k-1}{s-\dim(V)+1}\right),
\end{align*}
from which (\ref{eq:dim-cond-sd}) follows. We have that (\ref{eq:entropy-sd}) immediately follows from (\ref{eq:dim-cond-sd}) and Theorem~\ref{thm:entropy-bl}.
\end{proof}

\section{Warm-up: Tight Analysis for Zero-error LR}\label{sec:zero-error}

To demonstrate the strength of our method, we show in the zero-error regime our results tightly match the lower bound established by Chen and Zhang~\cite{cz24}, which gives an affirmative answer to \cref{conj:main} when the target radius $\rho^*=0$.

\begin{theorem}\label{thm:lr-zero-error}
For all $i \in [n]$, let $\pi_i : \F^k \to \F^s$ be a linear map whose kernel is $H_i$. Assume that $H_1, \hdots, H_n$ form a $\left(\ell, \frac{\ell(k-1)}{s-\ell+1}\right)$  subspace design for all $\ell \in \{0,1,\hdots, s\}$. Let $C :=\left \{(\pi_1(x), \hdots, \pi_n(x)) \mid x \in \F^k\right\}$. Then, for all $L \ge \ell \ge 2$, and integers $m \ge 1$, we have that $C$ is $(0, \ell, L)$-list-recoverable if 
\begin{align}
    \frac{k-1}{n(s-L+1)} < 1 - \frac{\log \ell}{\log (L+1)}.\label{eq:zero-error}
\end{align}
\end{theorem}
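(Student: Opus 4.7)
The plan is to derive a contradiction from assuming $C$ is not $(0, \ell, L)$-list-recoverable, by applying the subspace-design Brascamp--Lieb inequality \Cref{prop:BL-subspace} to the uniform distribution on a recovered list of size $L+1$, after first translating one element to the origin.

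Suppose toward contradiction that there exist sets $S_1, \ldots, S_n \subseteq \F^s$ with $|S_i| \le \ell$ and $L+1$ distinct messages $x_0, x_1, \ldots, x_L \in \F^k$ whose codewords all lie in $S_1 \times \cdots \times S_n$, i.e., $\pi_i(x_j) \in S_i$ for every $i \in [n]$ and $j \in \{0, 1, \ldots, L\}$. To pin the ambient dimension down to at most $L$ (rather than $L+1$), I would translate: set $y_j := x_j - x_0$, so $y_0 = 0$ and $y_0, \ldots, y_L$ are $L+1$ distinct points of $\F^k$. Let $V := \Span(y_1, \ldots, y_L)$; then $\dim V \le L$, and the hypothesis (\ref{eq:zero-error}) forces $s - L + 1 > 0$, so $L \le s$ and thus $\dim V \le s$. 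Finally, $\pi_i(y_j) = \pi_i(x_j) - \pi_i(x_0) \in S_i - \pi_i(x_0)$, a set of size at most $\ell$.

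Now let $X$ be the uniform distribution on $\{y_0, y_1, \ldots, y_L\}$, viewed as a distribution supported on $V$. Then $H(X) = \log(L+1)$, and $H(\pi_i(X)) \le \log|\supp(\pi_i(X))| \le \log \ell$ for every $i$. Applying \Cref{prop:BL-subspace} to this $X$ and $V$,
\[
\log(L+1) \;=\; H(X) \;\le\; \frac{1}{n - \frac{k-1}{s - \dim(V) + 1}} \sum_{i=1}^n H(\pi_i(X)) \;\le\; \frac{n \log \ell}{n - \frac{k-1}{s - L + 1}},
\]
where in the final step I used $\dim V \le L$ (so $s - \dim(V) + 1 \ge s - L + 1$, which makes the upper bound larger). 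Note that (\ref{eq:zero-error}) with $\ell \ge 2$ ensures the denominators above are positive, so the rearrangement is valid. Solving for the left-hand side yields
\[
1 - \frac{\log \ell}{\log(L+1)} \;\le\; \frac{k - 1}{n\,(s - L + 1)},
\]
in direct contradiction with (\ref{eq:zero-error}).

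The only real subtlety, and the main obstacle if one tries the most naive version of the argument, is dimension control: taking $V$ to be the span of $x_0, x_1, \ldots, x_L$ directly permits $\dim V = L+1$, which can exceed $s$ in the tight regime $L = s$ and hence fall outside the range where the subspace-design property is assumed. Translating by $x_0$ reduces the ambient dimension to $L \le s$ for free, since both $H(X)$ and each $H(\pi_i(X))$ are invariant under this shift (the entropy of a distribution does not change when its support is translated, and $\pi_i(X)$ is only shifted by the constant $\pi_i(x_0)$). Everything else is bookkeeping: \Cref{prop:BL-subspace} already bundles the subspace-design hypothesis with the discrete entropic Brascamp--Lieb inequality of \Cref{thm:entropy-bl}, so no further machinery is needed.
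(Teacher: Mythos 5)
Your proof is correct and is essentially the same argument as the paper's: assume a bad list of size $L+1$ exists, translate so that $0$ lies in the preimage (so the ambient span $V$ has dimension at most $L$ rather than $L+1$), put the uniform distribution $X$ on the translated preimage, apply \Cref{prop:BL-subspace} to bound $H(X) = \log(L+1)$ by $\frac{n\log\ell}{n - (k-1)/(s-L+1)}$, and rearrange to contradict~(\ref{eq:zero-error}). The translation step, which you flag as the key subtlety, is precisely what the paper does (``By a suitable translation, we may assume that $0 \in F$''). One small caveat: your claim that hypothesis~(\ref{eq:zero-error}) by itself forces $s - L + 1 > 0$ is not quite right — if $s - L + 1 < 0$ the left-hand side of~(\ref{eq:zero-error}) is nonpositive and the inequality holds vacuously — but the paper silently makes the same assumption $L \le s$ (needed for \Cref{prop:BL-subspace}), and it is satisfied in every application (e.g.\ \Cref{cor:clean-zero-bound} takes $s \ge (\tfrac{R}{\eps}+1)(L-1)$), so this is a shared bookkeeping issue rather than a gap in your argument relative to the paper's.
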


\begin{proof}
Assume for sake of contradiction that there exists a bad list $E \subseteq C$ of size $L+1$ such that for all $i \in [n]$, the set $\{c_i : i \in E\}$ has cardinality at most $\ell$. Let $F \subseteq \F^k$ be the preimage of $E$ with respect to $(\pi_1, \hdots, \pi_n)$. That is, $x \in F$ if and only if $(\pi_1(x), \hdots, \pi_n(x)) \in E$. Note that $|F| = |E| = L+1$.

By a suitable translation, we may assume that $0 \in F$ without changing the cardinality of any set $\pi_i(F)$. Let $V$ be the linear span of $F$. Note that $\dim(V) \le L$. By, \Cref{prop:BL-subspace}, we have that for any probability distribution $X$ over $V$, we have that
\[
H(X) \le \frac{1}{n - \frac{k-1}{s-L+1}}\sum_{i=1}^n H(\pi_i(X)).
\]
Let $X$ be the uniform distribution over $F$. Then, $H(X) = \log (L+1)$. Further, since $|\pi_i(F)| \le \ell$ for all $i \in [n]$, we have that $H(\pi_i(X)) \le \log \ell$ for all $i \in [n]$. Thus,
\[
    \log (L+1) \le \frac{n}{n - \frac{k-1}{s-L+1}} \log \ell.
\]
This contradicts (\ref{eq:zero-error}), so the bad list $E$ does not exist.
\end{proof}

\begin{corollary}\label{cor:clean-zero-bound}
For all $\eps > 0$, if $C$ as in \Cref{thm:lr-zero-error} has rate $R \in (0,1)$ and $n,k,s = \Omega_{\ell,\eps}(1)$ are sufficiently large, then $C$ is $(0, \ell, \ell^{1/(1-R-\eps)})$ list recoverable.
\end{corollary}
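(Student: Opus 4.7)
The plan is to apply \Cref{thm:lr-zero-error} directly with the integer list size $L := \lceil \ell^{1/(1-R-\eps)} \rceil$. Note that $L$ depends only on $\ell, R, \eps$, and is independent of $s, k, n$. The goal is therefore to verify the sufficient condition
\[
\frac{k-1}{n(s-L+1)} < 1 - \frac{\log \ell}{\log (L+1)}
\]
for $n, k, s$ sufficiently large (where $k = Rsn$ since the rate is $R$).

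First I would lower-bound the right-hand side. Since $L \ge \ell^{1/(1-R-\eps)}$, we have $\log(L+1) > \log L \ge \frac{\log \ell}{1-R-\eps}$, so $\frac{\log \ell}{\log (L+1)} < 1-R-\eps$, giving
\[
1 - \frac{\log \ell}{\log (L+1)} > R + \eps.
\]
This inequality is strict because $\log(L+1) > \log L$ (equivalently, $L+1 > \ell^{1/(1-R-\eps)}$ regardless of whether the ceiling operation had any effect).

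Next I would upper-bound the left-hand side. Using $k = Rsn$, we have
\[
\frac{k-1}{n(s-L+1)} < \frac{k}{n(s-L+1)} = \frac{Rs}{s-L+1}.
\]
A direct calculation shows $\frac{Rs}{s-L+1} \le R + \eps$ is equivalent to $s \ge (L-1)(R+\eps)/\eps$. Since $L$ is a constant depending only on $\ell, R, \eps$, this holds once $s = \Omega_{\ell, R, \eps}(1)$ is sufficiently large; the hypothesis on $n, k, s$ in the corollary accommodates this. Chaining the two bounds,
\[
\frac{k-1}{n(s-L+1)} \;<\; \frac{Rs}{s-L+1} \;\le\; R+\eps \;<\; 1 - \frac{\log \ell}{\log(L+1)},
\]
which verifies the hypothesis of \Cref{thm:lr-zero-error}. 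The conclusion is that $C$ is $(0, \ell, L)$-list recoverable with $L = \lceil \ell^{1/(1-R-\eps)}\rceil$, as desired.

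The main obstacle is mild and essentially bookkeeping: one must carefully handle the rounding in defining $L$ (to preserve strict inequality), and make explicit the dependence of the threshold on $\ell, R, \eps$ through which the ``sufficiently large'' condition on $n, k, s$ should be parsed. Since the rate is fixed, the three parameters $n, k, s$ are linked by $k = Rsn$, so the binding requirement really just is that $s$ exceeds a constant depending only on $\ell, R, \eps$; both $n$ and $k$ then inherit the required lower bounds automatically.
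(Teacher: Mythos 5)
Your overall strategy matches the paper's proof: set $L$ to an integer near $\ell^{1/(1-R-\eps)}$, upper-bound the left-hand side of (\ref{eq:zero-error}) by $R+\eps$ using $\frac{k-1}{n(s-L+1)} < \frac{Rs}{s-L+1} \le R+\eps$ once $s \ge (L-1)(R+\eps)/\eps$, and lower-bound the right-hand side by $R+\eps$. The algebra is correct, and your observation that the binding constraint is on $s$ (with $n$ and $k$ inheriting the needed growth from $k = Rsn$) is accurate.

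The gap is in your choice of rounding. You define $L := \lceil \ell^{1/(1-R-\eps)} \rceil$, so \Cref{thm:lr-zero-error} gives you that the output list has at most $\lceil \ell^{1/(1-R-\eps)} \rceil$ codewords. But the corollary asserts a bound of $\ell^{1/(1-R-\eps)}$, which for an integer-valued list size means at most $\lfloor \ell^{1/(1-R-\eps)} \rfloor$. When $\ell^{1/(1-R-\eps)}$ is not an integer your conclusion is strictly weaker by one codeword, and you cannot close that last step (``as desired''). This matters downstream: the remark following \Cref{cor:clean-zero-bound} uses the floor to conclude $(0,\ell,\ell^a - 1)$-list-recoverability for any $R < (a-1)/a - \eps$, matching \Cref{conj:main} with $L+1 = \ell^a$; with your ceiling the conclusion would only be $(0,\ell,\ell^a)$-list-recoverability, which does not prove the conjectured bound.

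The fix is simply to take $L := \lfloor \ell^{1/(1-R-\eps)} \rfloor$, as the paper does. You worried the floor might lose the strict inequality, but it does not: $\lfloor x \rfloor + 1 > x$ for all real $x$, so $L+1 > \ell^{1/(1-R-\eps)}$ and hence $1 - \frac{\log \ell}{\log(L+1)} > R+\eps$ regardless of whether $\ell^{1/(1-R-\eps)}$ is an integer. Moreover, the strict inequality in (\ref{eq:zero-error}) is already supplied by $\frac{k-1}{n(s-L+1)} < \frac{Rs}{s-L+1}$, so the detour through $\log L \ge \frac{\log\ell}{1-R-\eps}$ (which forced you to round up so that $L \ge \ell^{1/(1-R-\eps)}$) was never needed.
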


\begin{proof}
Let $L := \lfloor \ell^{1/(1-R-\eps)}\rfloor$ and assume that $s \ge (\frac{R}{\eps}+1)(L-1)$. Then, $\frac{Rs}{s-L+1} \leq R+\eps$. So, $\frac{k-1}{n(s-L+1)}<\frac{Rs}{s-L+1} \le R+\eps$. Further,
\[
    1 - \frac{\log \ell}{\log (L+1)} \ge 1 - \frac{\log \ell}{\log \ell^{1/(1-R-\eps)}} = R+\eps.
\]
We have that (\ref{eq:zero-error}) holds, so $C$ is $(0, \ell, \ell^{1/(1-R-\eps)})$ list recoverable.
\end{proof}

Fix any $\ell,a\ge 2$, for arbitrarily small slack $\eps>0$, we can choose any $R<\frac{a-1}{a}-\eps$ and invoke the above result, this yields $(0,\ell,\ell^{a}-1)$ list-recoverable. Since the rate $R$ could be arbitrarily close to $\frac{a-1}{a}$, this gives an affirmative answer to \cref{conj:main} in zero-error case $\rho^*=0$.

\section{Asymptotic Bound}\label{sec:asymptotic}

Fix any $d\ge 1,\mu\in(0,1)$, we call an $s$-folded $\F_q$-linear code $\mathcal{C}\subseteq (\F^s_q)^n$ with rate $R$ \textbf{$\mu$-slacked $d$-subspace designable} iff for any $1\leq d'\leq d$, $\mathcal{C}$ is $(d',(R+\mu)d'n)$ subspace-designable. Note that for any constants $\mu,R>0$, when $n$ approaches to infinity, $d$-subspace designable codes are automatically 
$\mu$-slacked $d$-subspace designable codes.
Let $W(z)$ be the Lambert $W$ function that $W(z)e^{W(z)}=z$.
\begin{theorem}\label{thm:list-recovery}
For any $\ell,m\ge2,L=\ell^m-1,s,d\ge 1,\mu,R,\rho\in(0,1)$ such that the following holds
\[
\rho<1-\frac{R'\ln{(L+1)}}{W(R’(L+1)\ln{(L+1)/\ell)}},\text{ where }R'=R+\mu
\]
Suppose $\rho\leq 1-\frac{\ell}{L+1}$\footnote{If $\rho>1-\frac{\ell}{L+1}$, then any code cannot be $(\rho,\ell,L)$ list-recoverable, so this condition does not weaken our result.}. Let $\mathcal{C}\subseteq(\F^s_q)^n$ be an $\F_q$-linear and $\mu$-slacked  $d$-subspace designable code with rate $R=k/(sn)$. If for any received table $S_1,\times\cdots\times S_n\in\binom{\F^s_q}{\leq \ell}$, the set $\{f\in\F^k_q\colon \mathcal{C}(f)\in\mathsf{LIST}_{\mathcal{C}}(\rho,S_1\times\cdots\times S_n)\}$ is contained in some 
affine subspace of $\F^k_q$ with dimension $d$, or $d\ge L$, then $\mathcal{C}$ is $(\rho,\ell,L)$ list-recoverable.
\end{theorem}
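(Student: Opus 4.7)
The plan is to proceed by contradiction. Suppose some received table $S_1\times\cdots\times S_n$ admits a bad list $E\subseteq\mathcal{C}$ of size $L+1$, each codeword at distance at most $\rho n$ from the table. Let $F:=\{f\in\F_q^k : \mathcal{C}(f)\in E\}$, so $|F|=L+1$. Under either branch of the hypothesis ($F$ lies in an affine subspace of dimension $d$, or $d\geq L$ and $F$ spans at most an $L$-dimensional affine subspace since $|F|=L+1$), after translating by any element of $F$ we may assume $0\in F$ and the linear span $V:=\operatorname{span}(F)$ satisfies $\dim V\leq d$. Let $X$ be uniform on $F$, so $H(X)=\ln(L+1)$. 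The $\mu$-slacked $d$-subspace designable hypothesis gives, for every subspace $U\subseteq V$, that $\sum_i \dim(U\cap H_i)\leq R'\,\dim(U)\,n$ with $R':=R+\mu$, hence $\sum_i \dim(\pi_i(U))\geq n(1-R')\dim(U)$. Invoking \Cref{thm:entropy-bl} yields
\[
(1-R')\,n\,\ln(L+1)\;\leq\;\sum_{i=1}^{n} H(\pi_i(X)).
\]

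The next step is to bound each $H(\pi_i(X))$ carefully. For each coordinate $i$, let $b_i$ be the number of codewords in $E$ whose $i$-th symbol lies outside $S_i$, and set $Q_i:=b_i/(L+1)$. Conditioning on the indicator $Y_i=\mathbf{1}[\pi_i(X)\in S_i]$ and bounding the two conditional entropies by $\ln\ell$ and $\ln b_i$, the identity $b_i=Q_i(L+1)$ causes the $Q_i\ln Q_i$ terms to cancel with the corresponding term in $H_2(Q_i)$, yielding
\[
H(\pi_i(X))\;\leq\;\psi(Q_i),\qquad\text{where}\quad\psi(q):=(1-q)\ln\!\bigl(\tfrac{\ell}{1-q}\bigr)+q\ln(L+1).
\]
A direct computation gives $\psi''(q)=-1/(1-q)<0$ (so $\psi$ is concave), and $\psi'(q)=\ln\bigl(e(L+1)(1-q)/\ell\bigr)>0$ whenever $q\leq 1-\ell/(L+1)$, so $\psi$ is increasing on $[0,\rho]$ by the theorem's side condition. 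Setting $\bar Q:=\tfrac{1}{n}\sum_i Q_i$, the distance bound $\sum_i b_i\leq(L+1)\rho n$ forces $\bar Q\leq\rho$, and Jensen's inequality combined with this monotonicity gives $\tfrac{1}{n}\sum_i H(\pi_i(X))\leq\psi(\bar Q)\leq\psi(\rho)$. Combining with the Brascamp--Lieb inequality displayed above and rearranging produces the necessary condition
\[
(1-R'-\rho)\ln(L+1)\;\leq\;(1-\rho)\ln\!\bigl(\tfrac{\ell}{1-\rho}\bigr)
\]
for the bad list to exist.

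To close the argument, I would show this necessary condition is equivalent to the negation of the theorem's Lambert-$W$ hypothesis. Dividing by $(1-\rho)>0$, exponentiating, and setting $y:=R'\ln(L+1)/(1-\rho)$ rewrites the inequality as $y e^y\geq R'(L+1)\ln(L+1)/\ell$; since $W$ is the inverse of $y\mapsto ye^y$ on $[0,\infty)$, this is equivalent to $R'\ln(L+1)/(1-\rho)\geq W\bigl(R'(L+1)\ln(L+1)/\ell\bigr)$, i.e.\ $\rho\geq 1-R'\ln(L+1)/W\bigl(R'(L+1)\ln(L+1)/\ell\bigr)$, contradicting the hypothesis. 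The main obstacle is purely bookkeeping: the two-case decomposition of $H(\pi_i(X))$ must cancel cleanly to produce the unified function $\psi$, and the concavity/monotonicity of $\psi$ on $[0,\rho]$ uses the side condition $\rho\leq 1-\ell/(L+1)$ in an essential way. The hypothesis $L=\ell^m-1$ is not itself used in this proof of list-recoverability; it appears designed to make the bound match the form in \Cref{conj:main} when later corollaries are derived.
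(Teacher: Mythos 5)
Your proposal is correct and follows essentially the same route as the paper's own proof: both introduce the uniform distribution $X$ on the (spanned) preimage of the bad list, feed the $\mu$-slacked subspace-design condition into \Cref{thm:entropy-bl} with $s_i = \tfrac{1}{(1-R')n}$, bound each $H(\pi_i(X))$ by conditioning on the agreement indicator (your $\psi$ is exactly the paper's $g$), and close via concavity/monotonicity and the Lambert-$W$ identity. The only cosmetic differences are notational (your $Q_i, b_i, Y_i, \psi$ versus the paper's $\rho_i, E_i, g$) and that you rearrange the final inequality by dividing through by $1-\rho$ before exponentiating, whereas the paper substitutes $t = 1-\rho$, $h = t(L+1)/\ell$, $h' = e^{W(C)}$ and appeals to monotonicity of $h\ln h$; these are equivalent. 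One small point in your favor: the paper inserts a ``WLOG re-pick each $S_i$ so that $\rho_i \le 1-\ell/(L+1)$'' step, but as your argument correctly exploits, this is not actually needed --- Jensen works on all of $[0,1]$, and only $\bar Q, \rho \le 1-\ell/(L+1)$ are needed for the monotonicity step. Your closing remark that $L = \ell^m - 1$ plays no role in this particular proof (only in later corollaries) is also accurate.
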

\begin{proof}
Suppose by contrapositive there exist a received table $S_1,\dots,S_n\in\binom{\F^s_q}{\leq \ell}$ and distinct $f_1,\dots,f_{L+1}\in\{f\in\F^k_q\colon \mathcal{C}(f)\in\mathsf{LIST}_{\mathcal{C}}(\rho,S_1\times\cdots\times S_n)\}$. Then, if we define $\rho_i=|\{j\in[L+1]\colon \mathcal{C}(f_j)[i]\in S_i\}|/(L+1)$ for each $i\in[n]$, there is $\frac{1}{n}\sum^n_{i=1}\rho_i\leq \rho$. Since $S_i$ is allowed to picked as a set with size $\ell$ and the condition is an upper bound on $\sum^n_{i=1}\rho_i$. Without loss of generality we can always choose some $S_1,\dots,S_n$ satisfying the above conditions such that $\rho_i\leq 1-\frac{\ell}{L+1}$ for each $i\in[n]$.  If $f_1\neq 0$, we subtract $f_1$ from each of $f_1,\dots,f_{L+1}$ and subtract $\mathcal{C}(f_1)[i]$ from each element of $S_i$ for all $i\in[n]$. One can check that after these subtractions the above conditions still hold, so without loss of generality, we can always assume $f_1=0$.

For any $i\in[n]$, let $H_i\subseteq \mathbb{F}_q^k$ denote the $\mathbb{F}_q$-linear subspace such that for any message $f\in\mathbb{F}_q^k$, there is $\mathcal{C}(f)_i=0$ iff $f\in H_i$. Let $V=\Span(f_1,\dots,f_{L+1})$, and $s_1,\dots,s_n=\frac{1}{(1-R-\mu)n}$. For each $i\in[n]$ let $\pi_i\colon V\to V/(V\cap H_i)$ denote the natural linear map from $V$ to its quotient  space. From the statement we know $\dim(V)\leq d$ Now we check that these choices satisfy (\ref{eq:dim-cond}) so we can invoke \cref{thm:entropy-bl}. For any linear subspace $U\subseteq V$, we aim to prove that
\[
\dim(U)\leq\sum_{i=1}^ns_i\dim(\pi_i(U)).
\]
The above inequality is equivalent to
\begin{align*}
&(1-R-\mu)n\dim(U)\leq \sum^n_{i=1}(\dim(U)-\dim(U\cap H_i))\\
\Leftrightarrow &\sum^n_{i=1}\dim(U\cap H_i)\leq (R+\mu)n\dim(U).
\end{align*}
This inequality follows since $\mathcal{C}$ is $\mu$-slacked $d$-subspace designable and $\dim(U)\leq \dim(V)\leq d$.

Let $X$ be the uniform distribution over $f_1,\dots,f_{L+1}\in V$, from \cref{thm:entropy-bl} we know that
\[
(1-R')n\ln{(L+1)}=(1-R-\mu)nH(X)\leq \sum^n_{i=1}H(\pi_i(X)).
\]
Fix any $i\in[n]$, we know that for $a\neq b\in[L+1]$, if $\mathcal{C}(f_a)[i]=\mathcal{C}(f_b)[i]$, then there must be $\mathcal{C}(f_a-f_b)[i]=0$ and therefore $f_a-f_b\in H_i\cap V$ which implies $\pi_i(f_a)=\pi_i(f_b)$. Let $E_i\in\{0,1\}$ denote the indicator event that $E_i=[\mathcal{C}(X)[i]\in T_i]$. Recall the definition of $\rho_i$ that $\Pr[E_i]=1-\rho_i$, let $h(\cdot)$ denote the binary entropy function, by chain rule (see (\ref{eq:entropy-1}) and (\ref{eq:entropy-2})) we know that
\begin{align}
H(\pi_i(X))&=H(\pi_i(X), E_i)\nonumber\\
&= H(E_i)+H(\pi_i(X)\mid E_i)\nonumber\\
&\leq h(\rho_i)+(1-\rho_i)\ln{\ell}+\rho_i\ln{(\rho_i(L+1))}\nonumber\\
&=(1-\rho_i)\ln{\ell}+\rho_i\ln{(L+1)}-(1-\rho_i)\ln{(1-\rho_i)}\label{eq:H-pi-bound}
\end{align}
Let $g(x)=(1-x)\ln{\ell}+x\ln{(L+1)}-(1-x)\ln{(1-x)}$, by direct calculation, we know that $g'(x)=\ln(\frac{L+1}{\ell})+1+\ln{(1-x)},  g''(x)=-\frac{1}{1-x}$. When $x\in[0,1-\frac{\ell}{L+1}]$, we know that $g'(x)>0$ and $g''(x)<0$. This implies $g(x)$ is concave and monotone increasing in this range. Therefore, recall that $\rho,\rho_1,\dots,\rho_n\in[0,1-\frac{\ell}{L+1}]$. Since $\sum^n_{i=1}\rho_i\leq \rho n$, we know that
\[
\sum^n_{i=1}H(\pi_i(X))\leq \sum^n_{i=1}g(\rho_i)\leq ng(\rho)=n((1-\rho)\ln{\ell}+\rho \ln{(L+1)}-(1-\rho)\ln{(1-\rho)}).
\]
Therefore, it follows that
\[
(1-R')\ln{(L+1)}\leq (1-\rho)\ln{\ell}+\rho\ln{(L+1)}-(1-\rho)\ln{(1-\rho)}.
\]
Rearrange this inequality, it follows that
\[
R'\ln{(L+1)}\ge(1-\rho)\ln{\frac{(1-\rho)(L+1)}{\ell}}
\]
Let $t=1-\rho\ge\frac{\ell}{L+1}$, it implies that
\[
R'\ln{(L+1)}\ge t\ln{t}+t\ln{\frac{(L+1)}{\ell}}.
\]
Let $h=\frac{t(L+1)}{\ell}\ge 1$, we know that
\[
\frac{R'(L+1)\ln{(L+1)}}{\ell}\ge h\ln{h}
\]
Let $C=\frac{R’(L+1)\ln{(L+1)}}{\ell}> 0$ and $h'=\exp(W(C))$. From the definition of Lambert W function we know $h'\ln{h'}=C>0$ so $h'>1$. Since $h\ln{h}$ is increasing with respect to $h\ge 1$, the above inequality implies $h\leq h'$. However, we can compute that
\[
h=\frac{(1-\rho)(L+1)}{\ell}>C/W(C)=h'
\]
This is a contradiction, so we complete the proof.
\end{proof}
\begin{theorem}\label{thm:asym-list-recovery}
For any $\ell\ge2,d\ge 1,\eps\in(0,1)$, let $L=\left(\frac{\ell}{R+\eps/2}\right)^{3+2R/\eps}$. Let $\mathcal{C}\subseteq(\F^s_q)^n$ be an $\F_q$-linear and $\eps/3$-slacked $d$-subspace designable code with rate $R=k/(sn)$. If for any received table $S_1,\times\cdots\times S_n\in\binom{\F^s_q}{\leq \ell}$, the set $\{f\in\F^k_q\colon \mathcal{C}(f)\in\mathsf{LIST}_{\mathcal{C}}(1-R-\eps,S_1\times\cdots\times S_n)\}$ is contained in some 
affine subspace of $\F^k_q$ with dimension $d$, or $d\ge L$, then $\mathcal{C}$ is $(1-R-\eps)$ list-recoverable.
\end{theorem}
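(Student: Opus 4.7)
The plan is to directly apply \Cref{thm:list-recovery} with the slack parameter $\mu := \eps/3$ (so that $R' := R + \eps/3$), the target radius $\rho := 1 - R - \eps$, and $L := \left(\frac{\ell}{R+\eps/2}\right)^{3+2R/\eps}$ (rounded up to the next value of the form $\ell^m-1$ if one wishes to stay faithful to the exact statement of \Cref{thm:list-recovery}; this rounding changes $L$ by at most a multiplicative factor of $\ell$, which is harmless for our polynomial-in-$\ell$ target). The affine-subspace hypothesis of \Cref{thm:list-recovery} is inherited verbatim from the hypothesis of the present theorem, and the mild side condition $\rho \le 1-\ell/(L+1)$ is immediate since $L$ is super-polynomial in $\ell$. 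Thus the only real work is verifying the Lambert-$W$ inequality
\[
    \rho \;<\; 1 - \frac{R'\ln(L+1)}{W\bigl(R'(L+1)\ln(L+1)/\ell\bigr)}.
\]

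To handle this, I would first rearrange the inequality as $W\bigl(R'(L+1)\ln(L+1)/\ell\bigr) > \frac{R'\ln(L+1)}{R+\eps}$, then apply the identity $W^{-1}(x) = xe^x$ together with monotonicity of $W$ to reduce it (after cancellation of $R'\ln(L+1)$ on both sides) to the much cleaner form
\[
    (L+1)^{\,1 - R'/(R+\eps)} \;>\; \frac{\ell}{R+\eps}.
\]
With $R' = R+\eps/3$, the exponent simplifies to $1 - R'/(R+\eps) = \frac{2\eps/3}{R+\eps}$, and taking logs converts the condition into $\ln(L+1) > \frac{3(R+\eps)}{2\eps}\,\ln\bigl(\tfrac{\ell}{R+\eps}\bigr) = \bigl(\tfrac{3}{2} + \tfrac{3R}{2\eps}\bigr)\ln\bigl(\tfrac{\ell}{R+\eps}\bigr)$. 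Finally, I would confirm this by noting that $\ln(L+1) \ge (3+2R/\eps)\ln\bigl(\tfrac{\ell}{R+\eps/2}\bigr) \ge (3+2R/\eps)\ln\bigl(\tfrac{\ell}{R+\eps}\bigr)$, and that $3+\tfrac{2R}{\eps} \ge \tfrac{3}{2}+\tfrac{3R}{2\eps}$ simplifies to $R + 3\eps \ge 0$, which is trivial; meanwhile $\ln(\ell/(R+\eps)) \ge 0$ since $\ell \ge 2 > R+\eps$.

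The main ``obstacle'' is really just tracking the three sources of slack ($\eps/3$ for subspace designability, $\eps/2$ in the base of $L$, and the remaining margin allotted to the radius) and checking they mesh consistently, but as the manipulation above shows, the prescribed $L$ overshoots the minimum exponent $\tfrac{3}{2} + \tfrac{3R}{2\eps}$ by a comfortable amount, so the verification is essentially mechanical. No new conceptual input beyond \Cref{thm:list-recovery} should be required, and the conclusion $(1-R-\eps, \ell, L)$-list-recoverability then follows immediately from \Cref{thm:list-recovery}.
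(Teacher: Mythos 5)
Your approach mirrors the paper's: set $\mu = \eps/3$, invoke \Cref{thm:list-recovery} with target radius $1-R-\eps$, and verify the Lambert-$W$ threshold. Your algebra is in fact tidier than the paper's — you invert $W$ directly and arrive at the transparent condition $\ln(L+1) > \bigl(\tfrac32+\tfrac{3R}{2\eps}\bigr)\ln\bigl(\tfrac{\ell}{R+\eps}\bigr)$, whereas the paper inserts an extra $\eps/6$ margin into the radius (leading to $K=R'+\eps/2$) and reaches the same type of inequality less tightly. Your final exponent comparison is correct.

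The one genuine slip is the rounding direction. \Cref{thm:list-recovery} concludes $(\rho,\ell,L')$-list-recoverability only for $L'=\ell^m-1$, and to deduce the theorem's $(\rho,\ell,L)$ claim you need $L'\le L$; rounding \emph{up} gives a strictly \emph{weaker} conclusion. So you must round \emph{down} and then re-verify the Lambert condition at the smaller $L'$. The slack your last step exhibits covers exactly this: $\ln(L+1)$ exceeds the threshold by at least $\bigl(\tfrac32+\tfrac{R}{2\eps}\bigr)\ln\bigl(\tfrac{\ell}{R+\eps}\bigr)\ge\tfrac32\ln\ell>\ln\ell$ (using $R+\eps\le1$), so some $\ell^m\in(L/\ell,\,L]$ still clears the threshold and has $m\ge2$. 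This is precisely how the paper closes the argument, choosing $m$ with $\bigl(\tfrac{\ell}{R+\eps/2}\bigr)^{2+2R/\eps}\le\ell^m\le L$. Fix the rounding direction and make the slack-covers-the-rounding step explicit rather than heuristic, and the proof is complete.
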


\begin{proof}
We set $\mu=\eps/3$ and $R'=R+\mu$. It suffices to find $m\ge 2$ such that $\ell^m\leq L$ and $\rho=1-\frac{R'm\ln{\ell}}{W(R'm\ell^{m-1}\ln{\ell})}-\eps/6\ge 1-R-\eps=1-R'-2\eps/3$ and then invoke \cref{thm:list-recovery} to get the bound. Then we calculate a valid choice of $m$.

Let $K=R'+\eps/2$, then we need $K\ge \frac{R'm\ln{\ell}}{W(R'm\ell^{m-1}\ln{\ell})}$, so $W(R'm\ell^{m-1}\ln{\ell})\ge \frac{R'm\ln{\ell}}{K}$.

Since $W(z)e^{W(z)}=z$ and $\frac{R'm\ln{\ell}}{K}>0$. Since $xe^x$ is monotone increasing when $x>0$, this implies $\frac{R'm\ln{\ell}}{K}\exp(\frac{R'm\ln{\ell}}{K})\leq R'm\ell^{m-1}\ln{\ell}$. It follows that
\begin{align*}
\ell^{m-1}\ge \frac{1}{R'+\eps/2}\exp(\frac{R'm\ln{\ell}}{R'+\eps/2})\Rightarrow \ell^m\ge \frac{\ell}{R'+\eps/2}\exp(m\ln{\ell}(1-\frac{\eps/2}{R'+\eps/2}))
\end{align*}
Therefore,  it is necessary to have
\[
\ell^m\ge (\frac{\ell}{R'+\eps/2})^{(R'+\eps/2)/(\eps/2)}
\]
The above inequality is satisfied when we choose $m$ such that
\[
\ell^m\ge (\frac{\ell}{R+\eps/2})^{2+2R/\eps}
\]
It follows that there must be a valid choice of $m$ satisfying the above condition and $\ell^m\leq L$.
\end{proof}

We now use \Cref{thm:asym-list-recovery} to bound the list recovery of the families of codes mentioned in \Cref{thm:main-lr}. We start with random linear codes.

\begin{corollary}\label{cor:rlc-lr}
For constants $\ell\ge 2,R,\eps\in(0,1),L=(\frac{\ell}{R+\eps/2})^{5+4R/\eps},q\ge(3\ell)^{6(L+1)/\eps}$, random $\F_q$-linear codes with rate $R$ are $(1-R-\eps,\ell,L)$ list-recoverable with probability at least $1-o_n(1)$.
\end{corollary}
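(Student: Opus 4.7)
The plan is to chain the two reductions already in hand: first apply \cref{thm:asym-list-recovery} to obtain list recovery for subspace designable codes, and then invoke \cref{thm:subspace-random} to transfer the guarantee to random $\F_q$-linear codes. The substantive analytic content (the discrete entropic Brascamp--Lieb inequality, and the subspace-design-to-random-linear reduction of \cite{BCDZ25subspace}) is already packaged in the cited theorems, so the remaining work is essentially a careful bookkeeping of constants.

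Concretely, I would set $R_{\text{sd}} := R + \eps/2$ as the rate for the intermediate subspace designable code and $\eps_{\text{sd}} := \eps/2$ as the corresponding slack. With this choice the radii agree, $1 - R_{\text{sd}} - \eps_{\text{sd}} = 1 - R - \eps$, and applying \cref{thm:asym-list-recovery} with $d := L$ (so that its affine-containment hypothesis becomes vacuous) yields a list size of $(\ell/(R_{\text{sd}} + \eps_{\text{sd}}/2))^{3 + 2R_{\text{sd}}/\eps_{\text{sd}}} = (\ell/(R + 3\eps/4))^{5 + 4R/\eps}$, which is at most the claimed $L = (\ell/(R + \eps/2))^{5 + 4R/\eps}$. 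One also needs the easy observation that every $L$-subspace designable code of rate $R_{\text{sd}}$ is automatically $\eps/6$-slacked once $n \ge 6/\eps$, since $R_{\text{sd}} d'n + 1 \le (R_{\text{sd}} + \eps/6) d'n$ for every $d' \ge 1$. This gives $(1 - R - \eps, \ell, L)$ list recovery for every $L$-subspace designable code of rate $R_{\text{sd}}$.

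For the final step, I would apply \cref{thm:subspace-random} with its slack parameter $\eps' := \eps/2 - (L^2 + 2L + 2)/n$, which is at least $\eps/3$ for $n$ sufficiently large. Then the alphabet constraint $q \ge (3\ell)^{2(L+1)/\eps'}$ is implied by the assumed $q \ge (3\ell)^{6(L+1)/\eps}$, and the resulting random $\F_q$-linear code has rate $R_{\text{sd}} - (L^2 + 2L + 2)/n - \eps' = R$ and is $(1 - R - \eps, \ell, L)$ list recoverable with probability $1 - o_n(1)$. The only step that requires any care is pinning down the parameter chain so that simultaneously (i) the list size produced by \cref{thm:asym-list-recovery} fits inside the target $L$, (ii) the $\eps/2$ of extra rate given to the subspace designable code is split appropriately between $\eps_{\text{sd}}$ (to match the radius) and $\eps'$ (to absorb the $(L^2 + 2L + 2)/n$ rate loss while keeping the alphabet small), and (iii) all $O(1/n)$ corrections from slackness conversion and rate loss are absorbed. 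The choice above threads all three at once, so no further technical difficulty is anticipated.
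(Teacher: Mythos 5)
Your proof follows the same two-step chain as the paper's: apply \cref{thm:asym-list-recovery} at rate $R+\eps/2$ with slack $\eps/2$ (giving $\eps/6$-slackness and the list bound $(\ell/(R+3\eps/4))^{5+4R/\eps}\le L$), then transfer to random linear codes via \cref{thm:subspace-random}; the parameter bookkeeping you describe matches. The one thing to tighten is that you feed \cref{thm:subspace-random} an $n$-dependent slack $\eps'=\eps/2-(L^2+2L+2)/n$, whereas that theorem fixes $\eps$ as a constant before letting $n\to\infty$; the paper instead takes the constant slack $\eps/3$, obtains random linear codes of rate $R+\eps/6-o_n(1)\ge R$, and implicitly uses that list recoverability only improves under passing to a lower-rate (sub)code to land at rate exactly $R$. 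Your own observation that $\eps'\ge\eps/3$ for large $n$ already contains the fix, so after substituting the constant $\eps/3$ the argument is sound and coincides with the paper's.
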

\begin{proof}
Let $d=L$, from \cref{thm:asym-list-recovery}, any $\eps/6$-slacked  $d$-subspace designable code with rate $R+\eps/2$ is $(1-R-\eps,\ell,L)$ list-recoverable.  Since $\eps>0$ is a constant, when $n$ approaches to infinity any $d$-subspace designable codes is a $\eps/6$-slacked  $d$-subspace designable code. Therefore, for large enough $n$, any $d$-subspace designable  code with rate $R+\eps/2$ is $(1-R-\eps,\ell,L)$ list-recoverable. Then, by \cref{thm:subspace-random}, for all large enough $n$, random $\F_q$-linear code codes with rate $R+\eps/2-\eps/3-(L^2+2L+2)/n=R+\eps/6-o_n(1)\ge R$ is $(1-R-\eps,\ell,L)$ list-recoverable with probability at least $1-o_n(1)$. 
\end{proof}

Next, we prove \Cref{thm:main-lr} for random Reed-Solomon codes.

\begin{corollary}\label{cor:rrs-lr}
For constants $\ell\ge 2,R,\eps\in(0,1),L=(\frac{\ell}{R+\eps/2})^{9+8R/\eps}$, there exists a constant $c$ such that for any prime power $q\ge cn$, random RS codes over $\F_q$ with rate $R$ are $(1-R-\eps,\ell,L)$ list-recoverable with probability at least $1-o_n(1)$.
\end{corollary}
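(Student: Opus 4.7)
The plan is a two-step reduction: first apply Corollary~\ref{cor:rlc-lr} to establish the list recoverability of random linear codes, then apply Theorem~\ref{thm:random-rs-reduction} to transfer the guarantee to random Reed--Solomon codes. This mirrors the pipeline of \cite{lms25} already referenced just above in the paper.

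Concretely, I would fix a small constant slack $\eps_1 \in (0, \eps)$ --- say $\eps_1 = \eps/3$ --- and set $R' := R + \eps_1$. Applying Corollary~\ref{cor:rlc-lr} with rate $R'$ and decoding slack $\eps - \eps_1$ gives that, for $q$ at least some $\eps, \ell, R$-dependent constant, random $\F_q$-linear codes of rate $R'$ are $(1-R-\eps,\ell,L_{\mathrm{lin}})$ list recoverable w.h.p., where $L_{\mathrm{lin}} = \bigl(\ell/(R'+(\eps-\eps_1)/2)\bigr)^{5+4R'/(\eps-\eps_1)}$. With $\eps_1 = \eps/3$ this simplifies to $L_{\mathrm{lin}} = (\ell/(R+2\eps/3))^{7+6R/\eps}$. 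A direct comparison with the target $L = (\ell/(R+\eps/2))^{9+8R/\eps}$ shows $L_{\mathrm{lin}} \le L$: the base of $L$ is larger (since $\eps/2 < 2\eps/3$), the exponent is larger, and the bases exceed $1$ whenever $\ell \ge 2$ (using $R+2\eps/3 < 5/3$). Hence the same codes are a fortiori $(1-R-\eps,\ell,L)$ list recoverable.

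Next I would invoke Theorem~\ref{thm:random-rs-reduction} with its ``rate'' parameter set to $R'$ and its slack to $\eps_1 = \eps/3$, concluding that random RS codes over $\F_q$ of rate $R' - \eps_1 = R$ are $(1-R-\eps,\ell,L)$ list recoverable with probability $1 - o_n(1)$, provided $q \ge 2^{c_0/\eps_1}\, n = 2^{3c_0/\eps}\, n$ for the absolute constant $c_0$ furnished by that theorem. Combining this with the $n$-independent lower bound on $q$ from Corollary~\ref{cor:rlc-lr} --- which is at most $(3\ell)^{9(L+1)/\eps}$ --- it suffices to choose a constant $c = c(\eps,\ell,L,R)$ large enough that $cn$ dominates both requirements for all sufficiently large $n$. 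There is no substantive technical obstacle here: the argument is essentially parameter tracking. The only thing to verify carefully is that the base $\ell/(R+\eps/2)$ and exponent $9 + 8R/\eps$ in the statement are chosen generously enough to absorb \emph{both} the slack $\eps_1$ used to upgrade the linear code rate and the $-\eps_1$ rate loss from the RS reduction, which the simple comparison above confirms.
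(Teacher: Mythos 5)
Your proposal is correct and follows essentially the same two-step reduction as the paper: apply Corollary~\ref{cor:rlc-lr} to random linear codes at a slightly inflated rate, then transfer to random RS codes via Theorem~\ref{thm:random-rs-reduction}. The only cosmetic difference is your choice of slack ($\eps_1=\eps/3$, giving linear codes at rate $R+\eps/3$) versus the paper's $\eps/2$ (giving linear codes at rate $R+\eps/2$ with slack $\eps/2$ in the RS reduction); both choices make the resulting list size fall under the claimed $L = (\ell/(R+\eps/2))^{9+8R/\eps}$, and your verification that $L_{\mathrm{lin}}\le L$ is correct.
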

\begin{proof}
By \cref{cor:rlc-lr}, there exists a constant $C$ such that for any prime power $q>C$, random $\F_q$-linear code with rate $R+\eps/2$ is $(1-R-\eps,\ell,L)$ list-recoverable. Then, by \cref{thm:random-rs-reduction}, there exists a constant $c$ such that for any prime power $q\ge cn$, random RS codes over $\F_q$ with rate $R$ are $(1-R-\eps,\ell,L)$ list-recoverable.
\end{proof}

We now prove \Cref{thm:main-lr} for folded Reed-Solomon codes.

\begin{corollary}\label{cor:frs-lr}
For constants $\ell\ge 2,R,\eps\in(0,1),L=(\frac{\ell}{R+\eps/2})^{3+2R/\eps},s\ge16\ell/\eps^2$, any $s$-folded RS code $\mathcal{C}\subseteq(\F^s_q)^n$ with rate $R=k/(sn)$ and appropriate evaluation points is $(1-R-\eps,\ell,L)$ list-recoverable.
\end{corollary}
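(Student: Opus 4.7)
The plan is to derive \Cref{cor:frs-lr} as a near-immediate consequence of the master reduction \Cref{thm:asym-list-recovery}, using \Cref{thm:gw13-affine} to supply the affine-containment hypothesis and \Cref{thm:code-construction} to supply the subspace-design hypothesis. Concretely, I set $d := 4\ell/\eps$ and aim to verify the two hypotheses of \Cref{thm:asym-list-recovery} for this choice of $d$: (i) every list is contained in an affine subspace of $\F_q^k$ of dimension at most $d$, and (ii) $\mathcal{C}$ is $\eps/3$-slacked $d$-subspace designable with rate $R$.

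For (i), note that $s\ge 16\ell/\eps^2$ with $\eps\in(0,1)$ in particular forces $d=4\ell/\eps\le s$, so $d$ lies in the range $\{0,1,\dots,s\}$ covered by \Cref{thm:code-construction}. Then \Cref{thm:gw13-affine} directly delivers (i): for every received table $S_1\times\cdots\times S_n\in\binom{\F_q^s}{\le\ell}^n$, the preimage of $\mathsf{LIST}_{\mathcal{C}}(1-R-\eps,S_1\times\cdots\times S_n)$ in $\F_q^k$ is contained in an affine subspace of dimension $4\ell/\eps=d$.

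The only nontrivial step is verifying (ii). Unpacking the definition of $\eps/3$-slacked $d$-subspace designability, I must show that for each $1\le d'\le d$, the code $\mathcal{C}$ is a $(d',(R+\eps/3)d'n)$ subspace design. By \Cref{thm:code-construction}, it is a $(d',\frac{d'(k-1)}{s-d'+1})$ subspace design, so it suffices to check $\frac{k-1}{s-d'+1}\le (R+\eps/3)n$. Since $k=Rsn$, this rearranges to $\eps s/3 \ge (R+\eps/3)(d'-1)$. Bounding $d'-1\le 4\ell/\eps$ and $R+\eps/3<1+\eps/3\le 4/3$, the right-hand side is at most $(4/3)(4\ell/\eps)=16\ell/(3\eps)$, while the left-hand side is $\eps s/3\ge \eps\cdot(16\ell/\eps^2)/3 = 16\ell/(3\eps)$. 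So (ii) holds, and the constant $16$ in the hypothesis $s\ge 16\ell/\eps^2$ is precisely calibrated for this bound.

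With both hypotheses verified, \Cref{thm:asym-list-recovery} immediately yields that $\mathcal{C}$ is $(1-R-\eps,\ell,L)$ list-recoverable for $L=(\ell/(R+\eps/2))^{3+2R/\eps}$, which is exactly the claimed conclusion. I do not anticipate a genuine obstacle beyond the arithmetic in (ii); everything else is bookkeeping that lines up the hypotheses of \Cref{thm:asym-list-recovery} with the folded-RS-specific inputs from \Cref{thm:gw13-affine} and \Cref{thm:code-construction}.
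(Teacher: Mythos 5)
Your proof is correct and follows essentially the same route as the paper: set $d=4\ell/\eps$, pull in \Cref{thm:gw13-affine} for the affine-containment hypothesis and \Cref{thm:code-construction} for the $(d',\tfrac{d'(k-1)}{s-d'+1})$ subspace-design guarantee, verify $\eps/3$-slacked $d$-subspace designability by reducing to $\tfrac{Rs}{s-d'+1}\le R+\eps/3$, and invoke \Cref{thm:asym-list-recovery}. Your arithmetic in step (ii) is slightly more explicit than the paper's ``by direct calculation'' but is the same bound.
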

\begin{proof}
Let $d=4\ell/\eps$, by \cref{thm:gw13-affine}, for any received table $S_1,\times\cdots\times S_n\in\binom{\F^s_q}{\leq \ell}$, the set $\{f\in\F^k_q\colon \mathcal{C}(f)\in\mathsf{LIST}_{\mathcal{C}}(1-R-\eps,S_1\times\cdots\times S_n)\}$ is contained in some 
affine subspace of $\F^k_q$ with dimension $d$. Now it suffices to check that $\mathcal{C}$ is a $\eps/3$-slacked $d$ subspace designable code so that we can apply \cref{thm:asym-list-recovery} and finish the proof. For any $1\leq d'\leq d$, we aim to prove that $\mathcal{C}$ is $(d',(R+\eps/3)d'n)$ subspace designable. By \cref{thm:code-construction}, $\mathcal{C}$ is $(d',\frac{Rsd'n}{s-d'+1})$ subspace designable. By direct calculation, it follows that $\frac{Rs}{s-d'+1}\leq R+\eps/3$ when $s\ge16\ell/\eps^2$ and we achieve the required parameter.
\end{proof}

Finally, we prove \Cref{thm:main-lr} for univariate multiplicity codes.

\begin{corollary}\label{cor:mult-lr}
For constants $\ell\ge 2,R,\eps\in(0,1),L=(\frac{\ell}{R+\eps/2})^{3+2R/\eps},s\ge32\ell/\eps^2$, any $s$-order univariate multiplicity code $\mathcal{C}\subseteq(\F^s_p)^n$ over prime field $p\ge sn$ with rate $R=k/(sn)$ and appropriate evaluation points is $(1-R-\eps,\ell,L)$ list-recoverable.
\end{corollary}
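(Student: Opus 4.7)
The plan is to mimic the proof of \cref{cor:frs-lr} verbatim, swapping the Guruswami--Wang affine-subspace result for the analogous bound for univariate multiplicity codes from \cref{thm:kr23-affine}. Concretely, set $d = 8\ell/\eps$. First I would invoke \cref{thm:kr23-affine}, which applies because $s \ge 32\ell/\eps^2$ and we are working over the prime field $\F_p$ with $p \ge sn$: for any received table $S_1 \times \cdots \times S_n \in \binom{\F^s_p}{\le \ell}^n$, the preimage $\{f \in \F^k_p : \mathcal{C}(f) \in \mathsf{LIST}_\mathcal{C}(1-R-\eps, S_1 \times \cdots \times S_n)\}$ lies in some affine subspace of $\F^k_p$ of dimension $d = 8\ell/\eps$.

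Next I would verify the subspace-design hypothesis of \cref{thm:asym-list-recovery}, namely that $\mathcal{C}$ is $\eps/3$-slacked $d$-subspace designable. By \cref{thm:code-construction}, univariate multiplicity codes with appropriate evaluation points are $(d', \frac{d'(k-1)}{s-d'+1})$ subspace designable for every $d' \in \{0,1,\ldots,s\}$, so in particular for all $1 \le d' \le d$ we have a bound of $\frac{Rsnd'}{s-d'+1}$ on the incidence sum. The routine calculation is $\frac{Rs}{s-d'+1} - R = R \cdot \frac{d'-1}{s-d'+1} \le \frac{d}{s-d}$, and plugging in $d = 8\ell/\eps$ and $s \ge 32\ell/\eps^2$ gives $\frac{d}{s-d} \le \eps/3$ (with room to spare, since $s - d \ge 24\ell/\eps^2$). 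Therefore $\frac{Rsd'n}{s-d'+1} \le (R + \eps/3)d'n$ for every $1 \le d' \le d$, establishing $\eps/3$-slackness.

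With both hypotheses in hand, I would simply apply \cref{thm:asym-list-recovery} with the chosen $d$ and $\mu = \eps/3$. The output list size comes out to $L = (\ell/(R+\eps/2))^{3 + 2R/\eps}$, exactly matching the statement.

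There is no genuine obstacle here: this corollary is a near-identical twin of \cref{cor:frs-lr}, with the only differences being a factor-of-$2$ worse affine-dimension bound (hence the doubled folding parameter $s \ge 32\ell/\eps^2$ instead of $16\ell/\eps^2$) and the need to invoke \cref{thm:kr23-affine} rather than \cref{thm:gw13-affine}. The mildly less trivial aspect is just double-checking that the constants in the slackness calculation still close, which they do with the stated choice of $s$.
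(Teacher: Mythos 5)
Your proposal is correct and matches the paper's proof, which simply says to replace \cref{thm:gw13-affine} with \cref{thm:kr23-affine} and otherwise repeat the argument of \cref{cor:frs-lr}. You have just filled in the routine details (choosing $d = 8\ell/\eps$, checking that $s \ge 32\ell/\eps^2$ makes $\mathcal{C}$ an $\eps/3$-slacked $d$-subspace designable code via \cref{thm:code-construction}, and then invoking \cref{thm:asym-list-recovery}), which the paper leaves implicit.
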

\begin{proof}
By replacing \cref{thm:gw13-affine} with \cref{thm:kr23-affine}, the proof is the same as the previous proof for \cref{cor:frs-lr}.
\end{proof}
\begin{remark}
All these results can be generalized to a slightly stronger notion of average-radius list-recovery. However, in the above folded RS codes (\cref{cor:frs-lr}) and univariate multiplicity codes (\cref{cor:mult-lr}) cases, in order to reduce the folding parameter $s$, we used two facts \cref{thm:gw13-affine} \cite{guruswami2013linear} and \cref{thm:kr23-affine} \cite{kopparty2023improved} that do not apply to average-radius list-recovery. Therefore, to ensure the same average-radius list-recoverability, we need a larger folding requirement  $s\ge L(3R/\eps+1)$ so that explicit folded RS and univariate multiplicity codes are $\eps/3$-slacked $L$-subspace designable codes guaranteed by \cref{thm:code-construction}.
\end{remark}

\section{Remainder Version of Brascamp--Lieb Inequality}\label{sec:rem-bl}

In this section, our target is to show the Brascamp-Lieb inequality for remainder and see how it derives the main result of \cite{cz24}.

\begin{theorem}[Restatement of \cref{thm:rem-bl-intro}]\label{thm:rem-bl}
Let $V$ be any linear space over $\F_q$. Let $\pi_i\colon V\to V/W_i,i\in[n]$ denote $n$ linear maps such that $\ker(\pi_i)=W_i\subseteq V$ satisfying the following
\begin{equation}\label{eq:subspace-design}
\forall W\subseteq V, \dim(W)\leq \sum^n_{i=1}s_i\dim(\pi_i(W)).
\end{equation}
Then, for any distribution $X\leftarrow V$, we know that
\[
r(X)\leq \sum^n_{i=1}s_ir(\pi_i(X)).
\]
\end{theorem}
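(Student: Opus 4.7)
The plan is to prove \cref{thm:rem-bl} by induction on $\dim V$, mirroring the double-induction proof of the entropic Brascamp--Lieb inequality \cref{thm:entropy-bl} from \cref{app:BL} with the Shannon entropy $H$ replaced throughout by the remainder $r$. The base cases go through directly: if $\dim V = 0$ then $r(X) = 0$, and if $\dim V = 1$ then every nonzero $\pi_i$ is injective (so $r(\pi_i(X)) = r(X)$), and the subspace design condition applied to $W = V$ forces $\sum_{i : \pi_i \neq 0} s_i \geq 1$, which already closes the case.

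For the inductive step I would first reduce to the existence of a proper critical subspace $0 \neq U \subsetneq V$, i.e.\ one with $\dim U = \sum_i s_i \dim \pi_i(U)$. As in the entropic proof, if no such $U$ exists I would perturb the $s_i$'s along the face of the polytope defined by \eqref{eq:subspace-design} until a proper critical subspace appears; shrinking any $s_i$ only strengthens the conclusion, so the reduction is safe. Given such $U$, the modular identity $(W_i + U) \cap W = (W_i \cap W) + U$ for $W \supseteq U$ together with criticality of $U$ shows that the subspace design condition descends to the induced maps $\bar\pi_i \colon V/U \to (V/U)/\bar W_i$ on $V/U$, and it restricts to subspaces of $U$ for free. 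Writing $\psi \colon V \to V/U$ for the quotient map and $X^{\bar v}$ for the conditional distribution of $X$ given $\psi(X) = \bar v$, the inductive hypothesis in strictly smaller dimensions yields
\[
r(\psi(X)) \leq \sum_i s_i\, r(\bar\pi_i(\psi(X))) \qquad\text{and}\qquad r(X^{\bar v}) \leq \sum_i s_i\, r(\pi_i|_U(X^{\bar v})) \text{ for each } \bar v \in V/U.
\]

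To combine these, set $q(\bar v) := \Pr[\psi(X) = \bar v]$ and pick $\bar v^* \in \arg\max_{\bar v} q(\bar v)$. The identity $\max_v X(v) = \max_{\bar v} q(\bar v)(1 - r(X^{\bar v}))$ yields the one-sided chain rule
\[
r(X) \;\leq\; 1 - q(\bar v^*)(1 - r(X^{\bar v^*})) \;\leq\; r(\psi(X)) + r(X^{\bar v^*}),
\]
so substituting the two inductive bounds reduces the theorem to verifying, coordinate by coordinate,
\[
r(\bar\pi_i(\psi(X))) + r(\pi_i|_U(X^{\bar v^*})) \;\leq\; r(\pi_i(X)). \tag{$\star$}
\]

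The hard part will be establishing $(\star)$: unlike Shannon entropy, the remainder does not satisfy an exact chain rule, so the entropic argument cannot simply be run in reverse. My plan is to exploit concavity of $r$ under convex combinations, $r(\sum_k \alpha_k Z_k) \geq \sum_k \alpha_k r(Z_k)$ when $\sum_k \alpha_k = 1$ --- the direct analogue of the entropy-concavity step used in the proof of \cref{thm:entropy-bl} --- applied to the mixture decomposition $\pi_i(X) = \sum_{\bar v} q(\bar v)\, \pi_i(X^{\bar v})$ over cosets of $U$. The remaining slack should then be absorbed by using the freedom to choose $\bar v^*$ among ties in $\arg\max q$ so that $\bar\pi_i(\bar v^*)$ is simultaneously an argmax of the coarser marginal $\bar\pi_i(\psi(X))$ for each $i$; the subspace design condition, together with the recursive structure inherited from the critical subspace, should make such a consistent choice available (or, failing that, allow further perturbation to produce an even smaller critical subspace and reapply induction).
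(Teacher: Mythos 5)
Your proposal correctly identifies the hard step $(\star)$, but that step is in fact false, and the paper's actual proof takes a fundamentally different route precisely because no remainder analogue of the entropy chain rule exists. Here is a concrete counterexample to $(\star)$. Take $V = \F_2^2$ with two quotient maps $\pi_1\colon V\to V/\langle e_1\rangle$, $\pi_2\colon V\to V/\langle e_2\rangle$ and $s_1=s_2=1$; then $U=\langle e_1\rangle$ is a proper critical subspace, so $\psi\colon V\to V/U$ is the reduction you propose. Let $X$ be uniform on $\{(0,0),(1,0),(0,1)\}$. Then $\psi(X)$ puts mass $2/3$ on $0+U$ and $1/3$ on $(0,1)+U$, so $\bar v^*=0+U$ is the unique mode (no ties to exploit), and $X^{\bar v^*}$ is uniform on $\{(0,0),(1,0)\}$. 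Since $\ker\pi_2=\langle e_2\rangle$ and $\langle e_2\rangle + U = V$, the induced map $\bar\pi_2$ on $V/U$ is the zero map, giving $r(\bar\pi_2(\psi(X)))=0$; meanwhile $\pi_2|_U$ is injective, giving $r(\pi_2|_U(X^{\bar v^*}))=1/2$. But $r(\pi_2(X))=1/3$, so $(\star)$ reads $0+1/2\le 1/3$, which is false. Moreover, in this instance the theorem's conclusion $r(X)\le\sum_i s_i\,r(\pi_i(X))$ holds \emph{with equality} ($2/3=1/3+1/3$), while your one-sided chain rule $r(X)\le r(\psi(X))+r(X^{\bar v^*})$ already gives only $r(X)\le 5/6$, so you have lost strictly more slack than the statement allows; no amount of cleverness in choosing $\bar v^*$ can recover it.

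The deeper issue is structural: the entropic proof crucially uses the exact identity $H(X)=H(\psi(X))+\Ex_{\bar v}[H(X^{\bar v})]$ and the data-processing inequality $H(\pi_i(X)\mid\psi(X))\le H(\pi_i(X)\mid\bar\pi_i(\psi(X)))$; the remainder functional has neither an exact chain rule nor a monotone conditioning structure, because $r$ is defined through a single maximum rather than an average, and the coset containing the global mode of $\pi_i(X)$ has no reason to agree across $i$ or with the mode of $\psi(X)$. The paper's actual proof of \cref{thm:rem-bl} therefore abandons the $(V/U,U)$ recursion entirely: it rewrites the claim as $\|F\|_1-\|F\|_\infty\le\sum_i s_i[\|F\|_1-\|F^{/\pi_i}\|_\infty]$ for $F\colon V\to\R_{\ge0}$, inducts on $|\supp F|$, and decomposes $F=F_0+\cdots+F_d$ along a full flag $U_0\subsetneq\cdots\subsetneq U_d$ of $V$ chosen greedily so that $F(v_0)\ge F(v_1)\ge\cdots\ge F(v_d)$; the subspace-design condition is then applied to every $U_i$ in the flag (not just one critical $U$), and the resulting telescoping identity absorbs the loss that defeats the chain-rule approach. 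I would recommend abandoning the critical-subspace recursion for this theorem and instead looking for a decomposition indexed by a flag rather than a single subspace.
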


\subsection{Application to List-Decoding}\label{sec:rem-inequality}

To gain some intuition for \Cref{thm:rem-bl}, we show that it implies as a corollary, a result by Chen and Zhang~\cite{cz24} on the list decodability of folded Reed--Solomon codes. The following theorem is equivalent to \cite[Theorem B.5]{cz24}, which is a more general form of its main result.
\begin{theorem}\label{thm:cz}
For all $i \in [n]$, let $\pi_i : \F^k \to \F^s$ be a linear map whose kernel is $H_i$. Assume that $H_1, \hdots, H_n$ form a $\left(\ell, \frac{\ell(k-1)}{s-\ell+1}\right)$  subspace design for all $\ell \in \{0,1,\hdots, s\}$. Let $C :=\left \{(\pi_1(x), \hdots, \pi_n(x)) \mid x \in \F^k\right\}$. Then, for all $L \leq s$, we have that $C$ is $(\rho, L)$-list-recoverable if 
\begin{align}
    \rho < \frac{L}{L+1}\left(1 - \frac{k-1}{n(s-L+1)}\right).\label{eq:GSB}
\end{align}
\end{theorem}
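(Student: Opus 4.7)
The plan is to mirror the zero-error argument of \Cref{thm:lr-zero-error}, but replace the entropic Brascamp--Lieb inequality with the remainder version \Cref{thm:rem-bl}. Assume for contradiction that there exist a received word $y \in (\F^s)^n$ and $L+1$ distinct messages $f_1,\ldots,f_{L+1} \in \F^k$ whose codewords are each within Hamming distance $\rho n$ of $y$. By the standard translation trick (subtract $f_1$ from every $f_j$ and $(\pi_1(f_1),\ldots,\pi_n(f_1))$ from $y$), we may assume $f_1 = 0$; set $V := \Span(f_1,\ldots,f_{L+1}) \subseteq \F^k$, so that $\dim V \le L \le s$. Define $\rho_i := \frac{1}{L+1}\bigl|\{j \in [L+1] : \pi_i(f_j) \ne y_i\}\bigr|$, and note $\frac{1}{n}\sum_{i=1}^n \rho_i \le \rho$.

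Next, let $X$ be the uniform distribution on the set $\{f_1,\ldots,f_{L+1}\} \subseteq V$. Since every atom carries mass exactly $1/(L+1)$, we have $r(X) = 1 - \frac{1}{L+1} = \frac{L}{L+1}$. For each coordinate $i \in [n]$, since $\pi_i(X) = y_i$ with probability at least $1-\rho_i$, the mode of $\pi_i(X)$ has mass at least $1-\rho_i$, giving $r(\pi_i(X)) \le \rho_i$. This is where the ``center'' $y_i$ of list decoding enters naturally through the remainder functional, whereas an entropy bound would only yield a weaker estimate of the form $H(\pi_i(X)) \le h(\rho_i) + \rho_i \log L$.

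The subspace design hypothesis provides the input to \Cref{thm:rem-bl}. Following the calculation in the proof of \Cref{prop:BL-subspace} verbatim, for every subspace $U \subseteq V$ we have
\[
\sum_{i=1}^n \dim(\pi_i(U)) \ge \dim(U)\left(n - \frac{k-1}{s - \dim(V) + 1}\right) \ge \dim(U)\left(n - \frac{k-1}{s - L + 1}\right),
\]
so the uniform weights $s_i := \left(n - \frac{k-1}{s-L+1}\right)^{-1}$ satisfy hypothesis (\ref{eq:subspace-design}) of \Cref{thm:rem-bl}. Applying that inequality and using $\sum_i \rho_i \le \rho n$ yields
\[
\frac{L}{L+1} \;=\; r(X) \;\le\; \sum_{i=1}^n s_i\, r(\pi_i(X)) \;\le\; \frac{\rho n}{n - \frac{k-1}{s-L+1}} \;=\; \frac{\rho}{1 - \frac{k-1}{n(s-L+1)}},
\]
which rearranges to $\rho \ge \frac{L}{L+1}\bigl(1 - \frac{k-1}{n(s-L+1)}\bigr)$, contradicting (\ref{eq:GSB}).

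The only step that requires care is the initial reduction: the translation to $f_1 = 0$ is essential so that $V$ is a genuine linear subspace (so the subspace design condition kicks in) rather than an affine flat, and so the uniform distribution $X$ on $\{f_1,\ldots,f_{L+1}\}$ lives on $V$ as required by \Cref{thm:rem-bl}. Everything else is a direct translation between the global list structure (captured by $r(X) = L/(L+1)$) and the local agreement of each codeword with $y_i$ (captured by $r(\pi_i(X)) \le \rho_i$); the remainder inequality then reads off the generalized Singleton bound essentially for free.
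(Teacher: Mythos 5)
Your proposal is correct and follows essentially the same route as the paper: translate so $f_1=0$, set $V=\Span(f_1,\ldots,f_{L+1})$ with $\dim V\le L$, take $X$ uniform on the $L+1$ messages so $r(X)=\tfrac{L}{L+1}$ and $r(\pi_i(X))\le\rho_i$, feed the subspace-design dimension bound from \Cref{prop:BL-subspace} into \Cref{thm:rem-bl}, and read off the contradiction with (\ref{eq:GSB}). The only cosmetic difference is that the paper phrases the contradiction hypothesis as an average-radius bound $\sum_{c\in E}\sum_i\one[c_i\ne y_i]\le\rho(L+1)n$, which is marginally weaker than requiring each codeword to lie within distance $\rho n$ of $y$, but this changes nothing since both yield $\sum_i\rho_i\le\rho n$.
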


\begin{proof}
Assume for sake of contradiction that there exists a bad list $E \subseteq C$ of size $L+1$ and a point $y \in (\F^s)^n$ such that
\begin{align}
    \sum_{c \in E} \sum_{i=1}^n \one [c_i \neq y_i] \le \rho (L+1)n.\label{eq:bound}
\end{align}
For each $i \in [n],$ let $\rho_i := \frac{1}{L+1}\sum_{c \in E}\one [c_i \neq y_i].$ Let $F \subseteq \F^k$ be the preimage of $E$ with respect to $(\pi_1, \hdots, \pi_n)$. That is, $x \in F$ if and only if $(\pi_1(x), \hdots, \pi_n(x)) \in E$. Note that $|F| = |E| = L+1$.

By a suitable translation of $E$ and $F$, we may assume that $0 \in F$ such that (\ref{eq:bound}) still holds for a suitably translated $y$. Let $V$ be the linear span of $F$. Note that $\dim(V) \le L$. By \Cref{thm:rem-bl} and \cref{prop:BL-subspace}, we have that for any probability distribution $X$ over $V$, we have that
\[
r(X) \le \frac{1}{n - \frac{k-1}{s-L+1}}\sum_{i=1}^n r(\pi_i(X)).
\]
Let $X$ be the uniform distribution over $F$. Then, $r(X) = \frac{L}{L+1}$. Further, for all $i \in [n]$, we have that $r(\pi_i(F)) \le \rho_i$. Thus,
\[
    \frac{L}{L+1} \le \frac{1}{n - \frac{k-1}{s-L+1}} \sum_{i=1}^n \rho_i = \frac{1}{n - \frac{k-1}{s-L+1}} \cdot \frac{1}{L+1}\sum_{c \in E} \sum_{i=1}^n \one [c_i \neq x_i] \le \frac{\rho n}{n - \frac{k-1}{s-L+1}}.
\]
This contradicts (\ref{eq:GSB}), so the bad list $E$ does not exist.
\end{proof}

\subsection{Proof of \Cref{thm:rem-bl}}

We restate \Cref{thm:rem-bl} as follows. Given a function $F : V \to \R$ and a function $\pi_i : V \to V_i$, we let $F^{/\pi_i} : V_i \to \R$ denote the function
\[
  F^{/\pi_i}(v) := F(\pi^{-1}(v)) := \sum_{u \in \pi_i^{-1}(v)} F(u).
\]
It suffices to prove that if (\ref{eq:subspace-design}) holds, then for any function $F : V \to \R_{\ge 0}$, we have that
\begin{align}
\|F\|_1 - \|F\|_{\infty}\leq \sum^n_{i=1}s_i\left[\|F\|_1 - \|F^{/\pi_i}\|_{\infty}\right].\label{eq:rem-goal}
\end{align}

We prove this result by induction on the size of the support of $F$. Without loss of generality we can also assume that $V$ is spanned by the vectors in the support of $F$. The base case of $|\supp F| \le 1$ is trivial. Now, assume without loss of generality that $\|F\|_{\infty} = F(0)$. Pick a basis $v_1, \hdots, v_d$ of $V$ such that for all $i \in [d]$, $F(v_i)$ is maximized conditioned on $v_1, \hdots, v_{i-1}$. Let $v_0 = 0,$ and for all $i \in \{0, 1, \hdots, d\}$, let $U_i = \operatorname{span}\{v_0, \hdots, v_i\}$. We next decompose $F = F_0 + \cdots F_d$ such that for all $i \in \{0, 1, \hdots, d\}$,
\[
F_i(v) = \begin{cases}
F(v) & v \in U_i \setminus U_{i-1}\\
0 & \text{otherwise}.
\end{cases}
\]
Since (\ref{eq:subspace-design}) holds, for all $i \in \{0, 1, \hdots, d\}$, we have that
\begin{align}
\|F_i\|_1 - \|F_i\|_{\infty}\leq \sum^n_{j=1}s_j\left[\|F_i\|_1 - \|F^{/\pi_j}_i\|_{\infty}\right].\label{eq:Fi}
\end{align}
Next, we apply (\ref{eq:subspace-design}) for $W \in \{U_0, \hdots, U_d\}$ to get\footnote{We use the convention $F(v_{d+1}) = 0$.}
\begin{align*}
  \sum_{i=1}^d F(v_i) &= \sum_{i=1}^d (F(v_i)-F(v_{i+1})) \dim(U_i)\\
  &\le \sum_{i=1}^d (F(v_i)-F(v_{i+1})) \sum^n_{j=1}s_j\dim(\pi_j(U_i))\\
  &= \sum_{j=1}^n s_j \sum_{i=1}^d (F(v_i)-F(v_{i+1})) \dim(\pi_j(U_i)).\\
  &=\sum_{j=1}^n s_j \sum_{i=1}^d F(v_i)(\dim(\pi_j(U_i)) - \dim(\pi_j(U_{i-1})))\\
  &=\sum_{j=1}^n s_j \sum_{i=1}^d F(v_i)\one[v_i \not\in \ker \pi_j+U_{i-1}].
\end{align*}
Adding this inequality to (\ref{eq:Fi}) for all $i \in \{1, \hdots, d\}$, we get that
\begin{align*}
\|F\|_1 - \|F\|_{\infty} &= \sum_{i=1}^d(\|F_i\|_1 - \|F_i\|_{\infty})  + \sum_{i=1}^d F(v_i)\\
                         &\le \sum_{j=1}^n s_j \sum_{i=1}^d\left[\|F_i\|_1 - \|F_i^{/\pi_j}\|_{\infty} + F(v_i)\one[v_i \not\in \ker \pi_j+U_{i-1}].\right].
\end{align*}
Thus, it suffices to prove for all $j \in [n]$ that
\begin{align*}
  \sum_{i=1}^d\left[\|F_i\|_1 - \|F_i^{/\pi_j}\|_{\infty} + F(v_i)\one[v_i \not\in \ker \pi_j+U_{i-1}].\right] \le \|F\|_1 - \|F^{/\pi_j}\|_{\infty}.
\end{align*}
Rearranging terms, we get the somewhat simpler expression
\begin{align}
  \sum_{i=0}^d F(v_i)\one[v_i \not\in \ker \pi_j+U_{i-1}] \le \sum_{i=0}^d \|F_i^{/\pi_j}\|_{\infty} - \|F^{/\pi_j}\|_{\infty}.\label{eq:rem-goalX}
\end{align}
Let $u \in V$ be such that $F(u + \ker \pi_j) = \|F^{/\pi_j}\|_{\infty}$. Let $I \subseteq \{0,1, \hdots, d\}$ be the set of indices for which $(u + \ker \pi_j) \cap (U_i \setminus U_{i-1}) \neq \phi$ (note this means if $u=0$ then $0\in I$). In other words, $I$ is the leading columns of an (affine) echelon basis of $u + \ker \pi_j$. Therefore, $|I| = \dim \ker \pi_j + 1$. 

Then, we can see that
\[
  \|F^{/\pi_j}\|_{\infty} = F(u + \ker \pi_j) = \sum_{i \in I} F((u + \ker \pi_j) \cap (U_i \setminus U_{i-1})) = \sum_{i \in I} F_i(u + \ker \pi_j) \le \sum_{i \in I} \|F_i^{/\pi_j}\|_{\infty}
\]
Furthermore, if $i \not\in I$, then $F(v_i) \le \|F_i^{/\pi_j}\|_{\infty}$ is a trivial bound. Thus, to show (\ref{eq:rem-goalX}), it suffices to show
\[
  \sum_{\substack{i \in [d]\\v_i \not\in \ker\pi_j+U_{i-1}}} F(v_i) \le \sum_{\substack{i \in \{0\}\cup [d] \setminus I}} F(v_i)
\]
or equivalently
\begin{align}
   \sum_{i \in I} F(v_i) \le \sum_{\substack{i \in \{0\}\cup [d]\\v_i \in \ker \pi_j+U_{i-1}}} F(v_i).\label{eq:monotone}
\end{align}
Since $F(v_0) \ge \cdots \ge F(v_d)$, it suffices to prove (\ref{eq:monotone}) when $F(v_0) = \cdots = F(v_a) = 1$ and $F(v_{a+1}) = \cdots = F(v_d) = 0$ for some $a \in \{0,1,\hdots, d\}$.

In this setting, the LHS of \eqref{eq:monotone}
equals $\operatorname{affdim}(W \cap (u+\ker \pi_j))$ where  $W := \operatorname{affspan}\{v_0, \hdots, v_a\} = \operatorname{span}\{v_1, \hdots, v_j\}$. The RHS of \eqref{eq:monotone} is $1+\operatorname{dim}(W\cap (\ker \pi_j))$ (the extra one is because of $v_0=0$).

Hence, to prove \eqref{eq:monotone}, we want to show that $\operatorname{affdim}(W \cap (u+\ker \pi_j))$ is upper bounded by $1+\operatorname{dim}(W\cap (\ker \pi_j))$ which is immediate (one vector can be used to shift $W \cap (u+\ker \pi_j)$ to the origin and the resulting subspace is contained in $W\cap (\ker \pi_j)$). This complete the proof of \Cref{thm:rem-bl}.

\section{Self-contained Proof of Theorem~\ref{thm:entropy-bl}}\label{app:BL}

The majority of the machinery used by \cite{christ2013communication,carlen2009subadditivity} to prove Theorem~\ref{thm:entropy-bl} is unnecessary. In this section, we give a streamlined proof of Theorem~\ref{thm:entropy-bl}.

We prove Theorem~\ref{thm:entropy-bl} by a double induction on $d$ (the dimension of $V$) and $m$. The base case of $d=0$ is trivial, as a $0$-dimensional vector space is a singleton, and the entropy of any probability distribution over that vector space must be zero.

For $d=1$, note that each map $\pi_i : V \to V_i$ is either an injection or the zero map. In the former case, $H(\pi_i(X)) = H(X) = \dim(\pi_i(V)) H(X)$. While in the latter case, $H(\pi_i(X)) = 0 = \dim(\pi_i(V)) H(X)$. If we multiply (\ref{eq:dim-cond}) with $U=V$ by $H(X)$, we then get that
\[
    H(X) \le \sum_{i=1}^m s_i \dim(\pi_i(V)) H(X) = \sum_{i=1}^m s_i H(\pi_i(X)),
\]
as desired.

We now assume $d \ge 2$ and that Theorem~\ref{thm:entropy-bl} holds for all dimensions strictly less than $d$. For this fixed value of $d$, we perform an induction on $m$. We first handle the base case of $m=1$. In order for (\ref{eq:dim-cond}) for $U = \ker \pi_1$ to hold, we need that $\dim(\ker \pi_1) = 0$. Thus, $\pi_1$ is an injection. Further, for (\ref{eq:dim-cond}) to hold for $U = V$, we must have that $s_1 \ge 1$. Then, observe that $H(X) = H(\pi_1(X)) \le s_1 H(\pi_1(X))$, as desired.

Now assume that $m \ge 2$. To avoid degenerate situations, make two further assumptions.
\begin{itemize}
\item All scalars $s_i$ are positive; else we can reduce to a smaller $m$ by deleting $i$.
\item All maps $\pi_i$ are nonzero. If $\pi_i$ is the zero map, then $\dim(\pi_i(U)) = 0$ for all $U \subseteq V$ and $H(\pi_i(X)) = 0$ for any distribution $X$ over $V$, so we may delete $i$, reducing to a smaller $m$.
\end{itemize}
We say that (\ref{eq:dim-cond}) is \emph{critical} (see \cite{bennett2008brascamp,bennett2005finite}) at $U$ if the two sides of the inequality are equal. Note that (\ref{eq:dim-cond}) is always critical at $U = 0$. Consider the possibility that (\ref{eq:dim-cond}) is not critical for every nonzero $U \subseteq V$. In that case, there exist $s'_1, \hdots, s'_m \ge 0$ such that the following three properties hold.
\begin{enumerate}
\item $s'_i \le s_i$ for all $i \in [m]$.
\item (\ref{eq:dim-cond}) holds for $(s'_1, \hdots, s'_m)$.
\item (\ref{eq:dim-cond}) is critical $(s'_1, \hdots, s'_m)$ at some nonzero $U \subseteq V$.
\end{enumerate}
Since entropy is nonnegative, proving Theorem~\ref{thm:entropy-bl} for $(s'_1, \hdots, s'_m)$ implies Theorem~\ref{thm:entropy-bl} for $(s_1, \hdots, s_m)$. Therefore, we may without loss of generality assume that some nonzero $W \subseteq V$ is critical for $(s_1, \hdots, s_m)$. We next split into cases based on which subspace is critical.

\subsection{Nonzero $W \subsetneq V$ Is Tight}\label{subsec:W}

Note that (\ref{eq:dim-cond}) holds for all $U \subseteq W$. Since $\dim(W) < \dim(V)$, we have by the induction hypothesis that for all distributions $Y$ over $W$, we have that
\begin{align}
    H(Y) \le \sum_{i=1}^m s_i H(\pi_i(Y)).\label{eq:entropy-Y}
\end{align}
Let $V/W$ be the $\dim(V)-\dim(W)$-dimensional vector space $\{v + W : v \in V\}$. For every $i \in [m]$, we can define $\psi_i : V/W \to V_i/\pi_i(W)$ as
\[
    \psi_i(v + W) = \pi_i(v) + \pi_i(W).
\]
Every of $V/W$ is of the form $(U + W)/W$ for some $U \subseteq V$. We then calculate that
\begin{align*}
\sum_{i=1}^m s_i \dim(\psi_i((U+W)/W)) &= \sum_{i=1}^m s_i \left[\dim(\pi_i(U+W)/\pi_i(W))\right]\\
&= \sum_{i=1}^m s_i \dim(\pi_i(U+W)) - \sum_{i=1}^m s_i \dim(\pi_i(W))\\
&\ge \dim(U+W) - \dim(W)\\
&= \dim((U+W)/W),
\end{align*}
where the third line uses (\ref{eq:dim-cond}) and the fact that it is critical at $W$. Since $\dim(V) - \dim(W) < \dim(V)$, by the induction hypothesis, we have that for all distributions $Z$ over $V/W$, we have that
\begin{align}
    H(Z) \le \sum_{i=1}^m s_i H(\psi_i(Z)).\label{eq:entropy-Z}
\end{align}
We now show how (\ref{eq:entropy-Y}) and (\ref{eq:entropy-Z}) imply (\ref{eq:entropy}).

Given our distribution $X$ over $V$, we define $X_{/W}$ to be the distribution over $V/W$ defined to be
\[
    \Pr[X_{/W} = v + W] = \Pr[X \in v + W] = \sum_{w \in W} \Pr[X = v + w].
\]
For each $v + W \in V/W$, we further define $X_{v+W}$ to be the distribution over $W$ defined by
\[
    \Pr[X_{v+W} = w] = \frac{\sum_{w'\in W} \Pr[X = v+w']}{\Pr[X \in v+W]}
\]
if $\Pr[X \in v+W] > 0$. Otherwise, we define $X_{v+W}$ to be the uniform distribution over $W$. We now give a crucial entropy inequality

\begin{proposition}\label{prop:ineq-entropy}
For any linear map $\pi : V \to V'$ with corresponding map $\psi : V/W \to V'/\pi(W)$, we have that
\begin{align}
    H(\pi(X)) \ge H(\psi(X_{/W})) + \sum_{v + W \in V/W} \Pr[X \in v+W] H(\pi(X_{v+W})),\label{eq:chain}
\end{align}
with equality when $\pi$ is an injection.
\end{proposition}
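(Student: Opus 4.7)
The plan is to apply the chain rule of entropy together with the data-processing fact that conditioning on more information can only decrease entropy, and then exploit the $W$-coset structure of $X$. Let $Y := \pi(X)$, let $I := X_{/W}$ be the $W$-coset containing $X$, and set $Z := \psi(I)$. I would first observe that $Z$ is a deterministic function of $Y$: as a coset of $\pi(W)$ in $V'$, we have $Z = \pi(X) + \pi(W) = Y + \pi(W)$. Hence the chain rule gives
\[H(Y) = H(Y, Z) = H(Z) + H(Y \mid Z).\]

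Next, I would lower bound $H(Y \mid Z)$ by $H(Y \mid I)$. Since $Z = \psi(I)$ is a function of $I$, conditioning on $I$ supplies at least as much information about $Y$ as conditioning on $Z$, so
\[H(Y \mid Z) \ge H(Y \mid I) = \sum_{v+W \in V/W} \Pr[X \in v+W] \cdot H(\pi(X_{v+W})),\]
where the equality uses the decomposition \eqref{eq:entropy-2} and the fact that the conditional distribution of $X$ given $I = v+W$ is exactly $X_{v+W}$ (zero-mass cosets contribute $0$ under the convention $0 \log(1/0) = 0$, so the choice of $X_{v+W}$ in that case is irrelevant). Combining this with $H(Z) = H(\psi(X_{/W}))$ yields exactly \eqref{eq:chain}.

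For the equality case, suppose $\pi$ is an injection. Then $\psi$ is also injective: if $\psi(v + W) = 0$, then $\pi(v) \in \pi(W)$, so injectivity forces $v \in W$, i.e., $v + W$ is the zero coset of $V/W$. Because $\psi$ is injective, $Z = \psi(I)$ determines $I$, so the conditional entropies $H(Y\mid Z)$ and $H(Y \mid I)$ coincide, and the lower bound becomes an equality.

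The step requiring the most care is the coset bookkeeping needed to see that $Z$ is genuinely a function of $Y$ and that conditioning on $I$ really does refine conditioning on $Z$; once that is nailed down, there are no numeric inequalities to optimize and the claim drops out of standard entropy identities. This is what makes \Cref{prop:ineq-entropy} a convenient lemma for the recursive step in \Cref{subsec:W}, where it will be applied to $\pi = \pi_i$ for each $i \in [m]$ and the resulting inequalities summed with weights $s_i$.
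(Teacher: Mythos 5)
Your proof is correct and takes essentially the same route as the paper: both observe that $\psi(X_{/W})$ is a deterministic function of $\pi(X)$, apply the chain rule to write $H(\pi(X)) - H(\psi(X_{/W})) = H(\pi(X) \mid \psi(X_{/W}))$, lower bound that by $H(\pi(X) \mid X_{/W})$ since $\psi(X_{/W})$ is a function of $X_{/W}$, decompose over cosets, and close the equality case via injectivity of $\psi$. The paper phrases the chain-rule step slightly differently (subtracting $H(\psi(X_{/W}) \mid \pi(X)) = 0$), but the argument is the same.
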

\begin{proof}
Assume that $X$ and $X_{/W}$ are correlated so that if $X$ samples $v$, then $X_{/W}$ samples $v+W$. In that case, since $\psi(v+W) = \pi(v) + \pi(W)$, we have that $\pi(X)$ determines $\psi(X_{/W})$. That is, $H(\psi(X_{/W}) \mid \pi(X)) = 0$. Thus,
\begin{align*}
H(\pi(X)) - H(\psi(X_{/W}))&= H(\pi(X), \psi(X_{/W})) - H(\psi(X_{/W}) \mid \pi(X)) - H(\psi(X_{/W}))\\
&=  H(\pi(X), \psi(X_{/W}))- H(\psi(X_{/W}))\\
&= H(\pi(X) \mid \psi(X_{/W}))\\
&\ge H(\pi(X) \mid X_{/W})\\
&= \sum_{v+W \in V/W} \Pr[X_{/W} = v+W] H(\pi(X) \mid X_{/W} = v+W)\\
&= \sum_{v+W \in V/W} \Pr[X \in v+W] H(\pi(X) \mid X \in v+W)\\
&= \sum_{v+W \in V/W} \Pr[X \in v+W]H(\pi(X_{v+W})),
\end{align*}
where for the last line if $\Pr[X \in v + W] = 0$, then equality holds no matter what $X_{v+W}$ is defined to be. Note that if $\pi$ is an injection, then $\psi$ is an injection so $H(\pi(X) \mid \psi(X_{/W})) = H(\pi(X) \mid X_{/W})$, rendering (\ref{eq:chain}) an equality.
\end{proof}

To finish this case, we observe by Proposition~\ref{prop:ineq-entropy} that
\begin{align*}
\sum_{i=1}^m s_i H(\pi_i(X)) &\ge \sum_{i=1}^m s_i \left[H(\psi_i(X_{/W})) + \sum_{v + W \in V/W} \Pr[X \in v+W] H(\pi_i(X_{v+W}))\right]\\
&= \sum_{i=1}^m s_i H(\psi_i(X_{/W})) +\sum_{v + W \in V/W} \Pr[X \in v+W] \sum_{i=1}^m s_i H(\pi_i(X_{v+W}))\\
&\ge H(X_{/W}) + \sum_{v + W \in V/W} \Pr[X \in v+W] H(X_{v+W})\\
&= H(X),
\end{align*}
where the third line uses (\ref{eq:entropy-Y}) and (\ref{eq:entropy-Z}), and the last line used the equality case of \Cref{prop:ineq-entropy} when $\pi : V \to V$ is the identity map.

\subsection{Only $V$ Is Tight} Here, we reduce to a case in which some other $W \subsetneq V$ is also critical. Let $d_{m-1} = \dim(\pi_{m-1}(V))$ and $d_m = \dim(\pi_m(V))$. By our aforementioned assumptions, we have that $d_{m-1}, d_m \ge 1$. Now consider the maps $s'_{m-1}, s'_m : [0,1] \to \R_{\ge 0}$ as follows
\begin{align*}
    s'_{m-1}(t) &= \left(s_{m-1} + \frac{d_m}{d_{m-1}} s_m\right)(1-t)\\
    s'_m(t) &= \left(\frac{d_{m-1}}{d_m}s_{m-1} + s_m\right)t.
\end{align*}

Observe that, for all $t \in [0,1]$, we have that
\[
    s'_{m-1}(t)d_{m-1} + s'_m(t)d_m = (d_{m-1}s_{m-1} + d_m s_m)(1-t) + (d_{m-1}s_{m-1} + d_m s_m)t = d_{m-1}s_{m-1} + d_ms_m.
\]
Let $\vec{s}(t) = (s_1, s_2, \hdots, s_{m-2}, s'_{m-1}(t), s'_m(t))$. We have by the above calculation (\ref{eq:dim-cond}) is critical for $\vec{s}(t)$ at $V$ for all $t \in [0,1]$. Further observe at $t_0 = \frac{s_m}{\frac{d_{m-1}}{d_m}s_{m-1} + s_m} \in (0,1)$, we have that $(s'_{m-1}(t_0), s'_m(t_0)) = (s_{m-1}, s_m).$ Thus, we seek to prove (\ref{eq:entropy}) at $\vec{s}(t_0)$.

Let $t_{-} \in [0,1]$ be the minimum $t \in [0,1]$ for which (\ref{eq:dim-cond}) holds at $\vec{s}(t)$ for all $U \subseteq V$. Note that $t_{-} < t_0$ as (\ref{eq:dim-cond}) is not critical at any $U \subsetneq V$. Likewise, we define $t_{+} \in [0,1]$ to be the maximum $t \in [0,1]$ for which (\ref{eq:dim-cond}) holds at $\vec{s}(t)$ for all $U \subseteq V$. We have that $t_{+} > t_0$.

We claim that (\ref{eq:entropy}) holds at $\vec{s}(t_{-})$ and $\vec{s}(t_{+})$. If $t_{-} = 0$, then note that $s'_{m}(t_{-}) = 0$, so $\vec{s}(t_{-})$ has support size at most $m-1$. Therefore (\ref{eq:entropy}) holds at $\vec{s}(t_{-})$ by our induction hypothesis on $m$. Otherwise, $t_{-} > 0$, so there exists nonzero $W \subsetneq V$ for which (\ref{eq:dim-cond}) is critical at $W$ for $\vec{s}(t_{-})$. In that case, (\ref{eq:entropy}) holds at $\vec{s}(t_{-})$ by the argument in Section~\ref{subsec:W}. By similar logic, (\ref{eq:entropy}) holds at $\vec{s}(t_{+})$.

To finish, we seek to prove (\ref{eq:entropy}) at $\vec{s}(t_0)$, but note that since $t_0 \in [t_{-}, t_{+}]$, we have that $\vec{s}(t_0)$ is a convex combination of $\vec{s}(t_{-})$ and $\vec{s}(t_{+})$. Thus, by taking a convex combination of (\ref{eq:entropy}) at $\vec{s}(t_{-})$ and $\vec{s}(t_{+})$, we prove (\ref{eq:entropy}) at $\vec{s}(t_0) = (s_1, \hdots, s_m)$, as desired.

\begin{remark}
We remark the high-level structure of the proof of Theorem~\ref{thm:entropy-bl} (particularly the recursion on a critical subspace) has a striking resemblance to the proofs of the GM-MDS theorem on the generating matrix zero patterns of Reed-Solomon codes~\cite{yildiz2019gmmds,lovett2018gmmds} (see also \cite{BDG24}).
\end{remark}

\section*{Acknowledgments} We thank Mahdi Cheraghchi, Sivakanth Gopi, Venkatesan Guruswami, and Madhur Tulsiani for helpful discussions and encouragement.

Joshua Brakensiek  was partially supported by (Venkatesan Guruswami's) Simons Investigator award and National Science Foundation grants No. CCF-2211972 and No. DMS-2503280. Yeyuan Chen was partially supported by the National Science Foundation grant No. CCF-2236931. Zihan Zhang was partially supported by the National Science Foundation grant No. CCF-2440926.

\bibliography{main}

\end{document}
